\documentclass[a4paper,UKenglish,cleveref, autoref, thm-restate]{lipics-v2021}

\usepackage{amssymb}
\usepackage{xcolor} 
\usepackage{stmaryrd}
\usepackage{balance}
\usepackage{nicefrac}
\usepackage{tikz}
\usepackage[Dfprime,bbsets]{math} 
\usepackage{tikz-cd}

\definecolor{semblue}{rgb}{0,0,0.7}
\definecolor{vgreen}{rgb}{.1,.5,0}
\definecolor{vdarkgreen}{rgb}{.06,.3,0}
\definecolor{vred}{rgb}{.7,0,0}
\definecolor{vblue}{rgb}{.1,.15,.62}
\definecolor{vgray}{rgb}{.35,.35,.35}
\definecolor{darkishgray}{rgb}{.35,.35,.35}
\definecolor{vvblue}{rgb}{.14,.21,.868}

\definecolor{blueish}{rgb}{0,.2,.6}

\newcommand{\fmlset}{\Gamma}

\newcommand{\fmlsetb}{\Delta}
\newcommand{\fmlsetc}{E}

\newcommand{\fml}{A}
\newcommand{\fmlb}{B}
\newcommand{\fmlc}{C}
\newcommand{\fmld}{D}
\newcommand{\fmle}{F}

\newcommand{\ftrue}{\top}
\newcommand{\ffalse}{\bot}
\newcommand{\truefalseaxiomset}{{\top}\kern-7.7pt{\bot}}

\newcommand{\tops}{X}

\newcommand{\patoms}{\mathsf{P}}

\newcommand{\topzero}{0}
\newcommand{\topone}{1}
\newcommand{\topimply}[2]{#1\limply#2}

\newcommand{\topand}{\sqcap}
\newcommand{\topor}{\sqcup}
\newcommand{\tople}{\sqsubset}

\newcommand{\settopowfunc}{\mathcal{P}^2}
\newcommand{\settopowob}[1][\setdoma]{\settopowfunc#1}
\newcommand{\settopowmorph}[1]{\settopowfunc(#1)}

\newcommand{\opcategory}[1]{#1^\mathrm{op}}

\newcommand{\Intersection}{\bigcap}
\newcommand{\intersection}{\cap}

\newcommand{\union}{\cup}
\newcommand{\setcomp}[1]{#1^\mathrm{c}}

\newcommand{\shs}{H}
\newcommand{\sbs}{B}

\newcommand{\shleq}{\sqsubset}
\newcommand{\shand}{\sqcap}
\newcommand{\shor}{\sqcup}
\newcommand{\shimply}{\rightarrow}

\newcommand{\saone}{1}
\newcommand{\sazero}{0}
\newcommand{\saa}{a}
\newcommand{\sab}{b}
\newcommand{\sac}{c}
\newcommand{\sad}{d}

\newcommand{\patom}{P}
\newcommand{\patomb}{Q}
\newcommand{\heyint}{I}
\newcommand{\heyintof}[2][\heyint]{#1(#2)}

\newcommand{\colenvelope}[2][black]{{\llbracket}{#2}{\rrbracket}}
\newcommand{\colenvelopeb}[2][black]{{\llparenthesis}{#2}{\rrparenthesis}}

\newcommand{\sasem}[3][\shs]{#2\colenvelope[olive]{#3}^{#1}}
\newcommand{\satrue}[3][\shs]{#1,#2\mathrel{\vDash}{#3}}
\newcommand{\saconseq}[2]{#1\mathrel{\vDash_{\mathrm{sH}}}{#2}}
\newcommand{\saconseqbool}[2]{#1\mathrel{\vDash_{\mathrm{sBA}}}{#2}}

\newcommand{\settrue}[3][\setdoma]{#1,#2\mathrel{\color{gray}\vDash}{#3}}
\newcommand{\setconseq}[2]{#1\mathrel{\color{gray}\vDash_{\mathrm{set}}}{#2}}

\newcommand{\notsetconseq}[2]{#1\mathrel{\color{gray}\not\vDash_{\mathrm{set}}}{#2}}
\newcommand{\setsem}[3][]{#2\colenvelope[gray]{#3}^{#1}}
\newcommand{\setsemcl}[3][]{{#2}\colenvelopeb[darkishgray]{#3}^{#1}}

\newcommand{\heyintderived}[1]{\heyint}

\newcommand{\bigshand}{\bigsqcap}

\DeclareMathOperator{\uparrowop}{\uparrow}
\DeclareMathOperator{\downarrowop}{\downarrow}

\newcommand{\shupset}[1]{#1\uparrowop}
\newcommand{\shdownset}[1]{#1\downarrowop}
\newcommand{\shiso}{\simeq}

\newcommand{\shyol}{\preccurlyeq}

\newcommand{\yonedaassociative}{Yoneda associative\xspace}
\newcommand{\yonedaacommutative}{Yoneda commutative\xspace}
\newcommand{\yonedaassociativity}{Yoneda associativity\xspace}
\newcommand{\yonedainjective}{Yoneda injective\xspace}
\newcommand{\yonedainjectivity}{Yoneda injectivity\xspace}

\newcommand{\semicat}{\mathsf{C}}
\newcommand{\semicatb}{\mathsf{D}}
\newcommand{\semifunc}{F}
\newcommand{\semifuncb}{G}

\newcommand{\adjbi}[1][\semiobj,\semiobjb]{\theta_{#1}}

\newcommand{\longshortmid}{\;\raisebox{0.05ex}{\vrule height 1.15ex width 0.6pt}\;}
\newcommand{\dashlen}{0.12em}
\newcommand{\dashsep}{0.08em}
\newcommand{\dashwidth}{0.6pt}
\newcommand{\dashhrule}{\raisebox{0.55ex}{\rule[0pt]{\dashlen}{\dashwidth}\hspace{\dashsep}\rule[0pt]{\dashlen}{\dashwidth}\hspace{\dashsep}\rule[0pt]{\dashlen}{\dashwidth}}}

\newcommand{\sadjs}{\mathrel{{\dashhrule\kern-2pt\longshortmid}}}

\newcommand{\strongadj}[2]{#1\dashv#2}

\newcommand{\semiobj}{a}
\newcommand{\semiobjb}{b}
\newcommand{\semiobjc}{c}

\renewcommand{\complement}[1]{#1^\mathrm{c}}

\newcommand{\derivedcat}[1][\semicat]{\widehat{#1}}
\newcommand{\derivedmor}[1][\catmorph]{\widehat{#1}}

\newcommand{\Homs}[3]{\mathrm{Hom}_{#1}(#2,#3)}
\newcommand{\Covhom}[2]{\Homs{#1}{#2}{-}}
\newcommand{\Conthom}[2]{\Homs{#1}{-}{#2}}

\newcommand{\naturalto}{\Rightarrow}

\newcommand{\yonedacatname}{derived category\xspace}

\newcommand{\catobj}{A}
\newcommand{\catobjb}{B}
\newcommand{\catmorph}{f}

\newcommand{\dercatmorph}{\eta}
\newcommand{\dercatmorphb}{\delta}

\newcommand{\dercatmorphpre}[1][\dercatmorph]{#1^1}
\newcommand{\dercatmorphpost}[1][\dercatmorph]{#1^2}

\newcommand{\derivedprecomp}[1]{\Homs{\semicat}{#1}{-}}
\newcommand{\derivedpostcomp}[1]{\Homs{\semicat}{-}{#1}}

\usepackage{paralist}
\newenvironment*{caselist}[1][- Case]{%
    \newcommand{\case}[1]{\item[\noindent\emph{#1 ##1}:]}
    \begin{inparaitem}
        }{%
    \end{inparaitem}
}

\newcommand{\catsba}{\mathsf{sBA}}
\newcommand{\catset}{\mathsf{Set}}

\newcommand{\SubreflLogic}{Subreflexive Logic\xspace}
\newcommand{\Subrefllogic}{Subreflexive logic\xspace}
\newcommand{\subrefllogic}{subreflexive logic\xspace}
\newcommand{\classicalsubrefl}{classical subreflexive logic\xspace}
\newcommand{\Classicalsubrefl}{Classical subreflexive logic\xspace}

\newcommand{\yonedaorder}{Yoneda preorder\xspace}
\newcommand{\yonedaisomorphic}{Yoneda equivalent\xspace}
\newcommand{\yonedaisomorphism}{Yoneda equivalence\xspace}

\newcommand{\subreflprov}[2]{#1\vdash#2}
\newcommand{\notsubreflprov}[2]{#1\not\vdash#2}

\newcommand{\classsubreflprov}[2]{#1\vdash_{\mathrm{c}}#2}
\newcommand{\notclasssubreflprov}[2]{#1\not\vdash_{\mathrm{c}}#2}

\newcommand{\relfmlset}{\Sigma}
\newcommand{\relfmlsetcl}{\overline{\relfmlset}}

\newcommand{\fmlcless}{\shleq_\relfmlset^\mathrm{c}}
\newcommand{\fmlless}{\shleq_\relfmlset}

\newcommand{\allfmls}{\mathcal{F}}
\newcommand{\allfmlssha}[1][\relfmlset]{\mathcal{F}_{#1}}
\newcommand{\allfmlssba}[1][\relfmlset]{\mathcal{F}^\mathrm{c}_{#1}}

\newcommand{\idset}{\mathsf{ID}}

\newcommand{\Saturatedness}{Closedness\xspace}
\newcommand{\saturatedness}{closedness\xspace}
\newcommand{\saturated}{closed\xspace}
\newcommand{\classicallysaturated}{classically-closed\xspace}
\newcommand{\constructive}{constructive\xspace}

\newcommand{\Constructivity}{Constructivity\xspace}

\newcommand{\subrefllogiccat}[1][\relfmlset]{\mathsf{Syn}_#1}

\newcommand{\LJ}{\ensuremath{\mathsf{LJ}}\xspace}
\newcommand{\LK}{\ensuremath{\mathsf{LK}}\xspace}

\newcommand{\derivation}{\mathcal{D}}
\newcommand{\derivationb}{\mathcal{E}}

\newcommand{\classical}{classical\xspace}
\newcommand{\classicality}{classicality\xspace}

\newcommand{\Classical}{Classical\xspace}

\newcommand{\minpwidth}{0.45\columnwidth}

\newcommand{\rex}{x}
\newcommand{\rey}{y}
\newcommand{\rez}{z}

\newcommand{\threealg}{\mathcal{H}_3}
\newcommand{\intha}{\mathcal{H}_\reals}
\newcommand{\midpoint}{\nicefrac{1}{2}}

\newcommand{\sel}{a}

\newcommand{\sela}{a_1}
\newcommand{\selb}{a_2}
\newcommand{\selop}[1][]{U_{#1}}
\newcommand{\selcl}[1][]{C_{#1}}

\newcommand{\selopa}{\selop[1]}
\newcommand{\selopb}{\selop[2]}
\newcommand{\selcla}{\selcl[1]}
\newcommand{\selclb}{\selcl[2]}

\newcommand{\filter}{\mathcal{F}}

\newcommand{\filterb}{\mathcal{G}}
\newcommand{\filtermax}{\mathcal{F}_\mathrm{max}}

\newcommand{\filterp}[1][\saa]{\mathcal{F}_{#1}}

\newcommand{\lathomalg}[1][\shs]{\tilde{#1}}
\newcommand{\lathom}{f}

\newcommand{\kleenethree}{\mathcal{K}_3}

\newcommand{\setdoma}{X}

\newcommand{\setinterp}{I}

\newcommand{\setinteralmpof}[2][\setinterp]{\setinterpalm[#1](#2)}
\newcommand{\setinterprob}[1][\setinterp]{#1}
\newcommand{\setinterpalm}[1][\setinterp]{\overline{#1}}

\newcommand{\sasemsimple}[3][]{\colenvelope[olive]{#3}}

\newcommand{\allsequents}[2]{\lsequent{#1}{\vec{#2}}}

\newcommand{\semiHeyting}{Heyting\xspace}
\newcommand{\semiHeytingalgebra}{Heyting semialgebra\xspace}

\newcommand{\SemiHeyting}{Heyting\xspace}
\newcommand{\SemiHeytingalgebra}{Heyting semialgebra\xspace}
\newcommand{\SemiHeytingAlgebra}{Heyting Semialgebra\xspace}

\newcommand{\semiBoolean}{Boolean\xspace}
\newcommand{\semiBooleanalgebra}{Boolean semialgebra\xspace}
\newcommand{\semiBooleanalgebraemph}{Boolean \emph{semialgebra}\xspace}

\newcommand{\SemiBooleanalgebra}{Boolean semialgebra\xspace}
\newcommand{\SemiBooleanAlgebra}{Boolean Semialgebra\xspace}
\newcommand{\semialgebra}{semialgebra\xspace}

\usepackage{cleveref}%
\crefname{enumi}{}{}%
\Crefname{enumi}{}{}%

\AtEndPreamble{%
    \usepackage{enumitem}
    \newcommand{\itrefof}[2]{\Cref{#1}(\ref{#2})}
    \usepackage[prefixflatinterpret,bracketmodalinterpret,fixformat,silentconst,sidenotecalculus,longseqcontext,seqArrow,boldmquant]{logic}%
    \usepackage[prefixflatinterpret,bracketmodalinterpret,fixformat,silentconst,differentialdL,simplenames]{dL}%

    \definecolor{darkishgray}{rgb}{.35,.35,.35}
    \renewcommand*{\irrulename}[1]{\text{\textcolor{darkishgray}{\upshape#1}}}
    \renewcommand{\linferenceRuleLabelPrefix}{}%\
    \renewcommand{\linferenceRuleNameSeparation}{\hspace{1em}}%
    \renewcommand{\linferPremissSeparation}{\hspace{2em}}
    \usepackage[createShortEnv]{proof-at-the-end}
}
 
\keywords{proof theory, substructural logic, identity axiom, nonreflexive}
\ccsdesc[500]{Theory of computation~Proof theory}
\ccsdesc[500]{Theory of computation~Constructive mathematics}
\ccsdesc[500]{Theory of computation~Logic and verification}

\hideLIPIcs
\title{Subreflexive Logic: Completeness without Identity}

\author{Noah {Abou El Wafa}}{Karlsruhe Institute of Technology, Germany}{noah.abouelwafa@kit.edu}{https://orcid.org/0000-0002-3987-9919}{}
\author{André {Platzer}}{Karlsruhe Institute of Technology, Germany}{platzer@kit.edu}{https://orcid.org/0000-0001-7238-5710}{}
\authorrunning{N. Abou El Wafa, A. Platzer}

\Copyright{Noah {Abou El Wafa} and André {Platzer}} 

\keywords{Substructural logic, Identity, Many-valued logic, Categorical logic} 

\nolinenumbers

\begin{document}

\maketitle

\begin{abstract}
    This paper shows that the substructural logic without the identity principle \(\fml\limply\fml\) (i.e., \emph{subreflexive logic}) has principled sound and complete semantics and supports a variety of applications.
    This decidable \emph{generalization} of propositional logic naturally interprets implication as \emph{robust consequence}.
    Subreflexive logic is proved to admit syntactic cut elimination.

    \SemiHeyting{} and \semiBooleanalgebraemph{}s are introduced as generalizations of Heyting and Boolean algebras and are shown to provide complete algebraic semantics without inadvertently reintroducing reflexivity.
    \emph{Semi}-adjunctions on \emph{semi}-categories and (identity-free) (co-)units are defined to give complete \emph{semi}-categorical semantics.
    In the classical case, denotational set semantics that interpret implication as \emph{robust material implication} are proved complete for subreflexive logic.
\end{abstract}

\section{Introduction}

Substructural logics such as linear logic \cite{DBLP:journals/tcs/Girard87} and relevance logic \cite{DBLP:journals/jsyml/Belnap60} have enabled many important applications \cite{DBLP:conf/concur/CairesP10,DBLP:journals/tcs/Abramsky93,Tedder2022-TEDRPD}, by dropping seemingly innocuous but highly restrictive structural assumptions.
This paper introduces \emph{\subrefllogic} as the substructural generalization of ordinary propositional logic \emph{without the identity principle} \(\fml\limply\fml\).
This principle is a fundamental part of most logical calculi for two reasons:
Firstly, it is thought of as the tautological empty statement: if \(\fml\) is true then \(\fml\) is true.
Secondly, despite its presumed emptiness, the identity principle plays a crucial role in the development of most logical theories, their semantics and categorical interpretations, where it is a load-bearing assumption.
\Subrefllogic removes this critical assumption by \emph{completely} developing logic \emph{substructurally without the identity principle} and, instead, makes it possible to freely choose axioms without imposing identity.
This paper shows that \subrefllogic is a well-behaved, structurally sound logic, as witnessed by its clean proof theory and its threefold semantics: nonreflexive algebraic, denotational set-theoretic and semicategorical semantics.
\Subrefllogic reveals a close connection between the identity principle, totality of functions and three-valued logic.
These connections open up many applications for \subrefllogic where implication is interpreted as \emph{robust consequence}.

Contrary to the usual interpretation of the identity principle as meaningless, it does significantly restrict expressivity.
There are two classes of applications where the identity principle is too restrictive.
The first class of examples relates to partiality, which is indispensable when reasoning, for example, about program termination, since a program may accept, reject or not terminate \cite{scottoutline}.
Similarly, logical provability requires partiality since a formula may be provable, disprovable, or neither.
These problems are best described by Kleene's three-valued logic, in which the principle of identity does not hold \cite{DBLP:journals/jsyml/Kleene38a}.
\Subrefllogic subsumes this logic and a completeness theorem demonstrates that removing the identity principle corresponds to the important step of generalizing from total to partial functions:
\[\text{Principle of Identity}\quad\leftrightsquigarrow\quad\text{Totality}\]
Thus, \subrefllogic enables reasoning about partial functions by familiar logic simply by dropping the axiom of identity.
This phenomenon has motivated recent interest in non-reflexive logics of various kinds to handle logical paradoxes and truth predicates \cite{DBLP:journals/jphil/BarrioRT15,DBLP:journals/sLogica/NicolaiR23,DBLP:journals/sLogica/ReSC24,Schroeder-Heister2016}.
This paper, via a structural investigation of identity, provides a set-theoretic completeness result and algebraic tools that are the mathematical foundations for such instances.

The second class of applications where identity is too restrictive relates to robustness.
Robust implication requires fine control over the axioms of implication, where for example \(x\geq 0.1 \limply x>0\) should be an axiom, but \(x=0\limply x=0\) should not be an axiom, since it is not preserved by arbitrarily small local perturbations of \(x\).
This is in conflict with the principle of identity, but covered by \subrefllogic.
The intuition behind robust implication is that unbounded precision can be \emph{assumed} in the antecedent (\(x \geq 0.1\) is true for infinitely many decimal places) but only bounded precision can be \emph{asserted} in the succedent (\(x>0\) is determined after finitely many decimal places).
This interpretation of implication is practically relevant in floating-point arithmetic and real number computability \cite{DBLP:series/txtcs/Weihrauch00}.
The restriction is a key ingredient to turn an incomplete calculus into a complete one by removing incompleteness arising from excessive precision.
In particular, robustness gives a complete proof calculus for a powerful fragment of mixed integer and real arithmetic and for hybrid systems reachability  \cite{DBLP:conf/ijcar/AbouElWafaP26}.
Identity corresponds to a strict separability of truth and falsity:
\[\text{Principle of Identity}\quad\leftrightsquigarrow\quad\text{Strict Separability of True and False}\]
These applications of \subrefllogic to partiality and robustness motivate the study of subreflexive logic in its own right as the fundamental logic underlying these phenomena. 

\medskip
Subreflexive logic is introduced in \Cref{sec:subrefllogic} as a generalization of ordinary sequent calculus.
Instead of being based on the identity axiom, it is defined relative to an arbitrary choice of axioms.
This enables precise control over the available axioms, and in particular about the properties of implication.
The close proof-theoretic relationship with ordinary propositional logic means that nothing is sacrificed, but significant flexibility for robustness and partiality is gained.
The proof that \subrefllogic admits cut-elimination for suitable sets of axioms shows that it is a well-defined logic.
 
The major technical challenge for \subrefllogic is that standard algebraic, set-theoretic and categorical semantics are all reflexive and, thus, cannot capture subreflexive logic completely.
Because reflexivity plays a critical role in the structural integrity of all these semantic models, new semantics are defined here, which capture the meaning of the logic without using reflexivity and, crucially, without reintroducing it inadvertently.
First, algebraic semantics for \subrefllogic are presented in \Cref{sec:algebraicsem}, which introduces \semiHeyting{} and \semiBooleanalgebra{s} as generalizations of Heyting and Boolean algebras.
These algebraic counterparts to intuitionistic and classical \subrefllogic provide sound and complete algebraic semantics for \subrefllogic.
The key insight is to change the perspective from the entailment relation and to look at relative provability.
Second, and more concretely, \classical \subrefllogic is shown to be sound and complete with respect to natural partial-function and robust-set semantics in \Cref{sec:denotational}.
This shows that in \classical \subrefllogic the principle of identity is equivalent to the law of the excluded middle, \(\fml\limply\fml\equiv\lnot\fml\lor\fml\).
\Classical \subrefllogic can thus be viewed as an alternative (to intuitionistic logic) way of dropping the law of the excluded middle while retaining \emph{all proof rules of classical logic}.
The proof of completeness proceeds through a representation theorem for \semiBooleanalgebra{s}, which realizes every \semiBooleanalgebra as a semialgebra of partial functions, where Kleene's three-valued logic plays an important role as the dualizing object.
Finally, \Cref{sec:categorical} investigates the structural role the principle of identity plays by interpreting \subrefllogic in semicategories.
This requires new definitions of the structures that reflect the logical connectives which usually rely on, or otherwise inadvertently reintroduce, identity morphisms.
To achieve this, a new notion of semiadjunctions is defined that unifies the interpretation and reveals that while identity morphisms are convenient they can be avoided and are not necessary for the structural integrity.

The \textbf{contributions} of this paper are fourfold.
First, a {natural identity-free proof system} fully flexible in its assumptions and a proof of {cut elimination} are presented.
Second, \semiHeyting{} and \semiBooleanalgebra{}s are introduced as {principled nonreflexive generalizations} of Heyting and Boolean algebras and shown to provide sound and complete {algebraic semantics} for \subrefllogic.
Third, a {set-theoretic denotational semantics} is given that interprets truth as robust truth, and a {completeness} proof shows that \subrefllogic axiomatizes these models.
Fourth, a {categorical foundation} for the proof system and the algebraic structures is developed through characterizations of connectives as {semiadjunctions} of semicategories.

\section{\SubreflLogic}\label{sec:subrefllogic}

The calculus of \subrefllogic is a generalization of the usual classical and intuitionistic propositional calculi.
The formulas of \emph{\subrefllogic} are those of propositional logic:
\[\fml ::= \ftrue \mid \ffalse \mid \patom \mid \fml\land\fmlb \mid \fml\lor\fmlb \mid \fml\limply\fmlc,\]
where \(\patom\in\patoms\) are propositional atoms from a set \(\patoms\), and negation $\lnot \fml$ is defined as $\fml \limply \ffalse$.

A \emph{sequent} \(\lsequent{\fmlset}{\fml}\) consists of an (unordered multi-)set \(\fmlset\) of antecedent formulas and a single succedent formula~\(\fml\).\footnote{The use of multisets is a formality, as contraction will be implicit in the calculus.}
As usual \(\fml,\fmlset\) denotes \(\{\fml\}\union \fmlset\).
In the sequent calculus, the connectives are characterized by their left and right rules:

\begin{calculuscollection}{\renewcommand{\linferPremissSeparation}{\hspace{0.3cm}}
        \hspace{-1.5em}
        \begin{calculus}
            \cinferenceRule[andL|$\land$L]{and left proof rule}
            {
                \linferenceRule[sequent]
                {
                    \lsequent{\fml_i,\fml_1\land\fml_2,\fmlset} {\fmlc}
                }
                {\lsequent{\fml_1\land\fml_2,\fmlset} {\fmlc}}\;
            }{\color{darkishgray}$i=0,1$}
            \cinferenceRule[andR|$\land$R]{and right proof rule}
            {
                \linferenceRule[sequent]
                {
                    \lsequent{\fmlset} {\fml}
                    &
                    \lsequent{\fmlset} {\fmlb}
                }
                {\lsequent{\fmlset} {\fml\land\fmlb}}
            }{}
        \end{calculus}
        \quad
        \begin{calculus}
            \cinferenceRule[orL|$\lor$L]{or left proof rule}
            {
                \linferenceRule[sequent]
                {\lsequent{\fml,\fmlset} {\fmlc}
                    &\lsequent{\fmlb,\fmlset} {\fmlc}}
                {\lsequent{\fml\lor \fmlb,\fmlset} {\fmlc}}
            }{}
            \cinferenceRule[orR|$\lor$R]{or right proof rule}
            {
                \linferenceRule[sequent]
                {\lsequent{\fmlset} {\fml_i}}
                {\lsequent{\fmlset} {\fml_1\lor\fml_2}}
            }{\color{darkishgray}$i=0,1$}
        \end{calculus}}
    \quad
    \begin{calculus}
        \cinferenceRule[implyL|$\limply$L]{imply left proof rule}
        {
            \linferenceRule[sequent]
            {\lsequent{\fml\limply\fmlb,\fmlset} {\fml}
                &
                \lsequent{\fmlb,\fmlset} {\fmlc}
            }
            {\lsequent{\fml\limply\fmlb,\fmlset} {\fmlc}}
        }{}
        \cinferenceRule[implyR|$\limply$R]{imply right proof rule}
        {
            \linferenceRule[sequent]
            {\lsequent{\fmlset,\fml} {\fmlb}}
            {\lsequent{\fmlset} {\fml\limply \fmlb}}.
        }{}
    \end{calculus}
\end{calculuscollection}

\noindent
The basic rules of \subrefllogic are exactly the same as in the ordinary intuitionistic sequent calculus \(\LJ\) \cite{DBLP:conf/lics/Pfenning95,DBLP:journals/iandc/Pfenning00}.
However, the identity axiom \(\fml\limply\fml\) is omitted, making \subrefllogic a \emph{substructural logic}.
While these rules characterize all connectives \emph{harmoniously}, there is nothing that can be derived without a terminal sequent.
While in \(\LJ\) only the identity sequents \(\lsequent{\fml}{\fml}\) play this role, in \subrefllogic, derivations are defined relative to a set \(\relfmlset\) of \emph{assumption sequents}.
This section explains how assumptions integrate into \subrefllogic.
For example, the set \(\truefalseaxiomset = \{\lsequent{} {\ftrue}\}\union \{\lsequent{\ffalse}{\patom} : \patom\in\patoms\}\) contains important assumption sequents.
Crucially, it does not suffice to merely allow derivations to be trees where leaves are labeled by terminal assumption sequents (\irref{assumption}).
Additionally, in \subrefllogic the rules \irref{assumptionL} and \irref{assumptionR} handle assumption sequents on the left and on the right

\begin{center}
    \begin{calculuscollection}\small
        \begin{calculus}\renewcommand{\linferPremissSeparation}{\hspace{0.2cm}}
            \cinferenceRule[assumption|$\relfmlset^*$]{assumption axiom}
            {
                \lsequent{\fmlset} {\fml}\;\;
            }{\color{darkishgray}$\lsequent{\fmlset}{\fml}\in \relfmlsetcl$}
        \end{calculus}
        \qquad\hfill%
        \begin{calculus}
            \cinferenceRule[assumptionL|$\relfmlset$L]{assumption left proof rule}
            {
                \linferenceRule[sequent]
                {
                    \lsequent{\fmlsetb,\fmlset} {\fmlb}
                }
                {\lsequent{\fmlset} {\fmlb}}\;
            }{\color{darkishgray}$\allsequents{\fmlset}{\fmlsetb}\in\relfmlsetcl$}
        \end{calculus}
        \qquad\hfill%
        \begin{calculus}
            \cinferenceRule[assumptionR|$\relfmlset$R]{assumption right proof rule}
            {
                \linferenceRule[sequent]
                {
                    \lsequent{\fmlset} {\fml}  \text{ for } \fml\in\fmlsetb
                }
                {\lsequent{\fmlset} {\fmlb}}\;
            }{\color{darkishgray}$\lsequent{\fmlset,\fmlsetb}{\fmlb}\in\relfmlsetcl$}
        \end{calculus}
    \end{calculuscollection}
\end{center}

\noindent
where \(\lsequent{\fmlset}{\fml}\in\relfmlsetcl\) if there is a (repetition-free) subset\footnote{This is a technicality to ensure weakening and contraction are implicit.} \(\fmlset'\subseteq\fmlset\) such that \(\lsequent{\fmlset'}{\fml}\in\relfmlset\) and \(\allsequents{\fmlset}{\fmlsetb}\in\relfmlsetcl\) means  \(\lsequent{\fmlset}{\fmlb}\in\relfmlsetcl\) for all \(\fmlb\in\fmlsetb\).
The \emph{single} axiom scheme of \subrefllogic \irref{assumption} closes a branch, if its last sequent is in the assumption set.
The rule \irref{assumptionL} introduces \(\fmlsetb\) in the antecedent if it follows from \(\fmlset\) by the assumption set and the rule \irref{assumptionR} reduces the succedent to conditions \(\fmlsetb\) that are sufficient according to the assumption set \(\relfmlset\).

A \emph{derivation} of a sequent \(\lsequent{\fmlset}{\fml}\) from the set \(\relfmlset\) of assumption sequents is a tree of sequents with root \(\lsequent{\fmlset}{\fml}\) built according to the rules of \subrefllogic where the leaves are instances of axiom \irref{assumption}.
A formula \(\fml\) is \emph{provable}  from \(\fmlset\) in \subrefllogic \(\subreflprov{\relfmlset}{\fml}\) if there is a derivation of \(\lsequent{}{\fml}\) from \(\relfmlset\union\truefalseaxiomset\).

\Subrefllogic generalizes ordinary intuitionistic logic.
In fact, Gentzen's intuitionistic sequent calculus \LJ is the special case of \subrefllogic where the assumption set \(\idset =\{\lsequent{\patom}{\patom}:\patom\in\patoms\}\) consists of all identity sequents \(\lsequent{\patom}{\patom}\) for atoms \(\patom\in\patoms\).
As with \(\LJ\), \subrefllogic can be specialized to \emph{classical} \subrefllogic by simply allowing multiple succedents in the proof rules in a \emph{classical derivation}.
A formula \(\fml\) is \emph{provable}  from \(\relfmlset\) in \classicalsubrefl \(\classsubreflprov{\relfmlset}{\fml}\) iff there is a \emph{classical} derivation of \(\lsequent{}{\fml}\) from~\(\relfmlset\union\truefalseaxiomset\).
(For a formal definition see \Cref{sec:classicalrules}.)
\Classicalsubrefl generalizes Gentzen's \(\LK\) \cite{Gentzen1935}.

Because of the assumption sets, a derivation may mention formulas that are not subformulas of the conclusion.
However any formula appearing in a derivation is either a subformula of the conclusion or of an assumption \(\relfmlset\) or \(\ftrue,\ffalse\).
Consequently, for finite \(\relfmlset\) there are only finitely many proofs of \(\subreflprov{\relfmlset}{\fml}\) up to repetition and provability can be decided by proof search.

\textEnd{
The following lemma is a useful observation on the axiomatization of \subrefllogic.
It says that the following two rules can be used in place of \irref{assumptionL}.
\begin{center}
    \begin{calculuscollection}\small
        \begin{calculus}
            \cinferenceRule[assumptionOrL|$\relfmlset\lor$]{assumption or left proof rule}
            {
                \linferenceRule[sequent]
                {
                    \lsequent{\fmlset,\fml} {\fmlc}
                    &
                    \lsequent{\fmlset,\fmlb} {\fmlc}
                }
                {\lsequent{\fmlset} {\fmlc}}\;
            }{\color{darkishgray}$\lsequent{\fmlset}{\fml\lor\fmlb}\in\relfmlsetcl$}
        \end{calculus}
        \qquad\qquad%
        \begin{calculus}
            \cinferenceRule[assumptionImplyL|$\relfmlset{\limply}$]{assumption imply left proof rule}
            {
                \linferenceRule[sequent]
                {
                    \lsequent{\fmlset}{\fml}  
                    &
                    \lsequent{\fmlset,\fmlb} {\fmlc}  
                }
                {\lsequent{\fmlset} {\fmlc}}\;
            }{\color{darkishgray}$\lsequent{\fmlset}{\fml\limply\fmlb}\in\relfmlsetcl$}
        \end{calculus}
    \end{calculuscollection}
\end{center}}%
\begin{lemmaE}[][all end]\label{lem:alternativeAssumptionLeft}
    Any sequent that is derivable in \subrefllogic is derivable using the derived rules \irref{assumptionOrL} and \irref{assumptionImplyL} instead of \irref{assumptionL}.
\end{lemmaE}%
\begin{proofE}
    Immediate by induction on the length of a derivation.
\end{proofE}%
\textEnd{\label{p:wellfounded}
\begin{definition}
    A set \(\relfmlset\) is \emph{well-founded} if \(\relfmlset\) is decidable and
    \begin{enumerate}
        \item \(\fml\in\patoms\) is atomic for all \(\lsequent{\fmlset}{\fml}\in\relfmlset\) (i.e., \(\relfmlset\) consists of \emph{definitions}) and
        \item there is no infinite sequence \(\lsequent{\fmlset_i}{\patom_i}\in\relfmlset\) such that \(\patom_{i+1}\) appears in \(\fmlset_i\)
    \end{enumerate}
\end{definition}
}%
\begin{propositionE}
    Provability \(\subreflprov{\relfmlset}{\fml}\) is decidable for \(\relfmlset\) finite or well-founded\footnote{See \Cpageref{p:wellfounded} for the full definition of well-foundedness.}.
\end{propositionE}%
\begin{proofE}
    If \(\relfmlset\) is finite, then there are only finitely many sequents that may appear in a derivation.
    Thus, there are only finitely many proofs (without repetition of sequents) and provability can be decided by searching the space of possible proofs.
    
    Next consider a well-founded \(\relfmlset\).
    Since \(\relfmlset\) contains only sequents with atomic succedent, by \Cref{lem:alternativeAssumptionLeft} provability \(\subreflprov{\relfmlset}{\fml}\) is equivalent to existence of a derivation without an assumption left rule, i.e., \irref{assumptionL}, \irref{assumptionOrL} and \irref{assumptionImplyL}.
    Let \(F_\fml\) be the smallest set of formulas containing \(\fml\) that is closed under subformulas and contains \(\fmlsetb\subseteq F_\fml\) for all \(\lsequent{\fmlsetb}{\patom}\in\relfmlset\) for \(\patom \in F_\fml\).
    Then by well-foundedness of \(\relfmlset\) this set \(F_\fml\) is finite.
    Again, there are only finitely many proofs (without repetition of sequents) witnessing \(\subreflprov{\relfmlset}{\fml}\) and provability can be decided by enumerating these proofs.
\end{proofE}
To ensure that \subrefllogic is a proper logic, it is important that it admits cut-elimination, i.e., that the rule \irref{cut} is (syntactically proved) admissible:
\begin{center}
    \cinferenceRule[cut|cut]{cut proof rule}
    {
        \linferenceRule[sequent]
        {\lsequent{\fmlset} {\fmlb}
            &\lsequent{\fmlb,\fmlset} {\fmlc}}
        {\lsequent{\fmlset} {\fmlc}}
    }{}.
\end{center}
Cut-elimination is proved by moving a cut upwards in the proof tree.
However, the assumption rules \irref{assumption}, \irref{assumptionL} and \irref{assumptionR} introduce challenges.
The proof of cut-elimination shows for which assumption sets \subrefllogic is a properly defined logic.
In particular, the inverse rules and contraction (see \Cref{sec:derivedinverse}) need to be admissible, which is ensured by \saturatedness.

\begin{definition}
    A set of assumption sequents \(\relfmlset\) is (classically) \emph{\saturated} if it is closed \saturated under the (classical) inverse rules, contraction \irref{contraction} and \irref{cut}.
\end{definition}

\begin{lemma}
    The (classical) inverse rules are admissible for (classically) closed \(\relfmlset\).
\end{lemma}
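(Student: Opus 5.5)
Admissibility of the inverse rules is proved by induction on the height of the derivation of the premise. The inverse rules (from \Cref{sec:derivedinverse}) are the usual ones: from \(\lsequent{\fml\land\fmlb,\fmlset}{\fmlc}\) infer \(\lsequent{\fml,\fmlb,\fmlset}{\fmlc}\); from \(\lsequent{\fml\lor\fmlb,\fmlset}{\fmlc}\) infer both \(\lsequent{\fml,\fmlset}{\fmlc}\) and \(\lsequent{\fmlb,\fmlset}{\fmlc}\); from \(\lsequent{\fmlset}{\fml\land\fmlb}\) infer each conjunct; from \(\lsequent{\fmlset}{\fml\limply\fmlb}\) infer \(\lsequent{\fml,\fmlset}{\fmlb}\); from \(\lsequent{\fml\limply\fmlb,\fmlset}{\fmlc}\) infer \(\lsequent{\fmlb,\fmlset}{\fmlc}\). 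In the classical case there are multiple succedents, and the same rules are read with side contexts.

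The statement is proved for each inverse rule separately, by a height-preserving induction. We fix the principal formula \(\fmld\) that is being inverted, and then distinguish cases on the last rule of the derivation of the premise.

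The first group of cases involves only the logical rules.
\begin{itemize}
\item If the last rule is a logical rule acting on a formula other than \(\fmld\), the induction hypothesis is applied to its premises and the same rule is then reapplied. This works because all rules are context-preserving, and weakening and contraction are implicit.
\item If the last rule introduced \(\fmld\) itself, its premises already supply the inverted sequent, up to implicit weakening. For left rules, \(\fmld\) is retained in the premise, so the induction hypothesis is first applied to remove the remaining copy of \(\fmld\).
\end{itemize}

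The genuinely new cases are those where the last step is \irref{assumption}, \irref{assumptionL} or \irref{assumptionR}, and \(\fmld\) occurs in the side condition.
\begin{itemize}
\item \irref{assumption}: the leaf \(\lsequent{\fmlset}{\fml}\in\relfmlsetcl\) mentions \(\fmld\). The witness \(\lsequent{\fmlset'}{\fml}\in\relfmlset\) either contains \(\fmld\) or does not. If not, the inverted sequent is still in \(\relfmlsetcl\). If so, closedness of \(\relfmlset\) under the inverse rules gives the inverted sequent \(\lsequent{\fmlset''}{\fml'}\in\relfmlset\), and its weakening lies in \(\relfmlsetcl\).
\item \irref{assumptionL}: the side condition is \(\allsequents{\fmlset}{\fmlsetb}\in\relfmlsetcl\), and the premise is \(\lsequent{\fmlsetb,\fmlset}{\fmlb}\). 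If \(\fmld\) is in \(\fmlset\), we invert each side-condition sequent using closedness, and apply the induction hypothesis to the premise. If \(\fmld\) is the succedent \(\fmlb\), only the premise is affected, and the induction hypothesis applies directly.
\item \irref{assumptionR}: the side condition is \(\lsequent{\fmlset,\fmlsetb}{\fmlb}\in\relfmlsetcl\). If \(\fmld\) is \(\fmlb\) on the right, we invert the side condition in \(\relfmlset\) using closedness, so that \(\fmlb\)'s components appear. We then reapply \irref{assumptionR}; any new antecedent components added to \(\fmlsetb\) belong to the context and are covered by weakening. If \(\fmld\) is in \(\fmlset\), we invert both the side condition and all premises.
\end{itemize}

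A subtlety arises when \(\fmld\) lands in \(\fmlsetb\) after inversion, for example with right implication inversion. Then a premise \(\lsequent{\fmlset}{\fml}\) with \(\fml\) among the new antecedent material is needed. This is handled by noting that such formulas are already in the antecedent of the goal, so the needed sequent would be an identity; this must be avoided. Instead, closedness under \irref{cut} and \irref{contraction} is used to fold them into \(\relfmlset\) so that they end up in \(\fmlset\) rather than \(\fmlsetb\).

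\textbf{Main obstacle.} These assumption-rule cases, and especially the interaction with the \(\relfmlsetcl\) subset technicality, are where I expect the difficulty to lie. The aim is to make sure we never need an identity sequent that \(\relfmlset\) does not provide. I would first try to show that \(\relfmlsetcl\) itself inherits closedness under the inverse rules from \(\relfmlset\). That reduces every assumption case to a single uniform argument, and the classical version then goes through verbatim with succedent contexts.
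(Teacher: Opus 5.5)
Your overall plan (induction on the derivation of the premise, case split on the last rule, closedness of \(\Sigma\) for the \(\Sigma^*\), \(\Sigma\mathrm{L}\), \(\Sigma\mathrm{R}\) cases) matches the paper's sketch. However, you prove it for the wrong rules, and two of your rules are false under the paper's hypothesis.

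\textbf{Your rules are not the paper's, and two of them fail.} The inverse rules of Appendix B are deliberately not the usual LJ inversions:
\begin{itemize}
\item \(\land\mathrm{L}^-\) keeps the conjunction: it goes from \(A_1\land A_2,\Gamma\vdash C\) to \(A_i,A_1\land A_2,\Gamma\vdash C\), which is just weakening.
\item \(\lor\mathrm{R}^-\) goes from \(\Gamma\vdash A\lor A\) to \(\Gamma\vdash A\).
\item \(\to\mathrm{L}^-\) is a two-premise rule: from \(A\vdash D\) and \(A\to(D\land B),\Gamma\vdash C\) it gives \(B,\Gamma\vdash C\).
\end{itemize}
``Closed'' means closed under \emph{these} rules. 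So your step ``closedness gives the inverted sequent'' in the \(\Sigma^*\) case is not available for your \(\land\)L- and \(\to\)L-inversions, and both genuinely fail.
\begin{itemize}
\item Take \(\Sigma=\{A\land B\vdash C\}\) with \(A,B,C\) atoms; it is closed. \(A\land B\vdash C\) is derivable, but \(A,B\vdash C\) is not. Essentially the only route is \(\Sigma\mathrm{R}\) with premise \(A,B\vdash A\land B\), which needs identity sequents.
\item Take the closed \(\Sigma=\{A\to B\vdash C\}\). Here \(B\vdash C\) would require \(B\vdash A\to B\), i.e.\ \(B,A\vdash B\).
\end{itemize}
Avoiding exactly these identity-dependent inversions is why the paper states its rules as it does.

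\textbf{The classical case is not ``verbatim with succedent contexts''.} The paper's classical calculus is single-succedent. Its rules \(\lor\mathrm{R}_c^-\) (from \(\Gamma\vdash A_1\lor A_2\) to \(\Gamma,\neg A_j\vdash A_i\)) and \(\to\mathrm{L}_c^{-2}\) (from \(A\to B,\Gamma\vdash C\) to \(\neg C,A\to B,\Gamma\vdash A\)) introduce negations and need their own cases. Only \(\to\mathrm{L}_c^{-1}\) coincides with your implication rule, which is why that rule is fine classically but not intuitionistically.

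\textbf{Your ``subtlety'' does not arise.} Consider \(\to\mathrm{R}^-\) after a \(\Sigma\mathrm{R}\) step. Closedness gives \(\Gamma,\Delta,A\vdash B\in\overline{\Sigma}\), and the premises \(\Gamma,A\vdash F\) for \(F\in\Delta\) come from weakening. No identity is needed, so the vague appeal to cut and contraction can be dropped.

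\textbf{The rule you omitted, \(\to\mathrm{L}^-\), does not yield to the routine induction.} In the principal \(\to\mathrm{L}\) case you are left with \(D\land B,\Gamma\vdash C\) and must reach \(B,\Gamma\vdash C\). That again needs something like \(B\vdash B\). Concretely, take the closed and constructive
\[\Sigma=\{E\vdash A,\ A\vdash D,\ E\vdash D,\ D\land B\vdash C\}.\]
Both \(A\vdash D\) and \(A\to(D\land B),E\vdash C\) are derivable (the latter by \(\to\mathrm{L}\) with two \(\Sigma^*\) leaves), but \(B,E\vdash C\) is not. So the paper's one-line sketch does not cover this rule either. A correct treatment needs an extra hypothesis on \(\Sigma\) or a reformulated rule.
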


For a proof see \Cref{sec:derivedinverse}. 
\Saturatedness does not restrict the generality of \subrefllogic, as every (finite) set of assumption sequents is contained in a larger (finite) \saturated set and the additional assumption sequents cannot derive anything that was not previously derivable.
A typical example of a \saturated set of assumption sequents is a set of Horn clauses of the form \(\lsequent{\patom_1,\ldots,\patom_n}{\patomb}\) with atomic propositions, \(\patom_1,\ldots,\patom_n,\patomb\in\patoms\).
This can be viewed as a directed hypergraph of atomic propositions where each edge has a single target.
From this perspective, \subrefllogic generalizes to arbitrary such hypergraphs from merely discrete reflexive ones.

In the intuitionistic case, in addition to \saturatedness, the \emph{partial} inverses of \irref{implyL} and \irref{orR} need to be total on the introduced assumption sequents of \(\relfmlset\), which need to be \constructive:

\begin{definition}
    A set of assumption sequents \(\relfmlset\) is \emph{\constructive} if
    \begin{enumerate}
        \item \(\lsequent{\fmlset}{\fml\lor\fmlb}\in\relfmlset\) implies \(\lsequent{\fmlset}{\fml}\in\relfmlset\) or \(\lsequent{\fmlset}{\fmlb}\in\relfmlset\) and
        \item \(\lsequent{\fmlset,\fml\limply\fmlb}{\fmlc}\in\relfmlset\) implies \(\lsequent{\fmlset,\fml\limply\fmlb}{\fml}\in\relfmlset\) or \(\lsequent{\fmlset}{\fmlc}\in\relfmlset\).
    \end{enumerate}
\end{definition}

\Constructivity means that \(\relfmlset\) cannot assume a disjunction without a choice of disjunct and cannot assume an implication without constructive justification.
Assumption sets in Horn-clause are vacuously \constructive.

\begin{propositionE}[Cut Elimination][]
    The \irref{cut} rule \noindent is admissible in intuitionistic \subrefllogic for \emph{\saturated and \constructive} \(\relfmlset\).

    The \irref{cut} rule \noindent is admissible in classical \subrefllogic for \emph{\classicallysaturated} \(\relfmlset\).
\end{propositionE}

\begin{proofE}
    Instead of proving admissibility of \irref{cut} directly, admissibility of the stronger \irref{cutplus} rule is proved.
    \begin{center}
        \cinferenceRule[cutplus|cut${}^+$]{extended cut proof rule}
        {
            \linferenceRule[sequent]
            {\lsequent{\fmlset}{\fmlb}
                &\lsequent{\fmlb,\fmlsetb,\fmlset}{\fmlc}}
            {\lsequent{\fmlset} {\fmlc}}\quad
        }{\color{darkishgray}$\allsequents{\fmlb,\fmlset}{\fmlsetb}\in\relfmlsetcl$}
    \end{center}
    Note that this generalizes \irref{cut}, since by \(\truefalseaxiomset\) the rule is always applicable with \(\fmlsetb = \ftrue\).

    The proof proceeds by a triple nested induction.
    Begin by induction on the complexity of formula \(\fmlb\).
    So \textbf{assume inductively} that \irref{cutplus} is admissible in all instances where the cut formula \(\fmlb'\) is of lower complexity than \(\fmlb\).

    To prove admissibility of \irref{cutplus} for the cut formula \(\fmlb\), again proceed by induction on the length of the right derivation.
    The inductive claim is that for any \(n\geq 0\) any instance of \irref{cutplus} with formula \(\fmlb\), where the right derivation is of depth at most~\(n\) and the left derivation is arbitrary, is admissible.
    \textbf{Suppose inductively} the claim holds for~\(n\).

    It needs to be shown that instances of \irref{cutplus} for \irref{cut} formula \(\fmlb\) where the right derivation is of depth at most~\({n+1}\) are admissible.
    To show this, proceed by another induction on the length of a derivation:
    Suppose for some $\allsequents{\fmlb,\fmlset}{\fmlsetb}\in\relfmlsetcl$, there are (\irref{cut}-free) derivations \(\derivation\) of \(\lsequent{\fmlset}{\fmlb}\) and \(\derivationb\) of \(\lsequent{\fmlsetb,\fmlset}{\fmlc}\), such that \(\derivationb\) is of depth at most \(n+1\).
    We will find a \irref{cut}-free derivation of \(\lsequent{\fmlset}{\fmlc}\) by a case distinction based on the last rule applied in derivation \(\derivationb\).

    \begin{caselist}
        \case{\irref{assumption}}
        In this case \(\allsequents{\fmlb,\fmlset}{\fmlsetb}\in\relfmlsetcl\) and \(\lsequent{\fmlb,\fmlsetb,\fmlset}{\fmlc}\in\relfmlset\).
        Because \(\relfmlset\) is \saturated also \(\lsequent{\fmlb,\fmlset}{\fmlc}\in\relfmlsetcl\).
        Hence the derivation can be reduced to an instance of \irref{assumptionR}.

        \case{\irref{assumptionL}}
        Suppose \(\derivationb\) is of the form
        \begin{sequentdeduction}
            \linfer[assumptionL]
            {
                \linfer
                {
                    \derivationb'
                }
                {   \lsequent{\fmlsetc,\fmlb,\fmlsetb,\fmlset}{\fmlc} }
            }
            {\lsequent{\fmlb,\fmlsetb,\fmlset}{\fmlc}}
        \end{sequentdeduction}
        where \(\allsequents{\fmlb,\fmlsetb,\fmlset}{\fmlsetc}\in\relfmlsetcl\).
        As \(\relfmlset\) is \saturated then \(\allsequents{\fmlb,\fmlset}{(\fmlsetb,\fmlsetc)}\in\relfmlsetcl\).
        Hence the instance
        \begin{sequentdeduction}
            \linfer[cutplus]
            {
                \linfer
                {
                    \derivation
                }
                {   \lsequent{\fmlset}{\fmlb} }
                &
                \linfer
                {
                    \derivationb'
                }
                {   \lsequent{\fmlb,\fmlsetb,\fmlsetc,\fmlset}{\fmlc} }
            }
            {\lsequent{\fmlset}{\fmlc}}
        \end{sequentdeduction}
        of \irref{cutplus} is admissible as \(\derivationb'\) is of depth at most \(n\).

        \case{\irref{assumptionR}}
        If \(\derivationb\) is of the form
        \begin{sequentdeduction}
            \linfer[assumptionR]
            {
                \linfer[]
                {
                    \derivationb'
                }
                {   \lsequent{\fmlb,\fmlsetb,\fmlset}{\fmld}}
            }
            {\lsequent{\fmlb,\fmlsetb,\fmlset}{\fmlc}}
        \end{sequentdeduction}
        where \(\lsequent{\fmlb,\fmlsetb,\fmlset,\fmld}{\fmlc}\in\relfmlsetcl\).
        Because \(\relfmlset\) is \saturated and \(\allsequents{\fmlb,\fmlset}{\fmlsetb}\in\relfmlsetcl\) then \(\lsequent{\fmlb,\fmlset,\fmld}{\fmlc}\in\relfmlsetcl\).
        Hence
        \begin{sequentdeduction}
            \linfer[assumptionR]
            {
                \linfer
                {
                    \derivation
                }
                {   \lsequent{\fmlset}{\fmlb} }
                &
                \linfer[cutplus]
                {
                    \linfer
                    {
                        \derivation
                    }
                    {   \lsequent{\fmlset}{\fmlb} }
                    &
                    \linfer
                    {
                        \derivationb'
                    }
                    {   \lsequent{\fmlb,\fmlsetb,\fmlset}{\fmld} }
                }
                {   \lsequent{\fmlset}{\fmld} }
            }
            {\lsequent{\fmlset}{\fmlc}}
        \end{sequentdeduction}
        gives rise to a \irref{cut}-free derivation, as \(\derivationb'\) is of depth at most \(n\).

        \case{\irref{andR}}
        The cases where a right rule is the last rule applied in \(\derivation\) are straightforward as the cut simply commutes.
        Suppose \(\fmlc\equiv\fmlc_1\land\fmlc_2\) and \(\derivationb\) is of the form
        \begin{sequentdeduction}
            \linfer[andR]
            {
                \linfer[]
                {
                    \derivationb_1
                }
                {   \lsequent{\fmlb,\fmlsetb,\fmlset}{\fmlc_1}}
                &
                \linfer[]
                {
                    \derivation_2
                }
                {   \lsequent{\fmlb,\fmlsetb,\fmlset}{\fmlc_2}}
            }
            {\lsequent{\fmlb,\fmlsetb,\fmlset}{\fmlc}}
        \end{sequentdeduction}
        Then by the inductive hypothesis the derivation
        \begin{sequentdeduction}
            \linfer[andR]
            {
                \linfer[cutplus]
                {
                    \linfer
                    {
                        \derivation
                    }
                    {   \lsequent{\fmlset}{\fmlb} }
                    &
                    \linfer
                    {
                        \derivationb_1
                    }
                    {   \lsequent{\fmlb,\fmlsetb,\fmlset}{\fmlc_1} }
                }
                {   \lsequent{\fmlset}{\fmlc_1} }
                &
                \linfer[cutplus]
                {
                    \linfer
                    {
                        \derivation
                    }
                    {   \lsequent{\fmlset}{\fmlb} }
                    &
                    \linfer
                    {
                        \derivationb_2
                    }
                    {   \lsequent{\fmlb,\fmlsetb,\fmlset}{\fmlc_2} }
                }
                {   \lsequent{\fmlset}{\fmlc_2} }
            }
            {\lsequent{\fmlset}{\fmlc}}
        \end{sequentdeduction}
        gives rise to a \irref{cut}-free derivation, as \(\derivationb_1\) and \(\derivationb_2\) are of depth at most~\(n\).

        \case{\irref{orR}}
        Suppose \(\fmlc\equiv\fmlc_1\lor\fmlc_2\) and \(\derivationb\) is of the form
        \begin{sequentdeduction}
            \linfer[orR]
            {
                \linfer[]
                {
                    \derivationb'
                }
                {   \lsequent{\fmlb,\fmlsetb,\fmlset}{\fmlc_i}}
            }
            {\lsequent{\fmlb,\fmlsetb,\fmlset}{\fmlc}}
        \end{sequentdeduction}
        for some \(i\).
        Then by the inductive hypothesis the derivation
        \begin{sequentdeduction}
            \linfer[orR]
            {
                \linfer[cutplus]
                {
                    \linfer
                    {
                        \derivation
                    }
                    {   \lsequent{\fmlset}{\fmlb} }
                    &
                    \linfer
                    {
                        \derivationb'
                    }
                    {   \lsequent{\fmlb,\fmlsetb,\fmlset}{\fmlc_i} }
                }
                {   \lsequent{\fmlset}{\fmlc_i} }
            }
            {\lsequent{\fmlset}{\fmlc}}
        \end{sequentdeduction}
        gives rise to a \irref{cut}-free derivation, as \(\derivationb'\) is of depth at most~\(n\).

        \case{\irref{implyR}}
        Suppose \(\fmlc\equiv\fmlc_1\limply\fmlc_2\) and \(\derivationb\) is of the form
        \begin{sequentdeduction}
            \linfer[implyR]
            {
                \linfer[]
                {
                    \derivationb'
                }
                {   \lsequent{\fmlb,\fmlsetb,\fmlset,\fmlc_1}{\fmlc_2}}
            }
            {\lsequent{\fmlb,\fmlsetb,\fmlset}{\fmlc}}
        \end{sequentdeduction}
        Then by the inductive hypothesis the derivation
        \begin{sequentdeduction}
            \linfer[orR]
            {
                \linfer[cutplus]
                {
                    \linfer
                    {
                        \derivation
                    }
                    {   \lsequent{\fmlset,\fmlc_1}{\fmlb} }
                    &
                    \linfer
                    {
                        \derivationb'
                    }
                    {   \lsequent{\fmlb,\fmlsetb,\fmlset,\fmlc_1}{\fmlc_2} }
                }
                {   \lsequent{\fmlset,\fmlc_1}{\fmlc_2} }
            }
            {\lsequent{\fmlset}{\fmlc}}
        \end{sequentdeduction}
        gives rise to a \irref{cut}-free derivation, as \(\derivationb'\) is of depth at most~\(n\).
        (Note that technically the derivations \(\derivation\) and \(\derivationb\) derive stronger sequents, but as \irref{weak} is admissible without increasing the depth of the derivation this can be safely ignored.
        the contraction rule \irref{contraction} is admissible without increasing the depth, it can be left implicit.)

        \case{\irref{andL}}
        For the left rules, there is a distinction to be made on whether the rule is applied to a formula from \(\fmlset\) or from \(\fmlsetb\) or to \(\fmlb\).
        In general for some \(\fmle_1\land\fmle_2\in \fmlset,\fmlsetb,\fmlb\) the derivation \(\derivationb\) is of the form

        \begin{sequentdeduction}
            \linfer[andL]
            {
                \linfer[]
                {
                    \derivationb'
                }
                {   \lsequent{\fmle_i,\fmlb,\fmlsetb,\fmlset}{\fmlc}}
            }
            {\lsequent{\fmlb,\fmlsetb,\fmlset}{\fmlc}}
        \end{sequentdeduction}

        \emph{Subcase 1}: Suppose the principal formula of \irref{andL} is \(\fmle_1\land\fmle_2\in\fmlset\).
        The derivation
        \begin{sequentdeduction}
            \linfer[andL]
            {
                \linfer[cutplus]
                {
                    \linfer
                    {
                        \derivation
                    }
                    {   \lsequent{\fmle_i,\fmlset}{\fmlb} }
                    &
                    \linfer
                    {
                        \derivationb'
                    }
                    {   \lsequent{\fmle_i,\fmlb,\fmlsetb,\fmlset}{\fmlc} }
                }
                {   \lsequent{\fmle_i,\fmlset}{\fmlc} }
            }
            {\lsequent{\fmlset}{\fmlc}}
        \end{sequentdeduction}
        yields a cut-free derivation by induction hypothesis as \(\derivationb'\) is of depth at most \(n\).

        \emph{Subcase 2}: Suppose the principal formula of \irref{andL} is \(\fmle_1\land\fmle_2\in\fmlsetb\).
        Then because \(\relfmlset\) is \saturated and \(\lsequent{\fmlset,\fmlb}{\fmle_1\land\fmle_2}\in\relfmlset\) also \(\allsequents{\fmlset,\fmlb}{(\fmlsetb,\fmle_i)}\in\relfmlset\).
        Then
        \begin{sequentdeduction}
            \linfer[cutplus]
            {
                \linfer
                {
                    \derivation
                }
                {   \lsequent{\fmlset}{\fmlb} }
                &
                \linfer
                {
                    \derivationb'
                }
                {   \lsequent{\fmlb,\fmlsetb,\fmle_i,\fmlset}{\fmlc} }
            }
            {\lsequent{\fmlset}{\fmlc}}
        \end{sequentdeduction}
        yields a cut-free derivation by induction hypothesis as \(\derivationb'\) is of depth at most \(n\).
        Note that the cut is not shifted upwards, but absorbed by the assumptions \(\relfmlset\).

        \emph{Subcase 3}: Suppose the principal formula of \irref{andL} is \(\fmlb\equiv\fmlb_1\land\fmlb_2\).
        The derivation

        \begin{sequentdeduction}
            \linfer[cut]
            {
                \linfer[andLminus]
                {
                    \linfer
                    {
                        \derivation
                    }
                    {   \lsequent{\fmlset}{\fmlb} }
                }
                {   \lsequent{\fmlset}{\fmlb_i} }
                &
                \linfer[cutplus]
                {
                    \linfer
                    {
                        \derivation
                    }
                    {   \lsequent{\fmlb_i,\fmlset}{\fmlb} }
                    &
                    \linfer
                    {
                        \derivationb'
                    }
                    {   \lsequent{\fmlb,\fmlsetb,\fmlb_i,\fmlset}{\fmlc} }
                }
                {   \lsequent{\fmlb_i,\fmlset}{\fmlc} }
            }
            {\lsequent{\fmlset}{\fmlc}}
        \end{sequentdeduction}
        can by the inductive assumptions be turned into a cut-free derivation.
        The upper instance of \irref{cutplus} is admissible by the inductive hypothesis as \(\derivationb'\) is of depth at most \(n\) and the lower \irref{cut} is admissible by the outer induction, as \(\fmlb_i\) is of lower complexity.

        \case{\irref{orL}} For some \(\fmle_1\lor\fmle_2\in \fmlset,\fmlsetb,\fmlb\) the derivation \(\derivationb\) is of the form
        \begin{sequentdeduction}
            \linfer[orL]
            {
                \linfer[]
                {
                    \derivationb_1
                }
                {   \lsequent{\fmle_1,\fmlb,\fmlsetb,\fmlset}{\fmlc}}
                &
                \linfer[]
                {
                    \derivationb_2
                }
                {   \lsequent{\fmle_2,\fmlb,\fmlsetb,\fmlset}{\fmlc}}
            }
            {\lsequent{\fmlb,\fmlsetb,\fmlset}{\fmlc}}
        \end{sequentdeduction}

        \emph{Subcase 1}: Suppose the principal formula of \irref{orL} is \(\fmle_1\lor\fmle_2\in\fmlset\).
        Then the derivation
        \begin{sequentdeduction}
            \linfer[orL]
            {
                \linfer[cutplus]
                {
                    \linfer
                    {
                        \derivation
                    }
                    {   \lsequent{\fmle_1,\fmlset}{\fmlb} }
                    &
                    \linfer
                    {
                        \derivationb_1
                    }
                    {   \lsequent{\fmle_1,\fmlb,\fmlsetb,\fmlset}{\fmlc} }
                }
                {   \lsequent{\fmle_1,\fmlset}{\fmlc} }
                &
                \linfer[cutplus]
                {
                    \linfer
                    {
                        \derivation
                    }
                    {   \lsequent{\fmle_2,\fmlset}{\fmlb} }
                    &
                    \linfer
                    {
                        \derivationb_2
                    }
                    {   \lsequent{\fmle_2,\fmlb,\fmlsetb,\fmlset}{\fmlc} }
                }
                {   \lsequent{\fmle_2,\fmlset}{\fmlc} }
            }
            {\lsequent{\fmlset}{\fmlc}}
        \end{sequentdeduction}
        yields a cut-free derivation by induction hypothesis as \(\derivationb_1,\derivationb_2\) are of depth at most \(n\).

        \emph{Subcase 2}: Suppose the principal formula of \irref{orL} is \(\fmle_1\lor\fmle_2\in\fmlsetb\).
        Then because \(\relfmlset\) is \constructive and \(\lsequent{\fmlset,\fmlb}{\fmle_1\lor\fmle_2}\in\relfmlset\) also \(\allsequents{\fmlset,\fmlb}{(\fmlsetb,\fmle_i)}\in\relfmlset\) for some \(i\in\{0,1\}\).
        Then
        \begin{sequentdeduction}
            \linfer[cutplus]
            {
                \linfer
                {
                    \derivation
                }
                {   \lsequent{\fmlset}{\fmlb} }
                &
                \linfer
                {
                    \derivationb_i
                }
                {   \lsequent{\fmlb,\fmlsetb,\fmle_i,\fmlset}{\fmlc} }
            }
            {\lsequent{\fmlset}{\fmlc}}
        \end{sequentdeduction}
        yields a cut-free derivation by induction hypothesis as \(\derivationb_i\) is of depth at most \(n\).

        \emph{Subcase 3}: Suppose the principal formula of \irref{orL} is \(\fmlb\equiv\fmlb_1\lor\fmlb_2\).
        Note by induction on derivations that there is a derivation \(\derivation'\) of \(\lsequent{\fmlset}{\fmlb_i} \) for some \(i\in\{1,2\}\).
        \begin{sequentdeduction}
            \linfer[cut]
            {
                \linfer
                {
                    \derivation'
                }
                {   \lsequent{\fmlset}{\fmlb_i} }
                &
                \linfer[cutplus]
                {
                    \linfer
                    {
                        \derivation
                    }
                    {   \lsequent{\fmlb_i,\fmlset}{\fmlb} }
                    &
                    \linfer
                    {
                        \derivationb_i
                    }
                    {   \lsequent{\fmlb,\fmlsetb,\fmlb_i,\fmlset}{\fmlc} }
                }
                {   \lsequent{\fmlb_i,\fmlset}{\fmlc} }
            }
            {\lsequent{\fmlset}{\fmlc}}
        \end{sequentdeduction}
        yields a cut-free derivation, where the upper \irref{cutplus} is admissible by the inductive hypothesis as \(\derivationb_i\) is of depth at most \(n\) and the lower \irref{cut} is admissible by the outer induction for any derivation depth, as \(\fmlb_i\) is of lower complexity.

        \case{\irref{implyL}} For some \(\fmle_1\limply\fmle_2\in \fmlset,\fmlsetb,\fmlb\) the derivation \(\derivationb\) is of the form
        \begin{sequentdeduction}
            \linfer[implyL]
            {
                \linfer[]
                {
                    \derivationb_1
                }
                {   \lsequent{\fmlb,\fmlsetb,\fmlset}{\fmle_1}}
                &
                \linfer[]
                {
                    \derivationb_2
                }
                {   \lsequent{\fmle_2,\fmlb,\fmlsetb,\fmlset}{\fmlc}}
            }
            {\lsequent{\fmlb,\fmlsetb,\fmlset}{\fmlc}}
        \end{sequentdeduction}

        \emph{Subcase 1}: Suppose the principal formula of \irref{implyL} is \(\fmle_1\limply\fmle_2\in\fmlset\).
        The derivation
        \begin{sequentdeduction}
            \linfer[implyL]
            {
                \linfer[cutplus]
                {
                    \linfer
                    {
                        \derivation
                    }
                    {   \lsequent{\fmlset}{\fmlb} }
                    &
                    \linfer
                    {
                        \derivationb_1
                    }
                    {   \lsequent{\fmle_i,\fmlb,\fmlsetb,\fmlset}{\fmlc} }
                }
                {   \lsequent{\fmlset}{\fmle_1} }
                &
                \linfer[cutplus]
                {
                    \linfer
                    {
                        \derivation
                    }
                    {   \lsequent{\fmle_2,\fmlset}{\fmlb} }
                    &
                    \linfer
                    {
                        \derivationb_2
                    }
                    {   \lsequent{\fmle_2,\fmlb,\fmlsetb,\fmlset}{\fmlc} }
                }
                {   \lsequent{\fmle_2,\fmlset}{\fmlc} }
            }
            {\lsequent{\fmlset}{\fmlc}}
        \end{sequentdeduction}
        yields a cut-free derivation by induction hypothesis as both \(\derivationb_i\) are of depth at most \(n\).

        \emph{Subcase 2}: Suppose the principal formula of \irref{implyL} is \(\fmle_1\limply\fmle_2\in\fmlsetb\).
        Then because \(\relfmlset\) is \constructive and \(\lsequent{\fmlset,\fmlb}{\fmle_1\limply\fmle_2}\in\relfmlset\) either \(\allsequents{\fmlset,\fmlb}{(\fmlsetb,\fmle_1)}\in\relfmlset\) or \(\allsequents{\fmlset}{(\fmlsetb,\fmle_2)}\in\relfmlset\).
        In either case for the respective \(i\in\{0,1\}\) the derivation
        \begin{sequentdeduction}
            \linfer[cutplus]
            {
                \linfer
                {
                    \derivation
                }
                {   \lsequent{\fmlset}{\fmlb} }
                &
                \linfer
                {
                    \derivationb_i
                }
                {   \lsequent{\fmlb,\fmlsetb,\fmle_i,\fmlset}{\fmlc} }
            }
            {\lsequent{\fmlset}{\fmlc}}
        \end{sequentdeduction}
        yields a cut-free derivation by induction hypothesis as \(\derivationb_i\) is of depth at most \(n\).

        \emph{Subcase 3}: Suppose the principal formula of \irref{implyL} is \(\fmlb\equiv\fmlb_1\limply\fmlb_2\).
        The derivation
        \begin{sequentdeduction}{\renewcommand{\linferPremissSeparation}{\hspace{0.6cm}}
                \linfer[cut]
                {
                    \linfer[cut]
                    {
                        \linfer[cutplus]
                        {\renewcommand{\linferPremissSeparation}{\hspace{0.2cm}}
                            \linfer
                            {
                                \derivation
                            }
                            {   \lsequent{\fmlset}{\fmlb} }
                            &
                            \linfer
                            {
                                \derivationb_1
                            }
                            {   \lsequent{\fmlb,\fmlsetb,\fmlset}{\fmlb_1} }
                        }
                        {   \lsequent{\fmlset}{\fmlb_1} }
                        &
                        \linfer[implyRminus]
                        {
                            \linfer
                            {
                                \derivation
                            }
                            {   \lsequent{\fmlset}{\fmlb} }
                        }
                        {   \lsequent{\fmlb_1,\fmlset}{\fmlb_2} }
                    }
                    {   \lsequent{\fmlset}{\fmlb_2} }
                    &
                    \linfer[cutplus]
                    {\renewcommand{\linferPremissSeparation}{\hspace{0.2cm}}
                        \linfer
                        {
                            \derivation
                        }
                        {   \lsequent{\fmlset}{\fmlb} }
                        &
                        \linfer
                        {
                            \derivationb_2
                        }
                        {   \lsequent{\fmlb_2,\fmlsetb,\fmlset}{\fmlc} }
                    }
                    {   \lsequent{\fmlb_2,\fmlset}{\fmlc} }
                }
                {\lsequent{\fmlset}{\fmlc}}
            }\end{sequentdeduction}
        gives a cut-free derivation, where the two instances of \irref{cut} are admissible by the outer induction hypothesis on \(\fmlb_1,\fmlb_2\) of lower complexity and the instances of \irref{cutplus} are admissible as \(\derivationb_1,\derivationb_2\) are of depth at most \(n\).
    \end{caselist}

    The classical case follows the same inductive approach and is a straightforward adaptation.
\end{proofE}
The proof is similar to proofs of cut-elimination in ordinary logic \cite{DBLP:conf/lics/Pfenning95,DBLP:journals/iandc/Pfenning00}.
It proceeds by a doubly nested induction, first on the cut formula, and then on the length of the right derivation to push the cut upwards in the derivation.
Additional difficulties arise with respect to the assumption sequents.
Most importantly, the proof reveals which assumption sets are suitable for \subrefllogic.
In general, proof-theoretically \subrefllogic is very similar to ordinary logic and dropping identity interestingly is not a proof-theoretic sacrifice so that many standard results easily transfer to \subrefllogic.

\section{Semialgebraic Semantics}\label{sec:algebraicsem}

Defining complete semialgebraic semantics for \subrefllogic requires a possibly nonreflexive semantic entailment relation.
The challenge is to identify the correct semantic structure that characterizes the connectives fully, without inadvertently reintroducing reflexivity.
\Cref{sec:semiheytings,sec:semibool} introduce \semiHeyting{} (\semiBoolean) \semialgebra{}s as principled, natural and appropriate nonreflexive generalizations of Heyting and Boolean algebras.
The soundness and completeness proof in \Cref{sec:semisemantics} confirms that the Heyting (Boolean) semialgebras correctly and fully capture subreflexive reasoning.

\subsection{\SemiHeytingAlgebra{}s}\label{sec:semiheytings}

A transitive semantic entailment relation \(\shleq\) usually satisfies the typical properties of a conjunction \(\saa\shleq\saa\shand\saa\) and \(\saa\shand\sab\shleq\saa\).
However, this implicitly and inadvertently reintroduces reflexivity~\(\saa\shleq\saa\).
The key insight is to work instead with \emph{relative semantic entailment} via the upper \(\shupset{\saa}=\{\sab: \saa\shleq\sab\}\) and the lower set \(\shdownset{\saa}=\{\sab: \sab\shleq\saa\}\) of an element \(\saa\).
The following new derived order captures this notion.

\begin{definition}[\yonedaorder]
    Given a binary relation \(\shleq\), its associated  \emph{\yonedaorder}~\(\shyol\) is defined by: \(\saa\shyol\sab\) if and only if \(\shdownset{\saa}\subseteq\shdownset{\sab}\) and \(\shupset{\sab}\subseteq\shupset{\saa}\).

    Elements \(\saa,\sab\) are \emph{\yonedaisomorphic} \(\saa\shiso\sab\)  if \(\saa\shyol\sab\) and \(\sab\shyol\saa\).
\end{definition}

For an entailment relation \(\shleq\), the \yonedaorder \(\saa\shyol\sab\) describes that anything entailing \(\saa\) entails \(\sab\), and anything entailed by \(\sab\) is entailed by \(\saa\).
\yonedaisomorphism \(\saa\shiso\sab\) says that two elements behave the same with respect to the relation, although they may be different elements.
Note that the \yonedaorder refines the original order \(\shleq\) as \(\saa\shyol\sab\) for any \(\saa\shleq\sab\).
Conversely, \(\shleq\) is reflexive if and only if it coincides with \(\shyol\).

In the reflexive contexts, meets \(\shand\), joins \(\shor\), and relative pseudocomplements \(\shimply\) are characterized by adjunctions such as \(\sac \shleq \saa\shand\sab\) if and only if \(\sac\shleq\saa\) and \(\sac\shleq\sab\).
In combination with reflexivity, this is enough to ensure that the operations behave as required.
The adjunction remains the core in the subreflexive context, however it demands a more subtle definition, since, for example, \(\shand\), defined by its adjunction, need not even be monotone without reflexivity.
The following defines the appropriate nonreflexive generalization of Heyting algebras.

\begin{definition}[\SemiHeytingAlgebra{}s]\newcommand{\rightalign}{\hfill}\label{def:semiheytingalgebra}
    A \emph{\semiHeytingalgebra{}} \((\shs,\shleq,\sazero,\saone,\shand,\shor,\shimply)\) is a structure, with a binary relation \(\shleq\), constants \(\saone\), \(\sazero\) and binary operations \(\shand\), \(\shor\) and \(\shimply\) with:
    \begin{enumerate}[label=H\arabic*]
        \item \(\shleq\) is a transitive relation \rightalign(semicategory)\label{it:transitive}\label{it:associativeand}
        \item \(\shand\) is \yonedaassociative, i.e., \((\saa\shand\sab)\shand\sac\shiso\saa\shand(\sab\shand\sac)\) \rightalign(associativity)
        \item \(\sazero\shleq\saa\shleq\saone\)\rightalign(terminal)\label{it:zeroone}
        \item\label{it:andprops} \begin{enumerate}[leftmargin=*]
                  \item \(\sac \shleq \saa\shand\sab\) iff \(\sac\shleq\saa\) and \(\sac\shleq\sab\) \rightalign(product adjunction)\label{it:adjand}
                  \item \(\saa\shyol\saa\shand\saa\) \rightalign(contraction)\label{it:anddouble}
                  \item \(\saa\shand\sab\shyol\saa\) and \(\saa\shand\sab\shyol\sab\) \rightalign(projections)\label{it:andproj}
              \end{enumerate}
        \item\label{it:orprops}
              \begin{enumerate}[leftmargin=*]
                  \item \(\saa\shor\sab \shleq \sac\) iff \(\saa\shleq\sac\) and \(\sab\shleq\sac\) \rightalign(sum adjunction)\label{it:adjor}
                  \item \(\saa\shyol\saa\shor\sab\) and \(\sab\shyol\saa\shor\sab\) \rightalign(injections)\label{it:orinj}
                  \item \(\saa\shor\saa\shyol\saa\) \rightalign(contraction)\label{it:ordouble}
              \end{enumerate}
        \item\label{it:implprops} \begin{enumerate}[leftmargin=*]
                  \item \(\saa\shand\sab \shleq \sac\) iff \(\saa\shleq (\sab \shimply \sac)\) \rightalign(exponential adjunction)\label{it:adjimply}
                  \item \(\sac\shyol\saa\shimply(\sab\shand\sac)\) if \(\saa\shleq\sab\)\rightalign(weakening)\label{it:implw}
                  \item \((\sab\shimply\sac)\shand\saa\shyol\sac\) if \(\saa\shleq\sab\) \rightalign(evaluation)\label{it:implev}
              \end{enumerate}
    \end{enumerate}
    In a \semiHeytingalgebra{}, negation \(\lnot\saa\) is defined as \(\saa\shimply\sazero\).
\end{definition}
\noindent
Most of the clauses correspond to properties in ordinary Heyting algebras.
However, the careful definitions in terms of \(\shleq\) and \(\shyol\) ensure that they capture the desired meaning without inadvertently reintroducing reflexivity.
In particular, \(\shiso\) plays the role of equality in a \semiHeytingalgebra and, since \(\shyol\) is not necessarily antisymmetric, associativity can only be assumed up to equivalence \(\shiso\) (\ref{it:associativeand}).
Categorically speaking, the items \ref{it:anddouble}, \ref{it:orinj} and \irref{it:implw} (\ref{it:andproj}, \irref{it:ordouble} and \irref{it:implev}) are versions of the units (counits) of the adjunctions \ref{it:adjand}, \ref{it:adjor} and \ref{it:adjimply}, respectively.
The side condition \(\saa\shleq\sab\) in \ref{it:implw} and \ref{it:implev} is the key and a major subtle difference from Heyting algebras.
For example, adding \((\saa\shimply\sac)\shand\saa\shyol\sac\) instead of \ref{it:implev} would be too restrictive.

\Cref{sec:categorical} will explain in detail why the structure, in particular \ref{it:implw} and \ref{it:implev}, is defined in this way.
While the connectives are monotone, unlike in a Heyting algebra, they are only monotone \emph{in both places simultaneously}, e.g., \(\saa\shand\sab\shleq\saa'\shand\sab'\) if \(\saa\shleq\saa'\) and \(\sab\shleq\sab'\) (as shown in \Cref{lem:monotonicities}).
These monotonicity properties can be used in place of \ref{it:adjand}, \ref{it:adjor} and \ref{it:adjimply} in \Cref{def:semiheytingalgebra}.
\Cref{sec:categorical} characterizes the defining properties of a \semiHeytingalgebra all as forms of adjunctions in semicategories.

\begin{lemmaE}[Monotonicity in \(\shleq\)][all end]\label{lem:monotonicities}
    For \(\saa\shleq\saa'\) and \(\sab\shleq\sab'\):
    \begin{enumerate}
        \item \(\saa\shand\sab\shleq\saa'\shand\sab'\) \label{it:andmonotonicity}
        \item \(\saa\shor\sab\shleq\saa'\shor\sab'\)\label{it:ormonotonicity}
        \item \(\saa'\shimply\sab\shleq\saa\shimply\sab'\)\label{it:implymonotonicity}
    \end{enumerate}
\end{lemmaE}
\begin{proofE}
    \begin{caselist}
        \case{(\ref{it:andmonotonicity})}
        Because \(\saa\shleq\saa'\) and \(\saa\shand\sab\shyol\saa\) (\ref{it:andproj}) also \(\saa\shand\sab\shleq\saa'\) and symmetrically \(\saa\shand\sab\shleq\sab'\).
        By \ref{it:adjand} this implies \(\saa\shand\sab\shleq\saa'\shand\sab'\).

        \case{(\ref{it:ormonotonicity})}
        Because \(\saa\shleq\saa'\) and \(\saa'\shyol\saa'\shor\sab'\) (\ref{it:orinj}) also \(\saa\shleq\saa'\shor\sab'\) and symmetrically \(\sab\shleq\saa'\shor\sab'\).
        By \ref{it:adjor} this implies \(\saa\shor\sab\shleq\saa'\shor\sab'\).

        \case{(\ref{it:implymonotonicity})}
        Note that \(\saa\shand(\saa'\shimply\sab)\shyol\sab\) by \ref{it:implev}, so that \(\saa\shand(\saa'\shimply\sab)\shleq \sab'\).
        Using \ref{it:adjimply} this implies \(\saa'\shimply\sab\shleq\saa\shimply\sab'\).
    \end{caselist}
\end{proofE}

\begin{example}
    The interval \([0,1]\) ordered by \emph{strict} inclusion \(<\) (with added relations \(0<0\) and \(1<1\)), minimum \(\rex\shand\rey=\min\{\rex,\rey\}\), maximum \(\rex\shor\rey=\max\{\rex,\rey\}\) and threshold implication
    \[
        \rex\shimply\rey =\begin{cases}
            \saone & \text{if } \rex<\rey \\
            y      & \text{otherwise}
        \end{cases}
    \]
    forms a \semiHeytingalgebra{}.
    The derived \yonedaorder is \(\rex\shyol\rey\) iff \(\rex\leq\rey\).
    The interval algebra is denoted \(\intha\) and is a nonreflexive version of Gödel's many-valued logic \cite{10.1093/oso/9780195147209.003.0026}.

    \label{ex:realthree}
    The three-element subalgebra \(\threealg=\{0,\midpoint,1\}\) of \(\intha\) is a minimal example of a \semiHeytingalgebra{} that is not reflexive.
    In this the basic instance for intuitionistic robust reasoning, \(0\) represents falsehood, \(1\) truth and \(\midpoint\) represents marginal truth.
    This is a nonreflexive version of Heyting's three-valued logic \cite{HeytingDieFR}, except that \(\midpoint\shimply\midpoint=\midpoint\) and not \(\midpoint\shimply\midpoint=1\).
\end{example}

\begin{example}\label{ex:strangethree}
    Although \semiHeytingalgebra{s} superficially only differ in the assumption of reflexivity, this has far-reaching implications.
    Let \(\threealg'\) be exactly the same as \(\threealg\), except that \(\midpoint\shor\sazero=\saone\); then \(\threealg'\) is still a \semiHeytingalgebra.
    This example shows that~\(\shor\) need not be \yonedaassociative, \yonedaacommutative \(\rex \shor\rey\not\shiso\rey\shor\rex\) nor necessarily distributive \(\rex\shor(\rey\shand\rez)\not\shiso(\rex\shor\rey)\shand(\rex\shor\rez)\) (e.g., \(\rex=\midpoint=\rez\) and \(\rey=0\)).
    In fact, \(\shor\) is not even necessarily monotone with respect to \(\shyol\), i.e., there are \(\rey\shyol\rez\) such that \(\rex\shor\rey\not\shyol\rex\shor\rez\).
\end{example}

\Cref{ex:strangethree} shows that the adjunction property of \(\shor\) cannot be lifted to the \(\shyol\) order; the more complex description of \semiHeytingalgebra{s} in terms of the two orders \(\shleq\) and \(\shyol\) is thus necessary.
In contrast, due to \ref{it:adjimply}, the adjunction for \(\shand\) lifts to the \yonedaorder:
\[\saa\shyol\sab\shand\sac \iff \saa\shyol\sab\text{ and } \saa\shyol\sac\]
and consequently \(\shand\) is \yonedaacommutative \(\saa\shand\sab\shiso\sab\shand\saa\) and monotone with respect to the \yonedaorder (as shown in \Cref{lem:andproperties}).

\begin{lemmaE}[Properties of \(\shand\)][all end]\label{lem:andproperties}
    \begin{enumerate}
        \item \(\saa\shand\sab\shiso\sab\shand\saa\)\label{it:commutativeand}
        \item \(\saa\shand\sab\shyol\saa'\shand\sab'\) for all \(\saa\shyol\saa'\) and \(\sab\shyol\sab'\) \label{it:yonedaandmonotonicity}
        \item \(\saa\shyol\sab\shand\sac\) iff \(\saa\shyol\sab\) and \(\saa\shyol\sac\)\label{it:andyonadj}
        \item \(\saone\shand\saa\shiso\saa\shiso\saa\shand\saone\)\label{it:andoneneutral}
        \item \(\lnot\sab\shand\saa\shiso \sazero\) if \(\saa\shleq\sab\)\label{it:andnegzero}
        \item \(\saa\shand\sazero\shiso \sazero\shiso\sazero\shand\saa\)\label{it:andzero}
    \end{enumerate}
\end{lemmaE}
\begin{proofE}
    Before proving the lemma we prove \(\saa\shand\sab\shyol\saa'\shand\sab\) for all \(\saa\shyol\saa'\).
    Consider \(\sac\shleq\saa\shand\sab\).
    By \ref{it:adjand} then \(\sac\shleq\saa\) and \(\sac\shleq\sab\).
    Because \(\saa\shyol\saa'\) also \(\sac\shleq\saa'\) and hence \(\sac\shleq\saa'\shand\sab\) again by \ref{it:adjand}.
    Consider next \(\saa'\shand\sab\shleq\sac\).
    By \ref{it:adjimply} then \(\saa'\shleq\sab\shimply\sac\) and by \(\saa\shyol\saa'\) also \(\saa\shleq\sab\shimply\sac\).
    And again by \ref{it:adjimply} then \(\saa\shand\sab\shleq\sac\).

    \begin{caselist}
        \case{(\ref{it:commutativeand})}
        By \ref{it:anddouble} note \(\saa\shand\sab\shyol(\saa\shand\sab)\shand(\saa\shand\sab)\).
        Using the claim above with \(\saa\shand\sab\shyol\sab\) (by \ref{it:andproj}) it follows that \(\saa\shand\sab\shyol\sab\shand(\saa\shand\sab)\).
        Then by \yonedaassociativity of \(\shand\) and \ref{it:andproj} again \(\saa\shand\sab\shyol\sab\shand\saa\).

        \case{(\ref{it:yonedaandmonotonicity})} Follows by the preliminary claim and commutativity \ref{it:commutativeand}.

        \case{(\ref{it:andyonadj})}
        The forward implication is immediate because \(\sab\shand\sac\shyol\sab\) and \(\sab\shand\sac\shyol\sac\) by \ref{it:andproj}.
        For the converse consider \(\sad\shleq\saa\) and note that then \(\sad\shleq\sab\) and \(\sad\shleq\sac\) since by assumption \(\saa\shyol\sab\) and \(\saa\shyol\sac\).
        Hence by \ref{it:adjand} also \(\sad\shleq\sab\shand\sac\).
        Now consider \(\sab\shand\sac\shleq\sad\), then \(\sab\shleq\sac\shimply\sad\) by \ref{it:adjimply}.
        Because \(\saa\shyol\sab\) by assumption then \(\saa\shleq\sac\shimply\sad\) and hence by \ref{it:adjimply} again \(\saa\shand\sac\shleq\sad\).
        By commutativity (\ref{it:commutativeand}) then \(\sac\shand\saa\shleq\sad\).
        Again by \ref{it:adjimply} then \(\sac\shleq\saa\shimply\sad\) and thus \(\saa\shleq\saa\shimply\sad\) using \(\saa\shyol\sac\).
        Finally by \ref{it:adjimply} again \(\saa\shand\saa\shleq\sad\) and then by \ref{it:anddouble} then \(\saa\shleq\sad\).

        \case{(\ref{it:andoneneutral})}
        Note that \(\saone\shand\saa\shyol\saa\) by \Cref{it:andproj}.
        By \Cref{it:andmonotonicity} (using \(\saa\shyol\saone\) by \ref{it:zeroone}) then \(\saa\shand\saa\shleq\saone\shand\saa\).
        So the claim follows as \(\saa\shyol\saa\shand\saa\) by \ref{it:anddouble}.

        \case{(\ref{it:andzero})}
        Note that \(\saa\shand\sazero\shyol\sazero\) by \ref{it:andproj} and \(\sazero\shyol\saa\shand\sazero\) by \irref{it:zeroone}.
        The other equivalence follows from commutativity~\ref{it:commutativeand}.

        \case{(\ref{it:andzero})}
        Note that \(\saa\shand\sazero\shyol\sazero\) by \ref{it:andproj} and \(\sazero\shyol\saa\shand\sazero\) by \irref{it:zeroone}.
        The other equivalence follows from commutativity~\ref{it:commutativeand}.
    \end{caselist}
\end{proofE}

A \emph{homomorphism of \semiHeytingalgebra{}s} \(f:\shs\to\shs'\) is a \(\shleq\)-monotone function respecting the \semiHeytingalgebra{} structure, i.e.,
\(f(\sazero)=\sazero\), \(f(\saone)=\saone\),
\(\saa\shleq\sab \implies f(\saa)\shleq f(\sab) \),
\(f(\sab \shand \sac)= f(\sab) \shand f(\sac)\),
\(f(\sab \shor \sac) = f(\sab) \shor f(\sac)\) and
\( f(\sab\shimply\sac)=f(\sab) \shimply f(\sac)\).
If the homomorphism is \emph{order-reflecting}, i.e., \(\saa\shleq\sab\) whenever \(f(\saa)\shleq f(\sab)\), \(\shs\) is \emph{reflected} in \(\shs'\).

The next proposition shows that \semiHeytingalgebra{s} exactly generalize Heyting algebras to the nonreflexive setting.
It shows that while \SemiHeytingalgebra{} do not need to be reflexive, they reduce to ordinary Heyting algebras in case they are reflexive.

\begin{propositionE}[][]
    A \semiHeytingalgebra{} is a pre-Heyting\footnote{A \emph{pre-Heyting algebra} is a Heyting algebra, except that the order does not need to be antisymmetric} algebra iff \(\shleq\) is reflexive.\label{prop:reflexivesemiheyting}

    Every reflexive \semiHeytingalgebra is reflected in a Heyting algebra.
\end{propositionE}

\begin{proofE}
    If \(\shs\) is a Heyting algebra, then \(\shleq\) is reflexive by definition.
    Conversely, assume that \(\shs\) is a reflexive \semiHeytingalgebra{}, so that \(\saa\shyol\sab\) iff \(\saa\shleq\sab\).
    It is easy to see that \(\shs\) is a bounded lattice with meet \(\shand\) and join \(\shor\).
    Moreover by \ref{it:adjimply} the operation \(\shimply\) is a relative pseudo-complement.

    The second claim follows since every pre-Heyting algebra is reflect in a Heyting algebra.
\end{proofE}
The reason that a reflexive \semiHeytingalgebra is merely a pre-Heyting algebras is that it is a Heyting algebra with respect to \(\shiso\), which does not need to be strict equality.
In fact, as \Cref{ex:strangethree} shows, it is not necessarily a congruence relation, and therefore might not even admit an antisymmetric quotient.

\subsection{\SemiBooleanAlgebra{}s}\label{sec:semibool}

Just like Heyting algebras, \SemiHeytingalgebra{}s capture intuitionistic reasoning and classical reasoning corresponds to the subclass of Boolean algebras.
Ordinarily, a Boolean algebra is a Heyting algebra with actual complements, i.e., satisfying \(\saa\shor\lnot\saa\shiso\saone\).
However, in the context of the other rules, this assumption again inadvertently reintroduces the identity axiom.
The following more careful definition avoids this and captures classical subreflexive reasoning.

\begin{definition}[\SemiBooleanAlgebra{}]
    A \semiHeytingalgebra{} is \emph{\classical} iff \({\saa\shor\sab}\shiso{\lnot\saa\shimply\sab}\) and \({\lnot\saa\shor\sab}\shiso{\saa\shimply\sab}\).
    A \classical \semiHeytingalgebra{} is called a \emph{\semiBooleanalgebra{}}.
\end{definition}

Classicality removes many of the difficulties of \semiHeytingalgebra{s} (\Cref{ex:strangethree}) and brings the theory closer to the standard theory of Boolean algebras.

For example, in a \SemiBooleanalgebra, \(\shor\) is \yonedaassociative, \yonedaacommutative and monotone with respect to \(\shiso\).
Moreover, \(\shor\) satisfies the disjunction adjunction with respect to \(\shyol\), i.e., \(\saa\shor\sab\shyol\sac\) iff \(\saa\shyol\sac\) and \(\saa\shyol\sac\) (as shown in \Cref{lem:orproperties}).
Moreover, the distributive laws for conjunction and disjunction hold in \SemiBooleanalgebra{s} (as demonstrated in \Cref{distributive}) so that a \semiBooleanalgebra is a distributive lattice with respect to \(\shyol\).
In contrast, for \semiHeytingalgebra{s} this \emph{requires reflexivity}.

\begin{propositionE}[Classical Negation][all end]\label{lem:booleanproperties}\label{lem:implyproperties}
    In \semiBooleanalgebra{}s:
    \begin{enumerate}
        \item \(\saa\shleq\sazero\shor\saa\) iff \(\saa\shleq\sab\)\label{it:orzero}
        \item \(\saa\shleq\sab\) iff \(\saa\shand\lnot\sab\shleq\sazero\)\label{it:orless}
        \item \(\saa\shiso\lnot\lnot\saa\)\label{it:booleandne}
        \item \(\saa\shleq\lnot\sab\) iff \(\sab\shleq\lnot\saa\)\label{it:negationswap}
        \item \(\saa\shleq\sab\) iff \(\lnot\sab\shleq\lnot\saa\) \label{it:negantitone}
        \item \(\lnot\sab\shyol\lnot\saa\) iff \(\saa\shyol\sab\)\label{it:yonedaneg}
        \item \(\saa\shand\lnot\sab\shleq\sac\) iff \(\saa\shand\lnot\sac\shleq\sab\)\label{it:moveing}
    \end{enumerate}
\end{propositionE}
\begin{proofE}
    \begin{caselist}
        \case{(\ref{it:orzero})}
        For the forward direction assume \(\saa\shleq\sazero\shor\sab\).
        Then by \classicality \(\saa\shleq\lnot\sazero\shimply\sab\) and by \ref{it:adjimply} then \(\saa\shand\lnot\sazero\shleq\sab\).
        Note that \(\sazero\shleq\lnot\saa\) and then by \ref{it:adjimply} then \(\sazero\shand\saa\shleq\sazero\).
        Consequently, \(\saa\shleq\lnot\sazero\) and thus \(\saa\shyol\saa\shand\saa\shyol\saa\shand\lnot\sazero\shleq\sab\).
        For the backward direction note that \(\saa\shand\lnot\sazero\shyol\saa\shleq\sab\) by \ref{it:andproj} and then \(\saa\shleq\lnot\sazero\shimply\sab\).
        From \classicality then \(\saa\shleq\sazero\shor\sab\).

        \case{(\ref{it:orzero})}
        The forward implication is immediate from \eqref{it:implev}.
        For the reverse direction suppose \(\saa\shand\lnot\sab\shleq\sazero\).
        Then by \eqref{it:adjimply} also \(\saa\shleq\lnot\sab\shimply\sazero\) and by \classicality then \(\saa\shleq\sab\shor\sazero\) and by \eqref{it:orzero} then \(\saa\shleq\sab\).

        \case{(\ref{it:booleandne})}
        For \(\saa\shyol\lnot\lnot\saa\) suppose first \(\sab\shleq\saa\).
        Then by \ref{it:implev} also \(\sab\shand\lnot\saa\shleq\sazero\) and by \ref{it:adjimply} also \(\sab\shleq\lnot\lnot\saa\).

        Now assume \(\lnot\lnot\saa\shleq\sab\) then by \ref{it:implev} then \(\lnot\lnot\saa\shand\lnot\sab\shleq\sazero\) and by \ref{it:orless} then \(\lnot\sab\shleq\lnot\saa\).
        By \ref{it:adjimply} then \(\lnot\sab\shand\saa\shleq\sazero\) and by \ref{it:orless} then \(\saa\shleq\sab\).

        Next for \(\lnot\lnot\saa\shyol\saa\) suppose first \(\sab\shleq\lnot\lnot\saa=\lnot\saa\shimply\sazero\) and by \classicality then \(\sab\shleq\saa\shor\sazero\). By \eqref{it:orzero} then \(\sab\shleq\saa\).
        Finally suppose \(\saa\shleq\sab\).
        Then by \ref{it:implev} then \(\saa\shand\lnot\sab\shleq\sazero\) and by \ref{it:adjimply} then \(\lnot\sab\shleq\lnot\saa\).
        Again by \ref{it:implev} then \(\lnot\lnot\saa\shand\lnot\sab\shleq\sazero\).
        By \eqref{it:orless} then \(\lnot\lnot\saa\shleq\sab\).

        \case{(\ref{it:negationswap})}
        Note the pairwise equivalences
        \[\saa\shleq\lnot\sab\iff\saa\shand\sab\shleq\sazero\iff\sab\shand\saa\shleq\sazero\iff\sab\shleq\lnot\saa\]
        where the first and the last equivalence is by \ref{it:adjimply} and the second is by \itrefof{lem:andproperties}{it:commutativeand}.
        is again by \ref{it:adjimply}.

        \case{(\ref{it:negantitone})}
        If \(\saa\shleq\sab\) then by \ref{it:implev} also  \(\saa\shand\lnot\sab\shyol\sazero\) and thus \(\saa\shand\lnot\sab\shleq\sazero\).
        By \ref{it:adjimply} then \(\lnot\sab\shleq\lnot\saa\).
        For the converse suppose \(\lnot\sab\shleq\lnot\saa\) and note by \ref{it:adjimply} then \(\lnot\sab\shand\saa\shleq\sazero\) and \(\saa\shleq\lnot\lnot\sab\).
        Then by \classicality \(\saa\shleq\sab\).

        \case{(\ref{it:yonedaneg})} Straightforward.

        \case{(\ref{it:moveing})} Assume \(\saa\shand\lnot\sab\shleq\sac\).
        By \classicality then \(\saa\shand\lnot\sab\shleq\lnot\lnot\sac\).
        Then \(\saa\shand\lnot\sab\shand\lnot\sac\shleq\sazero\) and \(\saa\shand\lnot\sac\shleq\lnot\lnot\sab\) by \ref{it:adjimply}.
        Again by classicality \(\saa\shand\lnot\sac\shleq\sab\).
    \end{caselist}
\end{proofE}

\begin{lemmaE}[Properties of \(\shor\)][all end]\label{lem:orproperties}
    In \semiBooleanalgebra{}s:
    \begin{enumerate}
        \item \((\saa\shor\sab)\shor\sac\shiso\saa\shor(\sab\shor\sac)\)\label{it:associativeor}
        \item \(\saa\shor\sab\shiso\sab\shor\saa\)\label{it:commutativeor}
        \item \(\saa\shor\sab\shyol\saa'\shor\sab'\) for all \(\saa\shyol\saa'\) and \(\sab\shyol\sab'\) \label{it:yonedaormonotonicity}
        \item \(\saa\shor\sab\shyol\sac\) iff \(\saa\shyol\sac\) and \(\sab\shyol\sac\)\label{it:yonedaoradj}
        \item \(\sazero\shor\saa\shiso\saa\shiso\saa\shor\sazero\)\label{it:orzeroneutral}
        \item \(\saa\shor\saone\shiso\saone\shiso\saone\shor\saa\)\label{it:orone}
    \end{enumerate}
\end{lemmaE}
\begin{proofE}
    First note that \(\saa\shor\sab\shyol\saa\shor\sab'\) whenever \(\sab\shyol\sab'\).
    To see this consider first \(\saa\shor\sab'\shleq\sac\), so that \(\saa\shleq\sac\) and \(\sab'\shleq\sac\) by \ref{it:adjor}.
    As \(\sab\shyol\sab'\) then \(\sab\shleq\sac\) and thus again by \ref{it:adjor} with \(\saa\shleq\sac\) it follows that \(\saa\shor\sab\shleq\sac\).
    Next consider \(\sac\shleq\saa\shor\sab\).
    Then by \classicality \(\saa\shand\lnot\saa\shleq\sab\) and as \(\sab\shyol\sab'\) it follows that \(\saa\shand\lnot\saa\shleq\sab'\).
    Again by \classicality \(\saa\shleq\saa\shor\sab'\).

    \begin{caselist}
        \case{(\ref{it:associativeor})} Immediate by classicality and \yonedaassociativity of \(\shand\).

        \case{(\ref{it:commutativeor})} Note that
        \begin{align*}
            \saa\shor\sab\shyol\sab\shor(\saa\shor\sab)
            \shyol (\sab\shor\saa)\shor\sab\shyol(\sab\shor\saa)\shor(\sab\shor\saa)
            \shyol\sab\shor\saa
        \end{align*}
        where the first inequality is by \ref{it:orinj}, the second is by \yonedaassociativity \ref{it:associativeand}, the third is by the preliminary claim and \(\sab\shyol\sab\shor\saa\) by \ref{it:orinj} and the last inequality is by \ref{it:ordouble}.

        \case{(\ref{it:yonedaormonotonicity})} Immediate from the preliminary claim and commutativity~\ref{it:commutativeor}.

        \case{(\ref{it:yonedaoradj})}
        For the forward direction note \(\saa\shyol\saa\shor\sab\shyol\sac\) and \(\sab\shyol\saa\shor\sab\shyol\sac\) by \ref{it:orinj}.
        For the backward direction consider first \(\sac\shleq\sad\) and note that as \(\saa\shyol\sac\) and \(\sab\shyol\sac\) also \(\saa\shleq\sad\) and \(\sab\shleq\sad\).
        By \ref{it:adjor} thus also \(\saa\shor\sab\shleq\sad\).
        Next consider \(\sad\shleq\saa\shor\sab\) so that \(\sad\shand\lnot\saa\shleq\sab\) by \classicality.
        It follows that \(\sad\shand\lnot\saa\shleq\sac\) as \(\sab\shyol\sac\) and again \(\sad\shleq\saa\shor\sac\).
        By commutativity (\ref{it:commutativeor}) \(\sad\shleq\sac\shor\saa\) and thus \(\sad\shand\lnot\sac\shleq\saa\).
        Then \(\sad\shand\lnot\sac\shleq\sac\) since \(\saa\shyol\sac\) and again \(\sad\shleq\sac\shor\sac\).
        Conclude \(\sad\shleq\sac\) with \ref{it:ordouble}.

        \case{(\ref{it:orzeroneutral})}
        Note that \(\saa\shor\sazero\shyol\saa\) by \ref{it:orinj}.
        Conversely by the preliminary claim and \(\sazero\shyol\saa\) it follows that \(\saa\shor\sazero\shyol\saa\shor\saa\shyol\saa\) by \ref{it:ordouble}.
        The other equivalence follows from commutativity~\ref{it:commutativeor}.

        \case{\ref{it:orone}} Note \(\saone\shyol\saa\shor\saone\) by \irref{it:orinj} and \(\saa\shor\saone\shyol\saone\) by \irref{it:zeroone}.
        The other equivalence follows from commutativity~\ref{it:commutativeor}.
    \end{caselist}
\end{proofE}

\begin{propositionE}[Implication Adjunction][all end]\label{lem:impladjyon}
    In a \SemiBooleanalgebra \(\saa'\shimply\sab\shyol\saa\shimply\sab'\) if \(\saa\shyol\saa'\) and \(\sab\shyol\sab'\)
\end{propositionE}

\begin{proofE}
    Note that \(\lnot\saa'\shyol\lnot\saa\shyol\lnot\saa\shor\sab\) by \itrefof{lem:implyproperties}{it:yonedaneg} and \ref{it:orinj}.
    By \ref{it:adjor} also \(\sab\shyol\lnot\saa\shor\sab\).
    By \itrefof{lem:orproperties}{it:yonedaoradj} then \(\lnot\saa'\shor\lnot\sab\shyol\lnot\saa\shor\sab'\) and the claim follows from \classicality.
\end{proofE}

\textEnd{The usual distributivity and De Morgan laws are consequences of the characterization of implication as material implication in a \semiHeytingalgebra{}.}

\begin{propositionE}[Distributivity][all end]\label{distributive}
    In \semiBooleanalgebra{}s:
    \begin{enumerate}
        \item \(\saa\shand(\sab\shor\sac)\shiso(\saa\shand\sab)\shor(\saa\shand\sac)\)\label{it:anddistror}
        \item \(\saa\shor(\sab\shand\sac)\shiso(\saa\shor\sab)\shand(\saa\shor\sac)\)\label{it:ordistrand}
        \item \(\lnot(\saa\shand\sab)\shiso\lnot\saa\shor\lnot\sab\)\label{it:demorganliteral}
        \item \(\lnot(\saa\shor\sab)\shiso\lnot\saa\shand\lnot\sab\)\label{it:demorganliteraltwo}
        \item \((\saa\shand\sab)\shimply\sac\shiso(\saa\shimply\sac)\shor(\sab\shimply\sac)\)\label{it:demorganand}
        \item \((\saa\shor\sab)\shimply\sac\shiso(\saa\shimply\sac)\shand(\sab\shimply\sac)\)\label{it:demorganor}
    \end{enumerate}
\end{propositionE}

\begin{proofE}
    \begin{caselist}
        \case{(\ref{it:anddistror})}
        By \ref{it:andproj} note \(\saa\shand\sab\shyol\sab\) and \(\saa\shand\sac\shyol\sac\) with \itrefof{lem:orproperties}{it:yonedaoradj} yield \(\saa\shand\sab\shyol \sab\shor\sac\) and \(\saa\shand\sac\shyol \sab\shor\sac\).
        Again by \ref{it:andproj} then \(\saa\shand\sab\shyol\saa\) and \(\saa\shand\sac\shyol\saa\), so that by \itrefof{lem:andproperties}{it:andyonadj} also \(\saa\shand\sab\shyol\saa\shand (\sab\shor\sac)\) and \(\saa\shand\sac\shyol\saa\shand (\sab\shor\sac)\).
        Finally \((\saa\shand\sab)\shor(\saa\shand\sac)\shyol\saa\shand (\sab\shor\sac)\) follows with \itrefof{lem:orproperties}{it:yonedaoradj}.

        For the converse consider first \((\saa\shand\sab)\shor(\saa\shand\sac)\shleq\sad\) then by \ref{it:adjor} note \(\saa\shand\sab\shleq\sad\) and \(\saa\shand\sac\shleq\sad\).
        From \itrefof{lem:andproperties}{it:commutativeand} it follows that \(\sab\shand\saa\shleq\sad\) and \(\sac\shand\saa\shleq\sad\).
        With \ref{it:adjimply} then \(\sab\shleq\saa\shimply\sad\) and \(\sac\shleq\saa\shimply\sad\).
        By \ref{it:adjor} then \(\sab\shor\sac\shleq\saa\shimply\sad\), so that by \ref{it:adjimply} also \((\sab\shor\sac)\shand\saa\shleq\sad\).
        Again with \itrefof{lem:andproperties}{it:commutativeand} then \(\saa\shand (\sab\shor\sac)\shleq\sad\).

        Next consider \(\sad\shleq\saa\shand(\sab\shor\sac)\) then by \ref{it:adjand} also \(\sad\shleq\saa\) and \(\sad\shleq\sab\shor\sac\).
        By \classicality then \(\sad\shand\lnot\sab\shleq\sac\) and by \Cref{it:andproj} also \(\sad\shand\lnot\sab\shyol\sad\shleq\saa\).
        So by \itrefof{lem:andproperties}{it:andyonadj} then \(\sad\shand\lnot\sab\shleq\saa\shand\sac\).
        Applying \classicality again twice \(\sad\shand\lnot(\saa\shand\sac)\shleq\sab\) and \(\sad\shand\lnot(\saa\shand\sac)\shyol\sad\shleq\saa\) by \ref{it:andproj}.
        Hence \(\sad\shand\lnot(\saa\shand\sac)\shleq\saa\shand\sab\) by \ref{it:adjand}.
        Again by \classicality then \(\sad\shleq(\saa\shand\sac)\shor\saa\shand\sab\) and the equivalence follows with \itrefof{lem:andproperties}{it:commutativeand}.

        \case{(\ref{it:ordistrand})}
        First observe that as \(\sab\shand\sac\shyol\sab\) and \(\sab\shand\sac\shyol\sab\) by \ref{it:orinj}, by \itrefof{lem:orproperties}{it:ormonotonicity} also \(\saa\shor(\sab\shand\sac)\shyol\saa\shor\sab\) and  \(\saa\shor(\sab\shand\sac)\shyol\saa\shor\sac\).
        So by \itrefof{lem:andproperties}{it:yonedaoradj} also \(\saa\shor(\sab\shand\sac)\shyol(\saa\shor\sab)\shand(\saa\shor\sac)\).

        For the converse consider first \(\saa\shor(\sab\shand\sac)\shleq\sad\).
        Then \(\saa\shleq\sad\) and \(\sab\shand\sac\shleq\sad\) by \ref{it:adjor}.
        Then \(\sab\shleq\sac\shimply\sad\) by \ref{it:adjimply}.
        Since \(\saa\shand\sac\shyol\saa\shleq\sad\) by \ref{it:andproj} then also \(\saa\shleq\sac\shimply\sad\) by \ref{it:adjimply}.
        Hence by \ref{it:adjor} also \(\saa\shor\sab\shleq\sac\shimply\sad\).
        Using \ref{it:adjimply} twice with \itrefof{lem:andproperties}{it:commutativeand} it follows that \(\sac\shleq(\saa\shor\sab)\shimply\sad\).
        Since again \(\saa\shand(\saa\shor\sab)\shyol\saa\shleq\sad\) by \ref{it:andproj} then also \(\saa\shleq(\saa\shor\sab)\shimply\sad\) by \ref{it:adjimply}.
        Note by \ref{it:adjor} then \(\saa\shor\sac\shleq(\saa\shor\sab)\shimply\sad\).
        From \ref{it:adjimply} and \itrefof{lem:andproperties}{it:commutativeand} then \((\saa\shor\sab)\shand(\saa\shor\sac)\shyol\saa\shor(\sab\shand\sac)\)

        Next consider \(\sad\shleq(\saa\shor\sab)\shand(\saa\shor\sac)\), so that \(\sad\shleq\saa\shor\sab\) and \(\sad\shleq\saa\shor\sac\) by \ref{it:adjand}.
        By \classicality then \(\sad\shand\lnot\saa\shleq\sab\) and \(\sad\shand\lnot\saa\shleq\sac\).
        By \ref{it:adjor} then \(\sad\shand\lnot\saa\shleq\sab\shand\sac\).
        Again by \classicality \(\sad\shleq\saa\shor (\sab\shand\sac)\).

        \case{(\ref{it:demorganliteral})}
        Note the following equivalences
            {\renewcommand{\iff}{\text{ iff }}
                \begin{align*}
                    \sac\shleq\lnot(\saa\shand\sab)
                    \iff
                    \sac\shand\saa\shand\sab\shleq\sazero
                    \iff
                    \sac\shand\saa\shleq\lnot\sab
                    \iff
                    \sac\shleq\saa\shimply\lnot\sab
                    \iff
                    \sac\shleq\lnot\saa\shor\lnot\sab
                \end{align*}
                and
                \begin{align*}
                    \lnot\saa\shor\lnot\sab\shleq\sac
                    \iff
                    \lnot\saa\shleq\sac \text{ \& } \lnot\sab\shleq\sac
                    \iff
                    \lnot\sac\shleq\saa \text{ \& } \lnot\sac\shleq\sab
                    \iff
                    \lnot\sac\shleq\saa \shand\sab
                    \iff
                    \lnot(\saa \shand\sab)\shleq \sac
                \end{align*}

                \case{(\ref{it:demorganliteraltwo})}
                Note the following equivalences
                \begin{align*}
                    \sac\shleq\lnot(\saa\shor\sab)
                     & \iff
                    \sac\shand(\saa\shor\sab)\shleq\sazero
                    \iff
                    (\sac\shand\saa)\shor(\sac\shand\sab)\shleq\sazero
                    \\
                     &
                    \iff
                    \sac\shand\saa\shleq\sazero \text{ and } \sac\shand\sab\shleq\sazero
                    \iff
                    \sac\shleq\lnot\saa \text{ and } \sac\shleq\lnot\sab
                    \iff
                    \sac\shleq\lnot\saa\shand\lnot\sab
                \end{align*}
                and
                \begin{align*}
                    \lnot\saa\shor\lnot\sab\shleq\sac
                    \iff
                    \lnot\saa\shleq\sab\shor\sac
                    \iff
                    \saone\shleq\saa\shor\sab\shor\sac
                    \iff
                    \lnot(\saa\shor\sab)\shleq\sac
                \end{align*}
            }
        \case{(\ref{it:demorganand}) and (\ref{it:demorganor})} Immediate from \ref{it:demorganliteral} with classicality.
    \end{caselist}
\end{proofE}

\begin{example}\label{ex:kleene}
    An important example of a \emph{\classical} \semiHeytingalgebra{} is Kleene Logic~\(\kleenethree\) consisting of the three elements \(0,\midpoint,1\), and ordered by \(<\).
    The operations \(\shand\) and \(\shor\) are defined as the maximum and the minimum, respectively. Implication is defined by
    \[
        \saa\shimply\sab = (1-\saa)\shor\sab =\begin{cases}
            1                  & \text{if } \saa = 0       \\
            \sab\shor\midpoint & \text{if } \saa=\midpoint \\
            \sab               & \text{if } \saa=1
        \end{cases}
        \qquad \text{or } \lnot\saa = 1-\saa \text{ and }\saa\limply\sab=\lnot\saa\shor\sab.
    \]
    This is a minimal example of a \semiBooleanalgebra{} that is \emph{not} a Boolean algebra and plays a crucial `dualizing' role in the completeness proof in \Cref{sec:denotational}.
\end{example}

\begin{example}\label{ex:pairsemiba}
    For a set \(\setdoma\) the structure \(\settopowob[\setdoma]=\{(\selop,\selcl):\selop\subseteq\selcl\subseteq\setdoma\}\) with connectives
    \begin{align*}
         &
        \topzero=(\emptyset,\emptyset)
         &   &
        \sela \topand \selb=(\selopa\intersection\selopb,\selcla\intersection\selclb)
         &   &
        \sela \topor \selb=(\selopa\union\selopb,\selcla\union\selclb)
        \\
         &
        \topone=(\tops,\tops)
         &   &
        \sela\tople\selb\iff\selcla\subseteq\selopb
         &   &
        \topimply{\sela}{\selb}=(\setcomp{\selcla}\union\selopb,\setcomp{\selopa}\union\selclb).
    \end{align*}
    for \(\sel_i = ({\selop}_i,{\selcl}_i)\), is the powerset \semiBooleanalgebra{}
    The powerset \semiBooleanalgebra is isomorphic to the \SemiBooleanalgebra of \emph{partial functions} \(f:X\to\{\mathrm{True}, \mathrm{False}\}\) via the map \(f\mapsto (\{x: f(x)=\mathrm{True}\},\{x: f(x)\neq\mathrm{False}\})\).
\end{example}

A particularly interesting and promising observation is that Belnap's influential four-valued logic (for reasoning epistemically about possibly contradictory information) \cite[\S81]{belnap-entailment} also describes a \semiBooleanalgebra but not a Boolean algebra.
This evidences that \semiBooleanalgebra{s} are relevant and applicable algebraic structures and (by soundness) that \subrefllogic is a useful generalization of classical logic to previously studied domains.

In \semiBooleanalgebra{s}, unlike \semiHeytingalgebra{s}, \(\shiso\) is a congruence relation with respect to the operations (as shown in \Cref{lem:impladjyon}).
Consequently, \(\shiso\) can be taken to be literal equality.
A \SemiBooleanalgebra is \emph{antisymmetric} if \(\saa=\sab\) whenever \(\saa\shiso\sab\).

\begin{theoremE}[][]
    Every \semiBooleanalgebra is reflected in an antisymmetric one.
\end{theoremE}

\begin{proofE}
    Let \(\sbs'\) be the quotient of \(\sbs\) by the equivalence relation~\(\shiso\) and let \(\pi\) be the projection onto the quotient, which assigns every element \(\saa\) its \(\shiso\)-equivalence class.
    A \semiHeytingalgebra{} structure on \(\sbs'\) can be defined by
    \begin{align*}
         &
        \sazero=\pi(\sazero)
         &   &
        \sazero=\pi(\saone)
        \\
         &
        \pi(\saa)\shleq\pi(\sab)\Leftrightarrow\saa\shleq\sab
         &   &
        \pi(\saa)\shimply\pi(\sab)=\pi(\saa\shimply\sab)
        \\
         &
        \pi(\saa)\shand\pi(\sab)=\pi(\saa\shand\sab)
         &   &
        \pi(\saa)\shor\pi(\sab)=\pi(\saa\shor\sab)
    \end{align*}
    The monotonicity of  \itrefof{lem:andproperties}{it:yonedaandmonotonicity}, \itrefof{lem:orproperties}{it:yonedaormonotonicity} and \Cref{lem:impladjyon} ensures that this is well-defined.
    Moreover, it is clearly an order-reflecting homomorphism of \semiHeytingalgebra{}s.
\end{proofE}
\SemiBooleanalgebra{s} generalize Boolean algebras up to reflexivity:

\begin{theoremE}
    An antisymmetric \semiBooleanalgebra is a Boolean algebra iff \(\shleq\) is reflexive.
\end{theoremE}

\begin{proofE}
    By \Cref{prop:reflexivesemiheyting} it suffices to show that Boolean algebras are exactly the Heyting algebras satisfying \(\saa\limply\sab=\lnot\saa\shor\sab\) and \(\lnot\saa\limply\sab=\saa\shor\sab\).
    Any Heyting algebra satisfying this also clearly satisfies \(\lnot\lnot\saa=\lnot\saa\shimply\sazero=\saa\shor\sazero=\saa\) and therefore is a Boolean algebra.
    Conversely, the equalities hold in any Boolean algebra.
\end{proofE}

\subsection{Semialgebraic Models}\label{sec:semisemantics}

\semiHeytingalgebra{}s provide concrete semantics for \subrefllogic.
Formulas of propositional logic are interpreted with respect to \emph{\semiHeytingalgebra{} models} \((\shs,\heyint)\) which consist of a \semiHeytingalgebra{} \(\shs\) with a valuation \(\heyint:\patoms\to\shs\) for atomic propositions.

\begin{definition}[Semantics]
    For every formula \(\fml\) of propositional logic, the \emph{semantics} of \(\fml\) in the \semiHeytingalgebra{} model \((\shs,\heyint)\) is defined inductively:
    \begin{align*}
         &
        \sasem{\heyint}{\ftrue} = \saone
         &   &
        \sasem{\heyint}{\patom} = \heyintof{\patom}
         &   &
        \sasem{\heyint}{\fml\land\fmlb} = \sasem{\heyint}{\fml} \shand \sasem{\heyint}{\fmlb}
        \\
         &
        \sasem{\heyint}{\ffalse} = \sazero
         &   &
        \sasem{\heyint}{\fml\limply\fmlb} = \sasem{\heyint}{\fml} \shimply \sasem{\heyint}{\fmlb}
         &   &
        \sasem{\heyint}{\fml\lor\fmlb} = \sasem{\heyint}{\fml} \shor \sasem{\heyint}{\fmlb}.
    \end{align*}
\end{definition}

For a sequence of formulas \(\fmlset\), write \(\sasem{\heyint}{\fmlset}\) for \(\bigshand_{\fmlb\in\fmlset}\sasem{\heyint}{\fmlb}\).
(This is well-defined by \yonedaassociativity \ref{it:associativeand}.)
As usual, the conjunction over the empty set is \(\saone\).
A sequent \(\lsequent{\fmlset}{\fml}\) is said to be \emph{true} in a \semiHeytingalgebra model \((\shs,\heyint)\) if \(\sasem{\heyint}{\fmlset}\shleq\sasem{\heyint}{\fml}\).
In this case, write \(\satrue{\heyint}{\lsequent{\fmlset}{\fml}}\) and if \(\fmlset=\emptyset\) write \(\satrue{\heyint}{\fml}\).
A formula \(\fml\) is a \emph{\semiHeytingalgebra{} consequence} \(\saconseq{\relfmlset}{\fml}\) (\emph{semiboolean-semantic consequence} \(\saconseqbool{\relfmlset}{\fml}\)) of a set of sequents \(\relfmlset\) iff \(\satrue{\heyint}{\fml}\) for all \semiHeytingalgebra{} (\semiBooleanalgebra) models \((\shs,\heyint)\) with \(\satrue{\heyint}{\relfmlset}\).

\textEnd{
    To ensure that \semiHeytingalgebra{}s properly capture the theory of \subrefllogic, it is important that the definition of a \semiHeytingalgebra{} is restrictive enough, to ensure that it captures no more than \subrefllogic, which is guaranteed by soundness:}

\begin{theoremE}[Algebraic Soundness][all end]\label{prop:semiheytingsound}
    The calculus of \subrefllogic is sound for \semiHeytingalgebra{} semantics: if \(\subreflprov{\relfmlset}{\fml}\) then \(\saconseq{\relfmlset}{\fml}\).
\end{theoremE}
\begin{proofE}{
        It is shown that in any \semiHeytingalgebra{}, the conclusion of a rule is true, if all its assumptions are true.
        For readability the model annotation is dropped and instead of \(\sasem{\heyint}{\fml}\) we write \(\sasemsimple{\heyint}{\fml}\).

        \begin{caselist}
            \case{\irref{andL}}
            Let \(\saa=\sasemsimple{\heyint}{\fml_1}\shand(\sasemsimple{\heyint}{\fml_2}\shand\sasemsimple{\heyint}{\fmlset})\)
            Observe
            \[\sasemsimple{\heyint}{\fml_1\land\fml_2}\shand\sasemsimple{\heyint}{\fmlset}\shiso\saa\shyol\saa\shand\saa\shyol
                \sasemsimple{\heyint}{\fml_1}\shand\saa
                \shiso\sasemsimple{\heyint}{\fml_1}\shand\sasemsimple{\heyint}{\fml_1\land\fml_2}\shand\sasemsimple{\heyint}{\fmlset}
            \]
            by \ref{it:associativeand}, \ref{it:anddouble} and property \itrefof{lem:andproperties}{it:yonedaandmonotonicity} with \(\saa\shyol\sasemsimple{\heyint}{\fml_1}\) by \ref{it:andproj}.
            Similarly, using additionally the commutativity property \itrefof{lem:andproperties}{it:commutativeand}.

            \case{\irref{andR}} Immediate from \ref{it:adjand}.

            \case{\irref{orL}}  Assume \(\sasemsimple{\heyint}{\fml}\shand\sasemsimple{\heyint}{\fmlset} \shleq\sasemsimple{\heyint}{\fmlc}\) and \(\sasemsimple{\heyint}{\fmlb}\shand\sasemsimple{\heyint}{\fmlset} \shleq\sasemsimple{\heyint}{\fmlc}\).
            By \ref{it:adjimply} it follows that \(\sasemsimple{\heyint}{\fml}\shleq\sasemsimple{\heyint}{\fmlset}\shimply\sasemsimple{\heyint}{\fmlc}\) and \(\sasemsimple{\heyint}{\fmlb}\shleq\sasemsimple{\heyint}{\fmlset}\shimply\sasemsimple{\heyint}{\fmlc}\).
            Hence by \ref{it:adjor} then \(\sasemsimple{\heyint}{\fml}\shor\sasemsimple{\heyint}{\fmlb}\shleq\sasemsimple{\heyint}{\fmlset}\shimply\sasemsimple{\heyint}{\fmlc}\).
            By definition of the semantics of \(\lor\) and \ref{it:adjimply} this yields \(\sasemsimple{\heyint}{\fml\lor\fmlb}\shand\sasemsimple{\heyint}{\fmlset}\shleq\sasemsimple{\heyint}{\fmlc}\).

            \case{\irref{orR}} Immediate from \ref{it:orinj}.

            \case{\irref{implyL}} Let \(\saa=(\sasemsimple{\heyint}{\fml}\shimply\sasemsimple{\heyint}{\fmlb})\shand\sasemsimple{\heyint}{\fmlset}\) and assume \(\saa\shleq\sasemsimple{\heyint}{\fml}\) and \(\sasemsimple{\heyint}{\fmlb}\shand\sasemsimple{\heyint}{\fmlset}\shleq\sasemsimple{\heyint}{\fmlc}\).
            Observe that by \ref{it:adjimply} also \(\sasemsimple{\heyint}{\fmlb}\shleq\sasemsimple{\heyint}{\fmlset}\shimply\sasemsimple{\heyint}{\fmlc}\).
            Note next that
            \[\saa\shyol\saa\shand\saa\shyol(\sasemsimple{\heyint}{\fml}\shimply\sasemsimple{\heyint}{\fmlb})\shand\saa\shyol\sasemsimple{\heyint}{\fmlb}\]
            where the first inequality is by \ref{it:anddouble} and the second follows from \ref{it:andproj} together with monotonicity property \itrefof{lem:andproperties}{it:yonedaandmonotonicity} and the last equality is by \ref{it:implev} using \(\saa\shleq\sasemsimple{\heyint}{\fml}\).
            From this inequality it follows that \(\saa\shleq\sasemsimple{\heyint}{\fmlset}\shimply\sasemsimple{\heyint}{\fmlc}\)
            and from \ref{it:adjimply} then \(\saa\shand\sasemsimple{\heyint}{\fmlset}\shleq\sasemsimple{\heyint}{\fmlc}\).
            Then \(\saa\shleq\sasemsimple{\heyint}{\fmlc}\) follows as
            \(\saa\shyol\saa\shand\saa\shyol\saa\shand \sasemsimple{\heyint}{\fmlset}\)
            again from \ref{it:anddouble}, \ref{it:andproj} and \itrefof{lem:andproperties}{it:yonedaandmonotonicity}.

            \case{\irref{implyR}} Immediate by \ref{it:adjimply}.
        \end{caselist}

        Next the \(\relfmlset\)-assumption axiom and rules are considered.

        \begin{caselist}
            \case{\irref{assumption}} Immediate.

            \case{\irref{assumptionL}} Note that by the side condition assumption \(\sasemsimple{\heyint}{\fmlset}\shleq\sasemsimple{\heyint}{\fml}\) and thus \(\sasemsimple{\heyint}{\fmlset}\shyol\sasemsimple{\heyint}{\fml}\).
            Hence \(\sasemsimple{\heyint}{\fmlset}\shyol\sasemsimple{\heyint}{\fmlset}\shand \sasemsimple{\heyint}{\fmlset}\shyol \sasemsimple{\heyint}{\fml} \shand \sasemsimple{\heyint}{\fmlset}\) by \ref{it:anddouble} and \itrefof{lem:andproperties}{it:yonedaandmonotonicity}.
            So whenever \(\sasemsimple{\heyint}{\fml} \shand \sasemsimple{\heyint}{\fmlset}\shleq\sasemsimple{\heyint}{\fmlb}\) then \(\sasemsimple{\heyint}{\fmlset} \shleq \sasemsimple{\heyint}{\fmlb}\).

            \case{\irref{assumptionR}} Assume \(\sasemsimple{\heyint}{\fmlset}\shleq\sasemsimple{\heyint}{\fml}\) and thus \(\sasemsimple{\heyint}{\fmlset}\shyol\sasemsimple{\heyint}{\fml}\).
            Hence \(\sasemsimple{\heyint}{\fmlset}\shyol\sasemsimple{\heyint}{\fmlset}\shand \sasemsimple{\heyint}{\fmlset}\shyol \sasemsimple{\heyint}{\fml} \shand \sasemsimple{\heyint}{\fmlset}\) by \ref{it:anddouble}  and \itrefof{lem:andproperties}{it:yonedaandmonotonicity}.
            By the side condition \(\sasemsimple{\heyint}{\fml} \shand \sasemsimple{\heyint}{\fmlset}\shleq\sasemsimple{\heyint}{\fmlb}\), so that \(\sasemsimple{\heyint}{\fmlset} \shleq \sasemsimple{\heyint}{\fmlb}\).
        \end{caselist}
    }
\end{proofE}

\textEnd{
    Note that admissibility of the rules \irref{weak}, \irref{contraction}, \irref{cut} and the inverse rule, does not mean that the rules are sound in the sense that the truth of a premise entails the truth of the conclusion.
    Admissibility in a sound calculus itself only ensures that the conclusion is true, whenever the premises are \emph{provable}.
    However, the rules are still sound.}

\begin{propositionE}[Structural  Soundness][all end]\label{prop:structuralsoundness}
    The structural rules \irref{weak}, \irref{contraction}, \irref{cut} and the inverse rules are sound, i.e. in every \semiHeytingalgebra{} model, in which the premises are true, so is the conclusion.

    The same rules and the inverse classical rules \irref{orRclassicalminus}, \irref{implyLclassicalminusone} and \irref{implyLclassicalminustwo} are sound for \semiBooleanalgebra{s}.
\end{propositionE}

\begin{proofE}{Again proceed with a fixed \semiHeytingalgebra{} model \((\shs,\heyint)\) and drop the model annotation.

        \begin{caselist}
            \case{\irref{weak}} Immediate from \ref{it:andproj}.

            \case{\irref{contraction}} Assume \(\sasemsimple{\heyint}{\fml}\shand\sasemsimple{\heyint}{\fml}\shand\sasemsimple{\heyint}{\fmlset}\shleq\sasemsimple{\heyint}{\fmlc}\).
            Then by \ref{it:adjimply} it follows that \(\sasemsimple{\heyint}{\fml}\shand\sasemsimple{\heyint}{\fml}\shleq\sasemsimple{\heyint}{\fmlset}\shimply\sasemsimple{\heyint}{\fmlc}\) and then by \ref{it:anddouble} also \(\sasemsimple{\heyint}{\fml}\shleq\sasemsimple{\heyint}{\fmlset}\shimply\sasemsimple{\heyint}{\fmlc}\).
            Again by \ref{it:adjimply} then \(\sasemsimple{\heyint}{\fml}\shand\sasemsimple{\heyint}{\fmlset}\shleq\sasemsimple{\heyint}{\fmlc}\).

            \case{\irref{cut}}
            Assume \(\sasemsimple{\heyint}{\fmlset}\shleq\sasemsimple{\heyint}{\fml}\) and \(\sasemsimple{\heyint}{\fmlset}\shand\sasemsimple{\heyint}{\fml}\shleq\sasemsimple{\heyint}{\fmlc}\).
            Note that \(\sasemsimple{\heyint}{\fmlset}\shyol\sasemsimple{\heyint}{\fmlset}\shand\sasemsimple{\heyint}{\fmlset}\shyol\sasemsimple{\heyint}{\fml}\shand\sasemsimple{\heyint}{\fmlset}\) by \ref{it:anddouble} and \itrefof{lem:andproperties}{it:yonedaandmonotonicity}.
            Hence \(\sasemsimple{\heyint}{\fmlset}\shleq\sasemsimple{\heyint}{\fml}\)

            \case{\irref{andLminus}}
            Assume \(\sasemsimple{\heyint}{\fml_1\land\fml_2}\shand \sasemsimple{\heyint}{\fmlset}\shleq\sasemsimple{\heyint}{\fmlc}\).
            By definition of the semantics and \irref{it:adjimply} also \(\sasemsimple{\heyint}{\fml_1}\shand\sasemsimple{\heyint}{\fml_2}\shleq\sasemsimple{\heyint}{\fmlset}\shimply\sasemsimple{\heyint}{\fmlc}\).
            As \(\sasemsimple{\heyint}{\fml_i}\shyol\sasemsimple{\heyint}{\fml_1}\land\sasemsimple{\heyint}{\fml_2}\) by \ref{it:andproj} it follows that \(\sasemsimple{\heyint}{\fml_1}\shleq\sasemsimple{\heyint}{\fmlset}\shimply\sasemsimple{\heyint}{\fmlc}\).
            By \irref{it:adjand} then \(\sasemsimple{\heyint}{\fml_1}\shand(\sasemsimple{\heyint}{\fml_1}\shand\sasemsimple{\heyint}{\fml_2})\shleq\sasemsimple{\heyint}{\fmlset}\shimply\sasemsimple{\heyint}{\fmlc}\) and by \irref{it:adjimply} then \(\sasemsimple{\heyint}{\fml_1}\shand(\sasemsimple{\heyint}{\fml_1}\shand\sasemsimple{\heyint}{\fml_2})\shand\sasemsimple{\heyint}{\fmlset}\shleq\sasemsimple{\heyint}{\fmlc}\).
            Immediate from the definition of the semantics of \(\land\) (and associativity \ref{it:associativeand}).

            \case{\irref{andRminus}} Immediate from \ref{it:andproj}.

            \case{\irref{orLminus}} Assume \((\sasemsimple{\heyint}{\fml_1}\shor\sasemsimple{\heyint}{\fml_2})\shand\sasemsimple{\heyint}{\fmlset}\shleq\sasemsimple{\heyint}{\fmlc}\).
            Then by \irref{it:adjimply} also \(\sasemsimple{\heyint}{\fml_1}\shor\sasemsimple{\heyint}{\fml_2}\shleq\sasemsimple{\heyint}{\fmlset}\shimply\sasemsimple{\heyint}{\fmlc}\) and by \irref{it:adjor} then \(\sasemsimple{\heyint}{\fml_i}\shleq\sasemsimple{\heyint}{\fmlset}\shimply\sasemsimple{\heyint}{\fmlc}\).
            By \irref{it:adjimply} it follows that \(\sasemsimple{\heyint}{\fml_i}\shand\sasemsimple{\heyint}{\fmlset}\shleq\sasemsimple{\heyint}{\fmlc}\).

            \case{\irref{orRminus}} Follows immediately from \ref{it:ordouble}.

            \case{\irref{implyRminus}} Immediate from \irref{it:adjimply}.
        \end{caselist}}

    Soundness of the classical inverse rules is similar.
\end{proofE}

\begin{propositionE}[\Classical Soundness][all end]\label{prop:classicalsemiheytingsound}
    \Classicalsubrefl is sound for \semiBooleanalgebra models, i.e. if \(\classsubreflprov{\relfmlset}{\fml}\) then \(\saconseqbool{\relfmlset}{\fml}\).
\end{propositionE}

\begin{proofE}
    Again it is shown that in any \semiBooleanalgebra{}, the conclusions of the rules \irref{implyLclassical} and \irref{orRclassical} is true, if the respective assumptions are true.
    For readability the model annotation is dropped again.

    \begin{caselist}
        \case{\irref{orRclassical}}
        Suppose \(\sasemsimple{\heyint}{\fmlset}\shand \lnot\sasemsimple{\heyint}{\fml_j}\shleq\sasemsimple{\heyint}{\fml_i}\), then by \ref{it:adjimply} \(\sasemsimple{\heyint}{\fmlset}\shleq \lnot\sasemsimple{\heyint}{\fml_j}\shimply\sasemsimple{\heyint}{\fml_i}\) and by \classicality then \(\sasemsimple{\heyint}{\fmlset}\shleq \sasemsimple{\heyint}{\fml_j}\shor\sasemsimple{\heyint}{\fml_i}\).

        \case{\irref{implyLclassical}}
        Let \(\saa=\lnot\sasemsimple{\heyint}{\fmlc}\shand(\sasemsimple{\heyint}{\fml}\shimply\sasemsimple{\heyint}{\fmlb})\shand\sasemsimple{\heyint}{\fmlset}\) and assume \(\saa\shleq\sasemsimple{\heyint}{\fml}\) and \(\sasemsimple{\heyint}{\fmlb}\shand\sasemsimple{\heyint}{\fmlset}\shleq\sasemsimple{\heyint}{\fmlc}\).
        Observe that by \ref{it:adjimply} also \(\sasemsimple{\heyint}{\fmlb}\shleq\sasemsimple{\heyint}{\fmlset}\shimply\sasemsimple{\heyint}{\fmlc}\).
        Note next that
        \[\saa\shyol\saa\shand\saa\shyol(\sasemsimple{\heyint}{\fml}\shimply\sasemsimple{\heyint}{\fmlb})\shand\saa\shyol\sasemsimple{\heyint}{\fmlb}\]
        where the first inequality is by \ref{it:anddouble} and the second follows from \ref{it:andproj} together with monotonicity property \itrefof{lem:andproperties}{it:yonedaandmonotonicity} and the last equality is by \ref{it:implev} using \(\saa\shleq\sasemsimple{\heyint}{\fml}\).
        From this inequality it follows that \(\saa\shleq\sasemsimple{\heyint}{\fmlset}\shimply\sasemsimple{\heyint}{\fmlc}\)
        and from \ref{it:adjimply} then \(\saa\shand\sasemsimple{\heyint}{\fmlset}\shleq\sasemsimple{\heyint}{\fmlc}\).
        Then \(\saa\shleq\sasemsimple{\heyint}{\fmlc}\) follows as
        \(\saa\shyol\saa\shand\saa\shyol\saa\shand \sasemsimple{\heyint}{\fmlset}\)
        again from \ref{it:anddouble}, \ref{it:andproj} and \itrefof{lem:andproperties}{it:yonedaandmonotonicity}.
        By \ref{it:adjimply}, \classicality and \ref{it:anddouble} then \(\sasemsimple{\heyint}{\fml}\shimply\sasemsimple{\heyint}{\fmlb}\shand\sasemsimple{\heyint}{\fmlset}\shleq \lnot \sasemsimple{\heyint}{\fmlc} \shimply\sasemsimple{\heyint}{\fmlc} \shiso \sasemsimple{\heyint}{\fmlc}\shor \sasemsimple{\heyint}{\fmlc}\shyol\sasemsimple{\heyint}{\fmlc}\).
    \end{caselist}
\end{proofE}

Completeness ensures that \SemiHeytingalgebra{}s and \semiBooleanalgebra{}s capture the reasoning of \subrefllogic precisely.
The importance of algebraic soundness and completeness is not the proof itself, but to confirm that  the correct algebraic counterparts for \subrefllogic are \emph{found} in \semiHeytingalgebra{s} and \semiBooleanalgebra{s}.

\textEnd{
    For a set of formulas \(\relfmlset\) the ordering \(\fml\fmlless\fmlb\) on formulas is defined by \(\subreflprov{\relfmlset}{\fml\limply\fmlb}\).
    Note that by \irref{implyR} and \irref{implyRminus} relation \(\fml\fmlless\fmlb\) holds if there is a derivation of \(\lsequent{\fml}{\fmlb}\) from \(\relfmlset\).
    Similarly define the ordering \(\fml\fmlcless\fmlb\) if the implication \(\fml\limply\fmlb\) is derivable from \(\relfmlset\) in \classicalsubrefl.}

\begin{theoremE}[][all end]\label{thm:structureissemi}
    The set of propositional formulas \(\allfmls\) with the structure \(\allfmlssha=(\allfmls,\fmlless,\ffalse,\ftrue,\land,\lor,\limply)\) is a \semiHeytingalgebra{}.

    The structure \(\allfmlssba\) is a \semiBooleanalgebra{}.
\end{theoremE}

\begin{proofE}\renewcommand{\minpwidth}{0.3\textwidth}
    {\renewcommand{\linferPremissSeparation}{\hspace{0.2cm}}
        The defining properties of \Cref{def:semiheytingalgebra} are verified.

        \begin{caselist}
            \case{\ref{it:transitive}} Immediate from \irref{cut}.

            \case{\ref{it:associativeand}}
            By the following derivations it follows that \((\fml\land\fmlb)\land\fmlc\shiso\fml\land(\fmlb\land\fmlc)\):

            \begin{minipage}[t]{0.4\textwidth}
                \begin{sequentdeduction}[array]
                    \linfer[andL+andL]
                    {
                        \linfer[andLminus]
                        {
                            \linfer[andLminus]
                            {\lsequent{(\fml\land\fmlb)\land\fmlc}{\fmld}}
                            {\lsequent{\fml\land\fmlb,\fmlc}{\fmld}}
                        }
                        {\lsequent{\fml,\fmlb,\fmlc}{\fmld}}
                    }
                    {\lsequent{\fml\land(\fmlb\land\fmlc)}{\fmld}}
                \end{sequentdeduction}
            \end{minipage}
            \qquad
            \begin{minipage}[t]{0.4\textwidth}
                \begin{sequentdeduction}[array]
                    \linfer[andL+andL]
                    {
                        \linfer[andLminus]
                        {
                            \linfer[andLminus]
                            {\lsequent{\fml\land(\fmlb\land\fmlc)}{\fmld}}
                            {\lsequent{\fml,\fmlb\land\fmlc}{\fmld}}
                        }
                        {\lsequent{\fml,\fmlb,\fmlc}{\fmld}}
                    }
                    {\lsequent{(\fml\land\fmlb)\land\fmlc}{\fmld}}
                \end{sequentdeduction}
            \end{minipage}

            \bigskip

                \begin{sequentdeduction}[array]
                    \linfer[andR]
                    {
                        \linfer[andRminus]
                        {\linfer[andRminus]
                            {\lsequent{\fmld}{(\fml\land\fmlb)\land\fmlc}}
                            {\lsequent{\fmld}{\fml\land\fmlb}}}
                        {\lsequent{\fmld}{\fml}}
                        !
                        \linfer[andR]
                        {
                            \linfer[andRminus]
                            {\lsequent{\fmld}{(\fml\land\fmlb)\land\fmlc}}
                            {
                                \linfer[andRminus]
                                {\lsequent{\fmld}{\fml\land\fmlb}}
                                {\lsequent{\fmld}{\fmlb}}
                            }
                            !
                            \linfer[andRminus]
                            {\lsequent{\fmld}{(\fml\land\fmlb)\land\fmlc}}
                            {\lsequent{\fmld}{\fmlc}}
                        }
                        {\lsequent{\fmld}{\fmlb\land\fmlc}}
                    }
                    {\lsequent{\fmld}{\fml\land(\fmlb\land\fmlc)}}
                \end{sequentdeduction}

            \bigskip

                \begin{sequentdeduction}[array]
                    \linfer[andR]
                    {
                        \linfer[andR]
                        {
                            \linfer[andRminus]
                            {\lsequent{\fmld}{\fml\land(\fmlb\land\fmlc)}}
                            {\lsequent{\fmld}{\fml}}
                            !
                            \linfer[andRminus]
                            {
                                \linfer[andRminus]
                                {\lsequent{\fmld}{\fml\land(\fmlb\land\fmlc)}}
                                {\lsequent{\fmld}{\fmlb\land\fmlc}}
                            }
                            {\lsequent{\fmld}{\fmlb}}
                        }
                        {\lsequent{\fmld}{\fml\land\fmlb}}
                        !
                        \linfer[andRminus]
                        {\linfer[andRminus]
                            {\lsequent{\fmld}{\fml\land(\fmlb\land\fmlc)}}
                            {\lsequent{\fmld}{\fmlb\land\fmlc}}}
                        {\lsequent{\fmld}{\fmlc}}
                    }
                    {\lsequent{\fmld}{(\fml\land\fmlb)\land\fmlc}}
                \end{sequentdeduction}

            \case{\ref{it:zeroone}} From the derivations

            \begin{minipage}[t]{\minpwidth}
                \begin{sequentdeduction}[array]
                    \linfer[implyR]
                    {
                        \linfer[]
                        {
                            \lclose{}
                        }
                        {\lsequent{\ffalse}{\fml}}
                    }
                    {\lsequent{}{\ffalse\limply\fml}}
                \end{sequentdeduction}
            \end{minipage}
            \begin{minipage}[t]{\minpwidth}
                \begin{sequentdeduction}[array]
                    \linfer[implyR]
                    {
                        \linfer[]
                        {
                            \lclose{}
                        }
                        {\lsequent{\fml}{\ftrue}}
                    }
                    {\lsequent{}{\fml\limply\ftrue}}
                \end{sequentdeduction}
            \end{minipage}
            \qquad\qquad\qquad

            \noindent observe that \(\ffalse\fmlless\fml\) and \(\fml\fmlless\ftrue\) for all formulas \(\fml\).

            \case{\ref{it:adjand}}
            Consider the following three derivation

            \begin{minipage}[t]{\minpwidth}
                \begin{sequentdeduction}[array]
                    \linfer[andR]
                    {
                        \lsequent{\fmlc}{\fml}
                        !
                        \lsequent{\fmlc}{\fml}
                    }
                    {\lsequent{\fmlc}{\fml\land\fmlb}}
                \end{sequentdeduction}
            \end{minipage}
            \begin{minipage}[t]{\minpwidth}
                \begin{sequentdeduction}[array]
                    \linfer[andRminus]
                    {
                        \lsequent{\fmlc}{\fml\land\fmlb}
                    }
                    {\lsequent{\fmlc}{\fml}}
                \end{sequentdeduction}
            \end{minipage}
            \begin{minipage}[t]{\minpwidth}
                \begin{sequentdeduction}[array]
                    \linfer[andRminus]
                    {
                        \lsequent{\fmlc}{\fml\land\fmlb}
                    }
                    {\lsequent{\fmlc}{\fml}}
                \end{sequentdeduction}
            \end{minipage}

            \noindent
            By the left derivation \(\fmlc\fmlless\fml\land\fmlb\) if \(\fmlc\fmlless\fml\) and \(\fmlc\fmlless\fmlb\).
            The converse implication holds by the other derivations

            \case{\ref{it:anddouble}}
            By the derivations

            \begin{minipage}[t]{\minpwidth}
                \begin{sequentdeduction}[array]
                    \linfer[andRminus]
                    {
                        \lsequent{\fmlc}{\fml}
                    }
                    {\lsequent{\fmlc}{\fml\land\fml}}
                \end{sequentdeduction}
            \end{minipage}
            \begin{minipage}[t]{\minpwidth}
                \begin{sequentdeduction}[array]
                    \linfer[contraction]
                    {
                        \linfer[andLminus]
                        {
                            \lsequent{\fml\land\fml}{\fmlc}
                        }
                        {\lsequent{\fml,\fml}{\fmlc}}
                    }
                    {\lsequent{\fml}{\fmlc}}
                \end{sequentdeduction}
            \end{minipage}
            \qquad\qquad\qquad

            \noindent if \(\subreflprov{\relfmlset}{\fmlc\limply (\fml\land\fml)}\) then \(\subreflprov{\relfmlset}{\fmlc\limply \fml}\) and if \(\subreflprov{\relfmlset}{\fml\limply\fmlc}\) then \(\subreflprov{\relfmlset}{(\fml\land\fml)\limply\fmlc}\).
            In other words \(\fml\shyol\fml\land\fml\).

            \case{\ref{it:andproj}} Note that by the derivations

            \begin{minipage}[t]{\minpwidth}
                \begin{sequentdeduction}[array]
                    \linfer[andRminus]
                    {
                        \lsequent{\fmlc}{\fml\land\fmlb}
                    }
                    {\lsequent{\fmlc}{\fml}}
                \end{sequentdeduction}
            \end{minipage}
            \begin{minipage}[t]{\minpwidth}
                \begin{sequentdeduction}[array]
                    \linfer[andL]
                    {
                        \linfer[weak]
                        {
                            \lsequent{\fml}{\fmlc}
                        }
                        {\lsequent{\fml,\fmlb}{\fmlc}}
                    }
                    {\lsequent{\fml\land\fmlb}{\fmlc}}
                \end{sequentdeduction}
            \end{minipage}
            \qquad\qquad\qquad

            \noindent
            clearly \(\fml\land\fmlb\shyol\fml\).
            Similarly show \(\fml\land\fmlb\shyol\fmlb\).

            \case{\ref{it:adjor}}
            Consider the three derivations

            \begin{minipage}[t]{\minpwidth}
                \begin{sequentdeduction}[array]
                    \linfer[orL]
                    {
                        \lsequent{\fml}{\fmlc}
                        !
                        \lsequent{\fmlb}{\fmlc}
                    }
                    {\lsequent{\fml\lor\fmlb}{\fmlc}}
                \end{sequentdeduction}
            \end{minipage}
            \begin{minipage}[t]{\minpwidth}
                \begin{sequentdeduction}[array]
                    \linfer[orLminus]
                    {
                        \lsequent{\fml\lor\fmlb}{\fmlc}
                    }
                    {\lsequent{\fml}{\fmlc}}
                \end{sequentdeduction}
            \end{minipage}
            \begin{minipage}[t]{\minpwidth}
                \begin{sequentdeduction}[array]
                    \linfer[orLminus]
                    {
                        \lsequent{\fml\lor\fmlb}{\fmlc}
                    }
                    {\lsequent{\fmlb}{\fmlc}}
                \end{sequentdeduction}
            \end{minipage}

            \noindent
            where \(\fml\lor\fmlb\fmlless\fmlc\) if \(\fmlb\fmlless\fmlc\) and \(\fml\fmlless\fmlc\) holds by the left derivation. And the converse implication holds by the other two derivations.

            \case{\ref{it:orinj}}
            By the following two derivations

            \begin{minipage}[t]{\minpwidth}
                \begin{sequentdeduction}[array]
                    \linfer[andRminus]
                    {
                        \lsequent{\fmlc}{\fml}
                    }
                    {\lsequent{\fmlc}{\fml\lor\fmlb}}
                \end{sequentdeduction}
            \end{minipage}
            \begin{minipage}[t]{\minpwidth}
                \begin{sequentdeduction}[array]
                    \linfer[orLminus]
                    {
                        \lsequent{\fml\lor\fmlb}{\fmlc}
                    }
                    {\lsequent{\fml}{\fmlc}}
                \end{sequentdeduction}
            \end{minipage}
            \qquad\qquad\qquad

            \noindent it follows that \(\fml\shyol\fml\lor\fmlb\).
            Similarly show \(\fmlb\shyol\fml\lor\fmlb\).

            \case{\ref{it:ordouble}}
            By the following two derivations

            \begin{minipage}[t]{\minpwidth}
                \begin{sequentdeduction}[array]
                    \linfer[orRminus]
                    {
                        \lsequent{\fmlc}{\fml\lor\fml}
                    }
                    {\lsequent{\fmlc}{\fml}}
                \end{sequentdeduction}
            \end{minipage}
            \begin{minipage}[t]{\minpwidth}
                \begin{sequentdeduction}[array]
                    \linfer[orL]
                    {
                        \lsequent{\fml}{\fmlc}
                        !
                        \lsequent{\fml}{\fmlc}
                    }
                    {\lsequent{\fml\lor\fml}{\fmlc}}
                \end{sequentdeduction}
            \end{minipage}
            \qquad\qquad\qquad

            \noindent so \(\fml\lor\fml\shyol\fml\).

            \case{\ref{it:adjimply}}
            By the following derivations

            \begin{minipage}[t]{\minpwidth}
                \begin{sequentdeduction}[array]
                    \linfer[andL]
                    {
                        \linfer[implyRminus]
                        {
                            \lsequent{\fml}{\fmlb\limply\fmlc}
                        }
                        {\lsequent{\fml,\fmlb}{\fmlc}}
                    }
                    {\lsequent{\fml\land\fmlb}{\fmlc}}
                \end{sequentdeduction}
            \end{minipage}
            \begin{minipage}[t]{\minpwidth}
                \begin{sequentdeduction}[array]
                    \linfer[implyR]
                    {
                        \linfer[andLminus]
                        {
                            \lsequent{\fml\land\fmlb}{\fmlc}
                        }
                        {\lsequent{\fml,\fmlb}{\fmlc}}
                    }
                    {\lsequent{\fml}{\fmlb\limply\fmlc}}
                \end{sequentdeduction}
            \end{minipage}
            \qquad\qquad\qquad

            \case{\ref{it:implw}}
            Suppose that there is a \(\relfmlset\) derivation of \(\lsequent{\fml}{\fmlb}\).
            By the derivations

            \begin{minipage}[t]{\minpwidth}
                \begin{sequentdeduction}[array]
                    \linfer[implyR]
                    {
                        \linfer[andR]
                        {
                            \linfer[weak]
                            {
                                \lsequent{\fmld}{\fmlc}
                            }
                            {\lsequent{\fmld,\fml}{\fmlb}}
                            !
                            \linfer[weak]
                            {
                                \lsequent{\fml}{\fmlc}
                            }
                            {\lsequent{\fmld,\fml}{\fmlb}}
                        }
                        {\lsequent{\fmld,\fml}{\fmlb\land\fmlc}}
                    }
                    {\lsequent{\fmld}{\fml\limply(\fmlb\land\fmlc)}}
                \end{sequentdeduction}
            \end{minipage}
            \begin{minipage}[t]{0.5\textwidth}
                \begin{sequentdeduction}[array]
                    \linfer[implyLminus]
                    {
                        \lsequent{\fml}{\fmlb}
                        !
                        \lsequent{\fml\limply(\fmlb\land\fmlc)}{\fmld}
                    }
                    {\lsequent{\fmlc}{\fmld}}
                \end{sequentdeduction}
            \end{minipage}

            \noindent conclude \(\fmlc\shyol\fml\limply(\fmlb\land\fmlc)\).

            \case{\ref{it:implev}}
            Suppose that there is a \(\relfmlset\) derivation of \(\lsequent{\fml}{\fmlb}\).
            By the derivations

            \begin{minipage}[t]{\minpwidth}
                \begin{sequentdeduction}[array]
                    \linfer[andL]
                    {
                        \linfer[implyL]
                        {
                            \lsequent{\fml}{\fmlb}
                            !
                            \lsequent{\fmlc}{\fmld}
                        }
                        {\lsequent{\fmlb\limply\fmlc,\fml}{\fmld}}
                    }
                    {\lsequent{(\fmlb\limply\fmlc)\land\fml}{\fmld}}
                \end{sequentdeduction}
            \end{minipage}
            \qquad
            \begin{minipage}[t]{0.5\textwidth}
                \begin{sequentdeduction}[array]
                    \linfer[cut]{
                        \linfer[andRminus]{
                            \lsequent{\fmld}{(\fml\limply\fmlc)\land\fml}
                        }
                        {\lsequent{\fmld}{\fml}}
                        !
                        \linfer[implyRminus]{
                            \linfer[andRminus]{
                                \lsequent{\fmld}{(\fml\limply\fmlc)\land\fml}
                            }
                            {\lsequent{\fmld}{\fml\limply\fmlc}}
                        }
                        {\lsequent{\fmld,\fml}{\fmlc}}
                    }
                    {\lsequent{\fmld}{\fmlc}}
                \end{sequentdeduction}
            \end{minipage}

            \noindent it follows that \((\fml\limply\fmlc)\land\fml\shyol\fmlc\).
        \end{caselist}
        This concludes the verification that \(\allfmlssha\) is a \semiHeytingalgebra{}.

        It remains to check that \(\allfmlssba\) is a \classical \semiHeytingalgebra{}.
        Then \(\lnot\fml\lor\fmlb\shiso\fml\shimply\fmlb\) by the derivations

        \begin{minipage}[t]{\minpwidth}
            \begin{sequentdeduction}[array]
                \linfer[implyR]
                {
                    \linfer[raa]
                    {
                        \linfer[implyRminus]
                        {
                            \linfer[orRclassicalminus]
                            {
                                \lsequent{\fmlc}{\lnot\fml\lor\fmlb}
                            }
                            \lsequent{\fmlc,\lnot\fmlb}{\lnot\fml}
                        }
                        {\lsequent{\fmlc,\fml,\lnot\fmlb}{\ffalse}}
                    }
                    {\lsequent{\fmlc,\fml}{\fmlb}}
                }
                {\lsequent{\fmlc}{\fml\shimply\fmlb}}
            \end{sequentdeduction}
        \end{minipage}
        \qquad
        \begin{minipage}[t]{0.5\textwidth}
            \begin{sequentdeduction}[array]
                \linfer[orRclassical]{
                    \linfer[implyR]{
                        \linfer[implyL]{
                            \linfer[implyRminus]{
                                \lsequent{\fmlc}{\fml\limply\fmlb}
                            }
                            {
                                \lsequent{\fmlc,\fml}{\fmlb}
                            }
                            !
                            \linfer{\lclose
                            }
                            {
                                \lsequent{\fmlc,\ffalse}{\ffalse}
                            }
                        }
                        {\lsequent{\fmlc,\lnot\fmlb, \fml}{\ffalse}}
                    }
                    {\lsequent{\fmlc,\lnot\fmlb}{\lnot\fml}}
                }
                {\lsequent{\fmlc}{\lnot\fml\lor\fmlb}}
            \end{sequentdeduction}
        \end{minipage}

        \bigskip

            \begin{sequentdeduction}[array]
                \linfer[implyLclassical]
                {
                    \linfer[weak]
                    {
                        \linfer[implyLclassical]
                        {
                            \linfer[orLminus]
                            {
                                \lsequent{\lnot\fml\lor\fmlb}{\fmlc}
                            }
                            {\lsequent{\lnot\fml}{\fmlc}}
                            !
                            \linfer{\lclose}{
                                \lsequent{\ffalse}{\fml}
                            }
                        }
                        {\lsequent{\lnot\fmlc}{\fml}}
                    }
                    {\lsequent{\lnot\fmlc,\fml\shimply\fmlb}{\fml}}
                    !
                    \linfer[orLminus]{
                        \lsequent{\lnot\fml\lor\fmlb}{\fmlc}
                    }
                    {\lsequent{\fmlb}{\fmlc}}
                }
                {\lsequent{\fml\shimply\fmlb}{\fmlc}}
            \end{sequentdeduction}

        \bigskip

            \begin{sequentdeduction}[array]
                \linfer[orL]{
                    \linfer[implyLclassical]{
                        \linfer[implyLclassicalminustwo]{
                            \lsequent{\fml\limply\fmlb}{\fmlc}
                        }
                        {
                            \lsequent{\lnot\fmlc,\lnot\fml}{\fml}
                        }
                        !
                        \linfer{\lclose
                        }
                        {
                            \lsequent{\ffalse}{\fmlc}
                        }
                    }
                    {\lsequent{\lnot\fml}{\fmlc}}
                    !
                    \linfer[implyLclassicalminusone]{
                        \lsequent{\fml\limply\fmlb}{\fmlc}
                    }
                    {
                        \lsequent{\fmlb}{\fmlc}
                    }
                }
                {\lsequent{\lnot\fml\lor\fmlb}{\fmlc}}
            \end{sequentdeduction}

        And by the following derivations \(\fml\lor\fmlb\shiso\lnot\fml\shimply\fmlb\):

        \begin{minipage}[t]{\minpwidth}
            \begin{sequentdeduction}[array]
                \linfer[implyR]
                {
                    \linfer[implyLclassical]
                    {
                        \linfer[orRminus]
                        {
                            \lsequent{\fmlc}{\fml\lor\fmlb}
                        }
                        {\lsequent{\fmlc,\lnot\fmlb}{\fml}}
                        !
                        \linfer{\lclose}{
                            \lsequent{\fmlc,\ffalse}{\fmlb}
                        }
                    }
                    {\lsequent{\fmlc,\lnot\fml}{\fmlb}}
                }
                {\lsequent{\fmlc}{\lnot\fml\shimply\fmlb}}
            \end{sequentdeduction}
        \end{minipage}
        \hfill
        \begin{minipage}[t]{0.5\textwidth}
            \begin{sequentdeduction}[array]
                \linfer[orRclassical]{
                    \linfer[implyRminus]{
                        \lsequent{\fml}{\lnot\fml\limply\fmlb}
                    }
                    {\lsequent{\fmlc,\lnot\fml}{\fmlb}}
                }
                {\lsequent{\fmlc}{\fml\lor\fmlb}}
            \end{sequentdeduction}
        \end{minipage}

        \bigskip

            \begin{sequentdeduction}[array]
                \linfer[implyLclassical]
                {
                    \linfer[weak]
                    {
                        \linfer[implyR]
                        {
                            \linfer[implyL]
                            {
                                \linfer[orRminus]
                                {
                                    \lsequent{\fml\lor\fmlb}{\fmlc}
                                }
                                {\lsequent{\fml}{\fmlc}}
                                !
                                \linfer{\lclose}{
                                    \lsequent{\ffalse}{\ffalse}
                                }
                            }
                            {\lsequent{\lnot\fmlc,\fml}{\ffalse}}
                        }
                        {\lsequent{\lnot\fmlc}{\lnot\fml}}
                    }
                    {\lsequent{\lnot\fmlc,\fml\shimply\fmlb}{\lnot\fml}}
                    !
                    \linfer[orLminus]{
                        \lsequent{\fml\lor\fmlb}{\fmlc}
                    }
                    {\lsequent{\fmlb}{\fmlc}}
                }
                {\lsequent{\lnot\fml\shimply\fmlb}{\fmlc}}
            \end{sequentdeduction}

            \bigskip

            \begin{sequentdeduction}[array]
                \linfer[orL]{
                    \linfer[raa]{
                        \linfer[implyRminus]{
                            \linfer[implyLclassicalminustwo]{
                                \lsequent{\lnot\fml\limply\fmlb}{\fmlc}
                            }
                            {
                                \lsequent{\lnot\fmlc}{\lnot\fml}
                            }
                        }
                        {
                            \lsequent{\fml,\lnot\fmlc}{\ffalse}
                        }
                    }
                    {\lsequent{\fml}{\fmlc}}
                    !
                    \linfer[implyLclassicalminusone]{
                        \lsequent{\lnot\fml\limply\fmlb}{\fmlc}
                    }
                    {
                        \lsequent{\fmlb}{\fmlc}
                    }
                }
                {\lsequent{\fml\lor\fmlb}{\fmlc}}
            \end{sequentdeduction}
    }

    \noindent
    This completes the proof of soundness.
\end{proofE}

\begin{theoremE}[][all end]\label{thm:semiheytingcountermodel}
    For any formula \(\fml\) such that \(\notsubreflprov{\relfmlset}{\fml}\) there is a \semiHeytingalgebra{} model \((\allfmlssha,\heyint)\) with \(\satrue[\allfmlssha]{\heyint}{\relfmlset}\) and \(\sasem[\allfmlssha]{\heyint}{\fml}\not\shiso\saone\).

    If \(\notclasssubreflprov{\relfmlset}{\fml}\) then there is a \semiBooleanalgebra model \((\allfmlssba,\heyint)\) with \(\satrue[\allfmlssba]{\heyint}{\relfmlset}\) and \(\sasem[\allfmlssba]{\heyint}{\fml}\not\shiso\saone\).
\end{theoremE}

\begin{proofE}
    Let \(\heyintof[\heyint]{\patom}=\patom\in\allfmls\) and note by induction that \(\sasem[\allfmlssha]{\heyint}{\fml}=\fml\).
    Observe that by the deduction
    \begin{sequentdeduction}[array]
        \linfer[implyR]
        {
            \linfer[weak]
            {
                \lsequent{} {\fmlb}
            }
            {\lsequent{\ftrue} {\fmlb}}
        }
        {\lsequent{} {\ftrue\limply\fmlb}}
    \end{sequentdeduction}
    also \(\subreflprov{\relfmlset}{\ftrue\limply\fmlb}\) and hence \(\ftrue\fmlless\fmlb\) for all \(\fmlb\in\relfmlset\).
    Thus \(\satrue[\allfmlssha]{\heyint}{\relfmlset}\).
    If \(\saone\shleq\sasem[\allfmlssha]{\heyint}{\fml}=\fml\) were true then \(\subreflprov{\relfmlset}{\saone\limply\fml}\) and by
    \begin{sequentdeduction}[array]
        \linfer[]
        {
            \linfer[implyRminus]
            {
                \lsequent{} {\ftrue\limply\fmlb}
            }
            {\lsequent{\ftrue} {\fmlb}}
        }
        {\lsequent{} {\fmlb}}
    \end{sequentdeduction}
    then \(\subreflprov{\relfmlset}{\fml}\) contradicting the assumption.

    The case for \semiBooleanalgebra{}s is analogous.
\end{proofE}

\begin{theoremE}[Algebraic Completeness][]\label{thm:algebraiccomp}
    The calculus of \subrefllogic is complete for \semiHeytingalgebra{}s, and \classicalsubrefl is complete for \semiBooleanalgebra{}s.
    \[\subreflprov{\relfmlset}{\fml}\;\Leftrightarrow\;\saconseq{\relfmlset}{\fml} \qquad \text{and}\qquad \classsubreflprov{\relfmlset}{\fml}\;\Leftrightarrow\;\saconseqbool{\relfmlset}{\fml}.\]
\end{theoremE}

\begin{proofE}
    Soundness is immediate from \Cref{prop:semiheytingsound} and \Cref{prop:classicalsemiheytingsound}.

    Suppose \(\notsubreflprov{\relfmlset}{\fml}\).
    Then by \Cref{thm:semiheytingcountermodel} there is a \semiHeytingalgebra{} model with \(\satrue[\allfmlssha]{\heyint}{\relfmlset}\) and \(\sasem[\allfmlssha]{\heyint}{\fml}\not\shiso\saone\).
    Then by the assumption \(\saconseq{\relfmlset}{\fml}\) note that \(\saone\shleq\sasem[\allfmlssha]{\heyint}{\fml}\) is a contradiction.

    The case for \classicalsubrefl is analogous.
\end{proofE}
In fact, the proof of soundness shows local soundness: In any (Heyting) \semiBooleanalgebra in which the premise of one of the rules is true, the conclusion is true. Moreover, this also holds for many admissible rules, such as the inverse rules.
Completeness is proved by constructing a \emph{syntactic \semiHeytingalgebra} (see \Cref{sec:proofs}).

\section{Completeness for Denotational Semantics}\label{sec:denotational}

To understand \subrefllogic intuitively, it is useful to consider formulas as denoting the sets of states that satisfy the formula.
This concretely realizes the \emph{abstract} syntactic calculus into concrete set-theoretic semantics, similar to Tarski's semantics for first-order logic.
Such semantics for \subrefllogic are presented in this section, and completeness is proved in a general way that is amenable to extension to first-order logic.

\subsection{\Classical Denotational Semantics}\label{sec:classical}

To interpret \subrefllogic in usual denotational set models of logic, it is necessary to provide two interpretations for atomic propositions.
A set model specifies where a formula is \emph{absolutely true} and where it is only \emph{marginally true}.

\begin{definition}[Set-Model]
    A \emph{set-model} for \subrefllogic consists of a domain \(\setdoma\) together with valuations \(\setinterprob(\patom)\subseteq\setinterpalm(\patom)\) for all \(\patom\in\patoms\).
\end{definition}
Intuitively, the valuation assigns a region \(\setinterprob(\patom)\) to any \emph{atom} \(\patom\), where it is \emph{robustly true}, and a region \(\setinterpalm(\patom)\), where it is \emph{almost true}.
For example, in a topological space \(\setdoma\), a natural interpretation can define marginal truth as the topological closure \(\setinterpalm(\patom)=\closure{\setinterprob(\patom)}\).
However, \subrefllogic can be interpreted more generally in any set-model to describe robust and marginal truth.
The semantics extends from atoms to propositions as follows:
\begin{align*}
     &
    \setsem{\setinterp}{\patom} = \setinterprob(\patom)
     &   &
    \setsem{\setinterp}{\ffalse} = \setsemcl{\setinterp}{\ffalse}=\emptyset
     &   &
    \setsemcl{\setinterp}{\fml\land\fmlb} = {\setsemcl{\setinterp}{\fml}} \intersection {\setsemcl{\setinterp}{\fmlb}}
     &   &
    \setsemcl{\setinterp}{\fml\lor\fmlb} = {\setsemcl{\setinterp}{\fml}} \union {\setsemcl{\setinterp}{\fmlb}}
    \\
     &
    \setsemcl{\setinterp}{\patom} = \setinteralmpof{\patom}
     &   &
    \setsem{\setinterp}{\ftrue} = \setsemcl{\setinterp}{\ftrue} = \tops
     &   &
    \setsem{\setinterp}{\fml\land\fmlb} = {\setsem{\setinterp}{\fml}} \intersection {\setsem{\setinterp}{\fmlb}}
     &   &
    \setsem{\setinterp}{\fml\lor\fmlb} = {\setsem{\setinterp}{\fml}} \union {\setsem{\setinterp}{\fmlb}}
    \\
     &
    \setsem{\setinterp}{\fml\limply\fmlb}={\setcomp{({\setsemcl{\setinterp}{\fml}})}} \union \setsem{\setinterp}{\fmlb}\span\span
     &   &
    \setsemcl{\setinterp}{\fml\limply\fmlb}={\setcomp{({\setsem{\setinterp}{\fml}})}} \union \setsemcl{\setinterp}{\fmlb}.
    \span\span
\end{align*}
As for atoms, \(\setsem{\setinterp}{\fml}\) denotes the states where \(\fml\) is robustly true and \(\setsemcl{\setinterp}{\fml}\) the set of states where \(\fml\) is almost true.
Note that \(\setsem{\setinterp}{\fml}\subseteq\setsemcl{\setinterp}{\fml}\) for any formula.
This makes the semantics a natural tolerant/strict semantics for propositional logic.
Formulas that are assumed (in contravariant position) need to be tolerantly true, while formulas that are shown (in covariant position) need to be strictly true.
For a set \(\fmlset\) of formulas, let \(\setsemcl{\setinterp}{\fmlset}=\Intersection_{\fml\in\fmlset}\setsemcl{\setinterp}{\fml}\).
A sequent \(\lsequent{\fmlset}{\fml}\) is \emph{true} (\(\settrue{\setinterp}{\lsequent{\fmlset}{\fml}}\)) in a set model \((\setdoma,\setinterp)\) if \(\setsemcl{\setinterp}{\fmlset}\subseteq\setsem{\setinterp}{\fml}\).
A formula \(\fml\) is a \emph{semantic consequence} \(\setconseq{\relfmlset}{\fml}\) of a set of sequents \(\relfmlset\) if \(\settrue{\setinterp}{\fml}\) for all set models \((\setdoma,\setinterp)\) with \(\settrue{\setinterp}{\relfmlset}\).

From the definition of set semantics, it is clear that the set-interpretation coincides with the \SemiBooleanalgebra semantics in the powerset \semiBooleanalgebra{}.
Any set model \((\setdoma,\setinterp)\) is a \semiHeytingalgebra{} model on \(\settopowob[\setdoma]\) with \(\sasem[{\settopowob[\setdoma]}]{\heyintderived{\setinterp}}{\fml}=(\setsem{\setinterp}{\fml},\setsemcl{\setinterp}{\fml})\).

\begin{theoremE}[Soundness]\label{thm:setsoundness}
    The calculus of \classicalsubrefl is sound for set semantics: if \(\subreflprov{\relfmlset}{\fml}\) then \(\setconseq{\relfmlset}{\fml}\).
\end{theoremE}
\begin{proofE}\label{proof:setsoundness}
    Consider a set model \((\setdoma,\setinterp)\) with \(\settrue{\setinterp}{\relfmlset}\).
    Then \(\satrue[{\settopowob[\tops]}]{\setinterp}{\relfmlset}\) and \(\satrue[{\settopowob[\tops]}]{\setinterp}{\fml}\) by \Cref{prop:semiheytingsound}.
    Hence, \(\settrue{\setinterp}{\fml}\) as any \((\setdoma,\setinterp)\) set model is a \semiHeytingalgebra{} model on \(\settopowob[\setdoma]\) with \(\sasem[{\settopowob[\setdoma]}]{\heyintderived{\setinterp}}{\fml}=(\setsem{\setinterp}{\fml},\setsemcl{\setinterp}{\fml})\).
\end{proofE}
The converse, i.e., completeness for set semantics of \classicalsubrefl will be shown in \Cref{thm:setcompleteness}.
As mentioned in \Cref{ex:pairsemiba}, these semantics can be interpreted equivalently as partial \(2\)-valued functions, and the next two examples show how the set interpretation of \subrefllogic captures the partial function interpretation.

\begin{example}[Program Termination]
    For a computer program \(A\), the robust truth set \(\setsem{\setinterp}{\fml}\) can be interpreted as the set of inputs the program accepts and \(\setsemcl{\setinterp}{\fml}\) as the set of inputs which program \(A\) rejects.
    Implication can be understood as conditional acceptance, i.e., \(\fml\limply\fmlb\) means `\(\fmlb\) accepts all inputs that \(\fml\) does not reject'.
    In other words, program~\(\fmlb\) is a safe version of \(\fml\).
    In this context, it is critical that \(\fml\limply\fmlb\) is not interpreted as `\(\fmlb\) accepts all inputs that \(\fml\) accepts'.
    In fact, no program \(\fmlc\) could realize this property due to the undecidability of the halting problem. (If \(\fmlb\) always rejects, then \(\fmlc\) would have to accept exactly those inputs where \(\fml\) does not accept.)
    The identity principle forces \emph{totality}.
\end{example}

\begin{example}[Logic of Logic]
    There are interesting mathematical structures, whose logic is naturally of this form.
    Suppose the atomic propositions denote formulas in some other logical calculus.
    The denotation \(\setsem{\setinterp}{\fml}\) can be the set of assumptions from which a formula \(\fml\) is provable, while the denotation \(\setsemcl{\setinterp}{\fml}\) is the set of formulas which are consistent with \(\fml\).
    Then, the interpretation of implication can be understood as: a formula proves \(\fml\limply\fmlb\) iff it is inconsistent with \(\fml\) or proves \(\fmlb\).
    Here the formula \(\fml\limply\fml\) denotes exactly the formulas that ensure \(\fml\) is either provable or inconsistent.
    The identity principle forces \emph{completeness}.
\end{example}

\subsection{Representation of \SemiBooleanAlgebra{}s}\label{sec:representation}\label{sec:setcompsection}

\SemiBooleanalgebra{}s are complex algebraic structures.
This section presents a representation theorem for these structures, expressing them concretely as powerset \semiBooleanalgebra{}s, similar to Stone duality \cite{alma990003267950106470}.
The key ingredient here is the Kleene three-element \semiBooleanalgebra{} \(\kleenethree\), which serves (almost) as a dualizing object \cite{alma990003267950106470}.
Because of the nonreflexivity of \semiBooleanalgebra{s}, this requires a careful extension and in particular a new notion of a semiultrafilter.
The definition of a filter for a \semiHeytingalgebra is as usual:

\begin{definition}
    A subset \(\filter\subseteq\shs\) of a \semiHeytingalgebra{} is a \emph{filter}~if  \(\sazero\notin\filter\) and
    \begin{enumerate}
        \item if \(\saa\in\filter\) and \(\saa\shyol\sab\) , then \(\sab\in\filter\)
        \item if \(\saa\in\filter\) and \(\sab\in\filter\), then \(\saa\shand\sab\in\filter\)
    \end{enumerate}
    A filter is \emph{prime} if \(\saa\in\filter\) or \(\sab\in\filter\), whenever \(\saa\shor\sab\in\filter\).
\end{definition}

Note that filters are upward closed with respect to the \yonedaorder and not the nonreflexive order \(\shleq\).
Since \subrefllogic does not necessarily satisfy the law of the excluded middle, i.e., \(\lnot\saa\shor\saa\), it is impossible to use an ultrafilter construction for the representation theorem.
The key ingredient instead are semiultrafilters.

\begin{definition}[Semiultrafilter]
    For a subset \(X\subseteq\shs\) of a \semiHeytingalgebra let~\(X' = \{\saa : \mexists{\sab\in\filter}\sab\shleq\saa\}\).
    A filter \(\filter\subseteq\shs\) is a \emph{semiultrafilter} if \(\saa\notin\filter\) implies \(\lnot\saa\in\filter'\).
\end{definition}

Note that in a semiultrafilter, \(\saa\notin\filter\) if and only if \(\lnot\saa\in\filter'\).
Intuitively, a filter \(\filter\) contains all propositions with truth value \(\geq\midpoint\), whereas a semiultrafilter should ensure that every proposition is assigned a nonzero truth value or its negation is assigned a truth value \(1\).
It suffices to ensure that its negation is a \(\shleq\)-consequence of another filter element.

\begin{lemmaE}[Filter Existence][]\label{lem:ultrafilterexistence}
    Let \(\shs\) be a Heyting (Boolean) semialgebra and \(\saa\not\shleq\sab\).
    There is a prime filter (semiultrafilter) \(\filter\) such that \(\saa\in\filter\), \(\sab\notin\filter'\).
\end{lemmaE}

\begin{proofE}
    Let \(\filterp=\{\sac : \saa\shyol\sac\}\) and note that \(\filterp\) is a filter such that \(\saa\in\filtermax\) and \(\sab\notin\filtermax'\).

    By Zorn's lemma, there is a maximal filter \(\filtermax\) such that \(\saa\in\filtermax\) and \(\sab\notin\filtermax'\).
    It remains to show that this filter is prime.
    So consider \(\sac\shor\sad\in\filtermax\) and define
    \begin{align*}
        \filterb_\sac & =\{x: y \in \filtermax\mand y\shand \sac \shyol x\}
        \qquad        &
        \filterb_\sad & =\{x: y \in \filtermax\mand y\shand \sad \shyol x\}
    \end{align*}
    We claim that for either \(\sab\notin\filterb_\sac'\) or \(\sab\notin\filterb_\sad'\).
    If this were not the case, there would be \(y_1,y_2\in \filtermax\) such that \(y_1\shand\sac\shleq\sab\) and \(y_2\shand\sad\shleq\sab\).
    Consequently  \((y_1\shand y_2)\shand(\sac\shor\sad)\shleq\sab\) and as \((y_1\shand y_2)\shand(\sac\shor\sad)\in\filtermax\) by definition \(\sab\in \filtermax'\), which contradicts \(\sab\notin\filtermax^1\).
    Without loss of generality, \(\filterb_\sac\) is a filter with \(\sab\notin\filterb_\sac\) and \(\filtermax\subseteq\filterb_\sac\).
    So by maximality \(\sac\in\filterb_\sac \subseteq \filtermax\).

    Now suppose \(\shs\) is a \semiBooleanalgebra and first show that \(\lnot\sab\in\filter\).
    Note that \(\filterb_{\lnot{\sab}}\), as defined above, is a filter with \(\sab\notin\filterb_{\lnot\sab}\).
    To see this suppose for a contradiction that there were \(y\in\filtermax\) such that \(y\shand\lnot\sab\shleq\sab\) then
    \(y\shand\lnot\sab\shyol y\shand\lnot\sab\shand\lnot\sab\shyol \sazero\) ant thus \(y\shand\lnot\sab\shleq\sazero\).
    Then \(y\shand\shleq\lnot\lnot\sab\) by \ref{it:adjimply} and by \classicality this would mean \(y\shleq\sab\) contradicting \(\sab\notin\filter'\).
    Thus by maximality \(\lnot\sab\in\filter\).
    To prove that \(\filter\) is a semiultrafilter by the contrapositive assume \(\lnot\sac\notin\filter'\) and note that
    Again by maximality it suffices to show that \(\sab\notin\filterb_{\lnot{\sac}}'\).
    If this were the case, there would be \(y\in\filtermax\) such that \(y\shand\lnot\sac\shleq\sab\).
    By \itrefof{lem:implyproperties}{it:moveing}, equivalently \(y\shand\lnot\sab\shleq\sac\).
    Since \(y\shand\lnot\sab\in\filtermax\) then \(\lnot\sac\notin\filter'\)
\end{proofE}
The existence of semiultrafilters suffices to show that enough homomorphism into \(\kleenethree\) exist to separate elements.

\begin{lemmaE}[Separation][]\label{lem:separation}
    Let \(\shs\) be a \semiBooleanalgebra{} with \(\saa\not\shleq\sab\).
    There is a \semiBooleanalgebra{} homomorphism \(f:\shs\to\kleenethree\) with \(f(\saa)\geq \midpoint\geq f(\sab)\).
\end{lemmaE}

\begin{proofE}
    Pick a semiultrafilter \(\filter\) such that \(\saa\in\filter\) and \(\sab\notin\filter'\) by \Cref{lem:ultrafilterexistence}.
    Let
    \[
        f(\sac) =
        \begin{cases}
            0         & \text{if } \sac\notin\filter \\
            1         & \text{if } \sac\in \filter'  \\
            \midpoint & \text{otherwise}
        \end{cases}
    \]
    Note \(f(\saa)\geq \midpoint\geq f(\sab)\), since \(\saa\in\filter\) and \(\sab\notin\filter'\).
    It is not hard to see that \(f\) is a \semiBooleanalgebra{} homomorphism.
\end{proofE}
Interestingly, the lemma does \emph{not} guarantee that \(f(\saa) \neq f(\sab)\).
Nonetheless \Cref{lem:separation} provides sufficient separation.
Ordinarily, homomorphism are only required to preserve \(\shleq\), however the separating homomorphism of \Cref{lem:separation} also preserves~\(\shyol\).

\begin{propositionE}[][]\label{prop:setissemi}
    There is a  \semiBooleanalgebra{} isomorphism
    \(\{f :\tops\to\kleenethree\}\cong \settopowob[\setdoma]\).
\end{propositionE}
\begin{proofE}
    Function \(f :\tops\to\kleenethree\) bijectively homomorphically maps to \((f^{-1}(\{1\}),f^{-1}(\{1,\midpoint\}))\).
\end{proofE}
In fact, \(\settopowfunc:\opcategory{\catset}\to\catsba\) is a contravariant functor acting on functions \(f:X\to Y\) as \(\settopowmorph{f} = f^{-1}\times f^{-1}\) and is naturally isomorphic to \(\Homs{\catset}{-}{\kleenethree}\) when each \(\Homs{\catset}{X}{\kleenethree}\) is equipped with the pointwise \semiBooleanalgebra{} structure.
A homomorphism \(f\) of \semiHeytingalgebra{}s is \emph{\yonedainjective} if \(\saa\shiso\sab\) whenever \(f(a)\shiso f(b)\).

\begin{theoremE}[\SemiBooleanAlgebra{} Representation][]\label{thm:representation}
    For any \semiBooleanalgebra{} \(\shs\), there is a set \(\lathomalg\) and a \yonedainjective homomorphism of \semiBooleanalgebra{}s \(\iota:\shs\to\settopowob[\lathomalg]\).
\end{theoremE}

\begin{proofE}
    For a \semiBooleanalgebra{} \(\shs\) let \(\lathomalg=\Homs{\catsba}{\shs}{\kleenethree}\) be the set of all homomorphisms \(\lathom:\shs\to\threealg\).
    The set of functions \(\lathomalg\to\kleenethree\) carries a pointwise \semiBooleanalgebra{} structure.
    For every \(\saa\in\shs\) let \(\iota(\saa):\lathomalg\to\kleenethree\) be defined by \(\iota(\saa)(f)=f(\saa)\) for all \(f\in\lathomalg\).
    Note that this is a \semiBooleanalgebra{} homomorphism.

    For \yonedainjectivity consider \(\saa\not\shyol\sab\).
    Then there are \(\sac\shleq\saa\) and \(\sac\not\shleq\sab\) and by \Cref{lem:separation} there is a homomorphism \(f\in\lathomalg\) such that \(f(\sac)\geq \midpoint\geq f(\sab)\).
    Then \[\iota(\saa)(f)=f(\saa)>f(\sac)\geq \midpoint \geq f(\sab)=\iota(\sab)(f)\]
    Define \(g:\lathomalg\to\kleenethree\) so that \(g(f)=\midpoint\) and \(g(h)=0\) for \(h\neq f\).
    Then \(g\shleq \iota(a)\) and \(g\not\shleq\iota(b)\) and thus \(\iota(a)\not\shiso\iota(b)\).

    It is easy to see that \(\settopowob[\lathomalg]\) as defined in \Cref{ex:pairsemiba} is isomorphic to the pointwise \semiBooleanalgebra of functions \(\lathomalg\to\kleenethree\).
    Thus \(\iota\) can be viewed as a \yonedainjective homomorphism of \semiBooleanalgebra{}s \(\iota:\shs\to\settopowob[\lathomalg]\) by \Cref{prop:setissemi}.
\end{proofE}
The representation theorem shows that all \semiBooleanalgebra{}s behave essentially like the powerset \semiBooleanalgebra{}.
This provides a powerful means to study the abstract algebraic structure concretely.
The proof indicates the fundamental importance of the three-element Kleene logic \(\kleenethree\) for logics without identity.

In categorical terms, \Cref{thm:representation} shows that \(\Homs{\catsba}{-}{\kleenethree}:\catsba\to\opcategory{\catset}\) is left adjoint to \(\settopowfunc\).
In fact, the adjunction is essentially the duality given by the dualizing object \(\kleenethree\):
\[
    \Homs{\catsba}{-}{\kleenethree} \dashv \Homs{\opcategory{\catset}}{-}{\kleenethree} \cong \settopowfunc(-).
\]
By \Cref{prop:setissemi}, the functors \(\Homs{\opcategory{\catset}}{-}{\kleenethree}\) and \(\settopowfunc\) are essentially the same, when the Hom-set is equipped with the pointwise \semiBooleanalgebra{} structure.
The embedding \(\iota\) is the unit of the adjunction, and the critical part is showing that it is \yonedainjective.

Finally, the representation theorem yields a strong completeness result for \subrefllogic.
Provability in \classicalsubrefl is characterized completely by validity in powerset \semiBooleanalgebra{}s and equivalently in terms of robust semantics for formulas.

\begin{theoremE}[Set Completeness]\label{thm:setcompleteness}
    The calculus of \classicalsubrefl is sound and complete for set semantics: \(\setconseq{\relfmlset}{\fml}\) if and only if \(\classsubreflprov{\relfmlset}{\fml}\).
\end{theoremE}
\begin{proofE}
    Suppose for the contrapositive \(\notclasssubreflprov{\relfmlset}{\fml}\).
    By \Cref{thm:semiheytingcountermodel} there is a \semiBooleanalgebra model \((\shs,\heyint)\) with \(\satrue[\shs]{\heyint}{\relfmlset}\) and \(\sasem[\shs]{\heyint}{\fml}\not\shiso\saone\).
    By \Cref{thm:representation} the model \(\shs\) injects homomorphically \(\iota:\shs\to \settopowob[\lathomalg]\), so that \(\satrue[{\settopowob[\shs]}]{J}{\relfmlset}\) where \(J(\patom)=\iota(\heyint(\patom))\).
    As \(\iota\) is \yonedainjective \(\sasem[{\settopowob[\shs]}]{J}{\fml}\not\shiso\saone\) and then \(\notsetconseq{\relfmlset}{\fml}\).
\end{proofE}

\section{Categorical Semantics}\label{sec:categorical}

\Subrefllogic describes ordinary propositional logic without reliance on the axiom of identity.
To understand this phenomenon structurally, this section presents a categorical interpretation of \subrefllogic.
However, as any category necessarily contains identity morphisms, an interpretation in any kind of category is impossible, without inadvertently reintroducing reflexivity.
Instead, \subrefllogic can be interpreted in \emph{semicategories}, which are categories except that identity morphisms are not required to exist.
The difficulty when interpreting logic in semicategories lies in properly defining the typical categorical limits (products, coproducts and exponentials) in semicategories, which in ordinary categories rely critically on identity morphisms, without inadvertently reintroducing identities.
This section shows that these limits can be analyzed structurally \emph{without} identity morphisms and that identity morphisms are not crucial to the structural integrity of logic.
To do this uniformly, this section generalizes adjunctions to the identity-free setting.

The results from this section specialize to \semiHeytingalgebra{s} \(\shs\), which are semicategories where the objects are the elements \(\saa\in\shs\) and there is an arrow \(a\to b\) iff \(\saa\shleq\sab\).
The lack of identity morphisms corresponds directly to the failure of reflexivity.
The semicategorical interpretation explains the structure behind \semiHeytingalgebra{s}.

\subsection{Adjunctions in Semicategories}

The key insight to defining adjunctions in semicategories is to shift the perspective from the morphisms themselves to their behaviour under composition.
While this perspective is ordinarily equivalent by Yoneda's lemma, this is not the case in semicategories where Yoneda's lemma does not hold.
The following definition describes a category that captures the relative behaviour of morphisms in a semicategory.

\begin{definition}
    For a semicategory \(\semicat\), the \yonedacatname \(\derivedcat[\semicat]\) has the same objects as \(\semicat\) and has as morphisms \(\dercatmorph:\catobj\to\catobjb\) pairs \(\dercatmorphpre,\dercatmorphpost\) of natural transformations
    \begin{align*}
        \dercatmorphpre:\Covhom{\semicat}{\catobjb}\naturalto\Covhom{\semicat}{\catobj}   
         &  &
        \dercatmorphpost:\Conthom{\semicat}{\catobj}\naturalto\Conthom{\semicat}{\catobjb}, 
    \end{align*}
    such that \(f\circ \dercatmorphpost(g) = \dercatmorphpre(f)\circ g\) (consistency) with composition \(\dercatmorph\circ\dercatmorphb=(\dercatmorphpre[\dercatmorph]\circ\dercatmorphpre[\dercatmorphb],\dercatmorphpost[\dercatmorphb]\circ\dercatmorphpost[\dercatmorph])\).
\end{definition}

The \yonedacatname is a notion of reflexive closure, which reflects the identity-free structure completely.
Intuitively, morphisms in \(\derivedcat\) describe a pre-composition operation \(\dercatmorphpre\) and a post-composition operation \(\dercatmorphpost\).
In particular, any \(\semicat\)-morphism \(\catmorph:\catobj\to\catobjb\) gives rise to a derived \(\derivedcat\)-morphism \(\derivedmor[\catmorph]:\catobj\to\catobjb\) defined as precomposition \(\dercatmorphpre[{{\derivedmor[\catmorph]}}]=\derivedprecomp{\catmorph}\) and postcomposition \(\dercatmorphpost[{{\derivedmor[\catmorph]}}]=\derivedpostcomp{\catmorph}\).
Note that \(\catmorph\mapsto\derivedmor[\catmorph]\) gives a semifunctor from \(\semicat\) to \(\derivedcat[\semicat]\).
For ordinary categories, this (Yoneda) embedding is a full and faithful functor and ensures that \(\semicat\) and \(\derivedcat\) are isomorphic.
(The consistency assumption ensures that the two natural transformations arise from the same morphism.)
However, in semicategories the \yonedacatname is a new perspective that adds identities on a meta-level and leaves the original category undisturbed.

In the case of a \semiHeytingalgebra, the \yonedacatname is exactly given by the \yonedaorder, i.e., the morphisms of the \yonedacatname are \(\saa\shimply\sab\) if and only if \(\saa\shyol\sab\).
In both cases the idea is to look at at the relative ordering/Hom-sets instead of the actual ordering/morphisms themselves in a reflexive order/category.
In the following, all morphisms are assumed to be in the original semicategory \(\semicat\) unless explicitly stated  otherwise.

The unifying categorical framework for understanding logical connectives is based on adjunctions (or logical harmony).
For example, conjunction arises as the right adjoint to the diagonal functor \(\Delta:\semicat\to\semicat\times\semicat\).
In semicategories, semiadjunctions can be defined naturally (and in keeping with the Yoneda perspective) in terms of Hom-set bijections.
A \emph{(semi)functor} \(\semifunc: \semicat\to\semicatb\) between semicategories maps the objects and morphisms of \(\semicat\) to the objects and morphisms of \(\semicatb\) and respects composition (but not necessarily identity morphisms), i.e., \(\semifunc(f \compose g) = \semifunc(f) \compose \semifunc(g)\).

\begin{definition}
    A weak \emph{semiadjunction} between two semifunctors \(\semifunc:\semicat\to\semicatb\) and \(\semifuncb:\semicatb\to\semicat\)  of semicategories is a family of bijections
    \(\adjbi : \Homs{\semicatb}{\semifunc\semiobj}{\semiobjb} \to \Homs{\semicat}{\semiobj}{\semifuncb\semiobjb}\)
    natural in \(\semiobj,\semiobjb\).\footnote{Naturality means that
        \(\adjbi[\semiobj',\semiobjb](h\circ \semifunc f) = \adjbi[\semiobj,\semiobjb](h)\circ f\) and
        \(\adjbi[\semiobj,\semiobjb']( g\circ h) = \semifuncb g\circ \adjbi[\semiobj,\semiobjb](h)\)
        for all \(h\in\Homs{\semicatb}{\semifunc\semiobj}{\semiobjb}\), \(f\in\Homs{\semicatb}{\semiobj'}{\semiobj}\), \(g\in\Homs{\semicat}{\semiobjb}{\semiobjb'}\). Note that since \(\semicat,\semicatb\) are only semicategories, naturality in both variables separately is potentially stronger than naturality in the product.}
\end{definition}

This notion is slightly different from the previously considered notion of \emph{semiadjunction} between semifunctors on \emph{categories} \cite{DBLP:journals/tcs/Hayashi85}.

Weak semiadjunctions are insufficient to fully characterize logical connectives.
The reason is that semiadjunctions do not guarantee the existence of \emph{internal} units and counits, which are derived from identity morphisms.
Demanding the existence of units and counits in semicategories is problematic, as they impose additional identity-like structure on the semicategory.
This is undesirable, since a semiadjunction should only capture a \emph{connection} between structure-preserving operations and not consist of structure itself.
The literal existence of units and counits would reduce everything to ordinary categories.
For example from the unit \(d:\semiobj\to\semiobj\times\semiobj\) and the counit \(\pi_i:\semiobj_1\times\semiobj_2\to\semiobj_i\) of the typical product adjunction \(\strongadj{\Delta}{\times}\), one can recover all the identities as \(\id_\semiobj = \pi_1\circ d\) by the triangle identities.

Nonetheless, the unit and counit play an indispensable role.
For example, the counit \(\pi_i:\semiobj_1\times\semiobj_2\to\semiobj_i\) of the product adjunction is the projection, which is critical for the elimination of conjunction (as in \irref{andL}).
This indicates that unlike in ordinary categories, where the product can be defined as the right adjoint of the diagonal functor~\(\Delta\), if it exists, the notion of semiadjunction as a Hom-set bijection by itself is not sufficient to fully capture the required notion of product (conjunction).
This is resolved by the \yonedacatname.

The difficulty is that simply requiring a standard adjunction on \(\derivedcat\) is insufficient for two reasons.
First, this is not even definable, since functors do not naturally lift from \(\semicat\) to \(\derivedcat[\semicat]\).
Second, as \Cref{ex:strangethree} shows, the adjunction on \(\derivedcat\) is actually too strong a notion.
The solution is to characterize unit and counit in \(\derivedcat\), without inadvertently forcing their existence as morphisms in the original semicategory.

\begin{definition}
    A \emph{semiadjunction} \(\strongadj{\semifunc}{\semifuncb}\) between semifunctors consists of transformations
    \begin{align*}
        \eta_{\semiobj,\semiobjb} :\Homs{\semicat}{\semifuncb\semifunc \semiobj}{\semiobjb} \to \Homs{\semicat}{\semiobj}{\semiobjb} \qquad\qquad
        \varepsilon_{\semiobj,\semiobjb} :\Homs{\semicatb}{\semiobj}{\semifunc\semifuncb\semiobjb} \to \Homs{\semicat}{\semiobj}{\semiobjb}
    \end{align*}
    natural in \(\semiobj\) and \(\semiobjb\) such that the following \emph{triangle identities} holds
    \[
        \varepsilon_{\semifunc \semiobj,\semiobjb}(\semifunc(\eta_{\semiobj,\semifuncb\semiobjb}(\semifuncb g)))= g \text{ for } g:\semifunc \semiobj\to\semiobjb
        \qquad\qquad
        \eta_{\semiobj,\semifuncb\semiobjb}(\semifuncb(\varepsilon_{\semifunc\semiobj,\semiobjb}(\semifunc f)))= f \text{ for } f: \semiobj\to\semifuncb\semiobjb.
    \]
    The transformation \(\eta\) is the \emph{unit} and \(\varepsilon\) is the \emph{counit}.
\end{definition}

The unit \(\eta\) can be viewed as a family of \(\derivedcat[\semicat]\)-morphisms \(\semiobj \to \semifuncb\semifunc\semiobj\), where \(\dercatmorphpre[\eta_\semiobj] = \eta_{\semiobj, -}\) and \(\dercatmorphpost[\eta_\semiobj](f)=\eta_{\semiobj, -}(\semifuncb\semifunc f)\), which is natural in \(\semiobj\).
Dually, the counit \(\varepsilon\) is a family of \(\derivedcat[\semicatb]\)-morphisms \(\semifunc\semifuncb\semiobj \to \semiobj\) where \(\dercatmorphpost[\varepsilon_\semiobj] = \varepsilon_{\semiobj,-}\) and \(\dercatmorphpre[\varepsilon_\semiobj](g) = \varepsilon_\semiobj(\semifunc\semifuncb g)\), which is natural in \(\semiobj\).

\begin{theorem}
    Any semiadjunction \(\eta,\varepsilon\) is a weak semiadjunction with
    \[
        \theta_{a,b}(g)=\eta_\semiobj(\semifuncb g) \text{ for }g:\semifunc\semiobj\to\semiobjb
        \qquad\text{ with inverse }\qquad
        \theta_{a,b}^{-1}(f)=\varepsilon_\semiobj(\semifunc f) \text{ for }:\semiobj\to\semifuncb\semiobjb.
    \]
\end{theorem}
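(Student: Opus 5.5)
The plan is to verify directly that the proposed maps are well typed, mutually inverse, and natural, using only the triangle identities, the naturality of \(\eta\) and \(\varepsilon\), and the fact that \(\semifunc,\semifuncb\) preserve composition. No identity morphisms are needed anywhere.

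First I check the types. For \(g:\semifunc\semiobj\to\semiobjb\) we have \(\semifuncb g:\semifuncb\semifunc\semiobj\to\semifuncb\semiobjb\). Hence \(\eta_{\semiobj,\semifuncb\semiobjb}(\semifuncb g)\in\Homs{\semicat}{\semiobj}{\semifuncb\semiobjb}\), so \(\theta_{\semiobj,\semiobjb}\) lands in the right Hom-set. Dually, for \(f:\semiobj\to\semifuncb\semiobjb\) we have \(\semifunc f:\semifunc\semiobj\to\semifunc\semifuncb\semiobjb\). Hence \(\varepsilon_{\semifunc\semiobj,\semiobjb}(\semifunc f)\in\Homs{\semicatb}{\semifunc\semiobj}{\semiobjb}\).

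Bijectivity is then just a rewriting of the two triangle identities. The composite \(\theta^{-1}(\theta(g))=\varepsilon_{\semifunc\semiobj,\semiobjb}(\semifunc(\eta_{\semiobj,\semifuncb\semiobjb}(\semifuncb g)))\) equals \(g\) by the first identity. The composite \(\theta(\theta^{-1}(f))=\eta_{\semiobj,\semifuncb\semiobjb}(\semifuncb(\varepsilon_{\semifunc\semiobj,\semiobjb}(\semifunc f)))\) equals \(f\) by the second. So each \(\theta_{\semiobj,\semiobjb}\) is a bijection with the stated inverse.

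For naturality I check the two equations from the footnote to the definition of weak semiadjunction separately.

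\emph{Naturality in the first variable.} Take \(f:\semiobj'\to\semiobj\) and \(h:\semifunc\semiobj\to\semiobjb\). Functoriality of the semifunctors gives
\[\theta_{\semiobj',\semiobjb}(h\circ\semifunc f)=\eta_{\semiobj',\semifuncb\semiobjb}(\semifuncb h\circ\semifuncb\semifunc f).\]
Naturality of \(\eta_{-,c}\) in its first argument (contravariant in \(\semiobj\), acting on \(\Homs{\semicat}{\semifuncb\semifunc -}{c}\) by precomposition with \(\semifuncb\semifunc f\)) then turns this into \(\eta_{\semiobj,\semifuncb\semiobjb}(\semifuncb h)\circ f=\theta_{\semiobj,\semiobjb}(h)\circ f\).

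\emph{Naturality in the second variable.} Take \(g:\semiobjb\to\semiobjb'\). Then
\[\theta_{\semiobj,\semiobjb'}(g\circ h)=\eta_{\semiobj,\semifuncb\semiobjb'}(\semifuncb g\circ\semifuncb h).\]
Naturality of \(\eta_{\semiobj,-}\) in its second argument, applied with the morphism \(\semifuncb g:\semifuncb\semiobjb\to\semifuncb\semiobjb'\), gives \(\semifuncb g\circ\eta_{\semiobj,\semifuncb\semiobjb}(\semifuncb h)=\semifuncb g\circ\theta_{\semiobj,\semiobjb}(h)\).

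Naturality of \(\theta^{-1}\) then follows automatically, since it is the inverse of a natural bijection. It can also be checked symmetrically via \(\varepsilon\).

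The only delicate point is to state precisely what naturality of \(\eta\) and \(\varepsilon\) means in the absence of identities. It means that \(\eta\) commutes with pre- and postcomposition by arbitrary \(\semicat\)-morphisms, with precomposition in the first argument mediated by \(\semifuncb\semifunc\). Once this is fixed, I expect every step to be a one-line rewrite.
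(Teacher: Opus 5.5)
Your proof is correct: bijectivity is exactly the two triangle identities, and both naturality equations for $\theta$ follow from functoriality of $G$ together with naturality of $\eta$ in each variable separately. The paper states this theorem without proof, and your direct verification is the intended argument; you also correctly read the inverse as $\varepsilon_{Fa,b}(Ff)$, where the paper's statement writes the looser $\varepsilon_a$.
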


The converse, that any weak semiadjunction is a semiadjunction, is not necessarily true in general semicategories, since functors do not necessarily lift from \(\semicat\) to \(\derivedcat\).
If \(\semicat\), \(\semicatb\) are ordinary categories a weak semiadjunction is an ordinary adjunction and the lifting of the ordinary unit \(\eta_\semiobj\) and counits \(\varepsilon_\semiobj\) to \(\derivedcat[\semicat],\derivedcat[\semicatb]\) make it a semiadjunction.

The concept of semiadjunctions makes it possible to categorically define the structure of a \emph{product} in a semicategory as a right semiadjoint to the diagonal functor \(\Delta\).
\begin{example}\label{ex:prodadj}
    The product semiadjunction \(\strongadj{\Delta}{\times}\) consists of the following unit and counit
    \begin{align*}
        \eta
        =\Covhom{\semicat}{\semiobj\times\semiobj}\naturalto\Covhom{\semicat}{\semiobj}
        \quad\quad
        \varepsilon
        =\Conthom{\semicat\times\semicat}{\semiobj_1\times \semiobj_2}\naturalto\Conthom{\semicat\times\semicat}{(\semiobj_1,\semiobj_2)}.
    \end{align*}
    In the interpretation of the \yonedaorder for a \semiHeytingalgebra{}, these correspond exactly to conditions \ref{it:anddouble} and \ref{it:andproj} of \Cref{def:semiheytingalgebra}.
\end{example}

\Cref{ex:prodadj} gives an identity free definition of the product.
This is in contrast to other definitions, which always involve projection maps that inadvertently reintroduce identities.

Similar to \Cref{ex:prodadj}, \emph{coproducts} in a semicategory are defined as left semiadjoints to the diagonal functor.
The \emph{exponential adjunction} \(\strongadj{-\times \semiobj}{(-)^\semiobj}\) defines implication, and from a semiadjunction perspective, it is imperative that this is viewed as a \emph{parameterized semiadjunction}, rather than a family of semiadjunctions for every object \(\semiobj\).
Most importantly, the unit and counit are not indexed by objects \(\semiobj\) of the semicategory, but by \emph{morphisms} of the parameter.
For example, the unit of the adjunction is a family of \(\derivedcat\)-morphisms
\(\eta_{f,\semiobj}:\semiobj\to(\semiobjb\times\semiobjc)^\semiobjc\)
for every \(\semicat\) morphism \(f:\semiobj\to\semiobjb\).
In contrast, in ordinary categories indexing by identity morphisms is sufficient.
This explains the additional side condition in \ref{it:implw} and \ref{it:implev} of \Cref{def:semiheytingalgebra}.
Without this side condition, \semiHeytingalgebra{s} would not be a complete model for \subrefllogic.

\subsection{Semicategorical Logic}

A \emph{cartesian closed semicategory} \(\semicat\) is a category with functors \(1: 1 \to \semicat\), \(\times : \semicat\times\semicat\to\semicat\) and \((-)^\semiobj :\semicat\to\semicat\) with terminal objects, products, coproducts and exponentials (i.e., semiadjunctions \(\strongadj{!}{1}\), \(\strongadj{\Delta}{\times}\), \(\strongadj{-\times \semiobj}{(-)^\semiobj}\) where \(!\) is the functor from any category to the terminal category).
A \emph{bicartesian closed semicategory} is a cartesian closed semicategory with additional semiadjoints \(\strongadj{0}{{!}}\) and \(\strongadj{\sqcup}{\Delta}\).
Just as a pre-order can be viewed as a thin category, a transitive order \(\shleq\) can be viewed as a thin semicategory where there is an arrow \(\saa\to\sab\) iff \(\saa\shleq\sab\).
From this perspective a \semiHeytingalgebra{} is a bicartesian closed semicategory.
Surprisingly, the converse is not true, as the product in (thin) a bicartesian closed semicategory does not need to be (associative) monoidal.
However, clearly:

\begin{theorem}
    A thin bicartesian closed semicategory \(\semicat\) is a \semiHeytingalgebra{} iff \(\derivedcat[\semicat]\) is monoidal with respect to the product.
\end{theorem}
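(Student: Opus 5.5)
The plan is to unfold the definitions on both sides and match them clause by clause against \Cref{def:semiheytingalgebra}. First I fix the dictionary for a thin semicategory \(\semicat\) whose arrows are given by a transitive relation \(\shleq\). The \yonedacatname \(\derivedcat[\semicat]\) is then the preorder \(\shyol\), and isomorphism in \(\derivedcat[\semicat]\) is \(\shiso\). A thin semicategory is automatically transitive, which gives \ref{it:transitive}. Each semiadjunction then becomes an order-theoretic statement:
\begin{itemize}
    \item \(\strongadj{!}{1}\) and \(\strongadj{0}{{!}}\) give the terminal and initial conditions \ref{it:zeroone}.
    \item As in \Cref{ex:prodadj}, \(\strongadj{\Delta}{\times}\) gives \ref{it:adjand} as the Hom-bijection, with unit and counit giving \ref{it:anddouble} and \ref{it:andproj}.
    \item Dually, \(\strongadj{\sqcup}{\Delta}\) gives \ref{it:adjor}, \ref{it:orinj} and \ref{it:ordouble}.
    \item The parameterized exponential semiadjunction gives \ref{it:adjimply}. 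Its unit and counit are indexed by morphisms \(\saa\shleq\sab\) of the parameter, which yields \ref{it:implw} and \ref{it:implev}.
\end{itemize}
In the thin case the triangle identities and naturality are automatic, so I only need to check that each transformation exists. This amounts to the inclusions of upper and lower sets defining \(\shyol\).

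For the backward direction, assume \(\derivedcat[\semicat]\) is monoidal with respect to the product. In a thin category the coherence diagrams commute trivially, so monoidality reduces to an associator isomorphism \((\saa\shand\sab)\shand\sac\shiso\saa\shand(\sab\shand\sac)\). This is exactly \ref{it:associativeand}. Combined with the dictionary above, every clause of \Cref{def:semiheytingalgebra} holds.

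For the forward direction, start from a \semiHeytingalgebra. The dictionary shows it is a thin bicartesian closed semicategory, and \ref{it:associativeand} supplies the associator. For monoidality one also needs \(\shand\) to be functorial on \(\derivedcat[\semicat]\), that is, monotone in \(\shyol\); this is \itrefof{lem:andproperties}{it:yonedaandmonotonicity}. A unit object is also needed: I take \(\saone\), whose unitors \(\saone\shand\saa\shiso\saa\shiso\saa\shand\saone\) are \itrefof{lem:andproperties}{it:andoneneutral}.

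The expected obstacle is the exponential clause. One has to show carefully that the morphism-indexed unit and counit of \(\strongadj{-\times \semiobj}{(-)^\semiobj}\) translate exactly into \ref{it:implw} and \ref{it:implev} with the side condition \(\saa\shleq\sab\), and neither more nor less. The same care is needed to confirm that the product being a functor \(\semicat\times\semicat\to\semicat\) adds nothing beyond \Cref{lem:monotonicities}. I would first compute \(\eta_{f,\semiobj}\) for \(f:\saa\to\sab\) explicitly as the inclusion of lower and upper sets. I would then check that this lines up with \ref{it:implw}, and dually for the counit.
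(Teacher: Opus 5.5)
Your proposal is correct and matches the argument the paper intends. The paper gives no separate proof (it says ``clearly''); it relies on the preceding discussion that the semiadjunctions \(\strongadj{!}{1}\), \(\strongadj{0}{{!}}\), \(\strongadj{\Delta}{\times}\), \(\strongadj{\sqcup}{\Delta}\) and the parameterized exponential semiadjunction, read in the \yonedaorder, are exactly H1 and H3--H6, so that associativity (H2) is the one clause missing from bicartesian closedness and is precisely what monoidality of \(\derivedcat[\semicat]\) adds, while your appeals to \Cref{lem:monotonicities} and \Cref{lem:andproperties} for functoriality and the unitors fill in details the paper leaves implicit.
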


The algebraic semantics of \Cref{sec:semisemantics} can be generalized to arbitrary monoidal bicartesian closed semicategories.
The following completeness result is inherited directly from the case for \semiHeytingalgebra{s} and the soundness direction is proved just like for \semiHeytingalgebra{s}.

\begin{theorem}
    \Subrefllogic is sound and complete with respect to bicartesian closed semicategories with monoidal product.
\end{theorem}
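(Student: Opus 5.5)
Soundness and completeness are treated separately. For soundness, given a monoidal bicartesian closed semicategory \(\semicat\) and an interpretation of atoms as objects, formulas are interpreted by induction: \(\ftrue\mapsto 1\), \(\ffalse\mapsto 0\), \(\land\mapsto\times\), \(\lor\mapsto\sqcup\) and \(\limply\mapsto\) exponential. A sequent \(\lsequent{\fmlset}{\fml}\) is true if there is a morphism from the (monoidal) product of \(\fmlset\) to the interpretation of \(\fml\). Monoidality of the product in \(\derivedcat[\semicat]\) makes this product well defined up to \(\derivedcat\)-isomorphism, and that isomorphism transports the existence of \(\semicat\)-morphisms, which is all that is needed. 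Truth of the \(\relfmlset\) sequents is then an assumption about the model.

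The rules are checked in the order used in \Cref{prop:semiheytingsound}, with each semiadjunction playing the role of its Heyting-semialgebra clause. The Hom-set bijection of \(\strongadj{\Delta}{\times}\) (weak semiadjunction) gives \irref{andR}, and its counit, the \(\derivedcat\)-projection, gives \irref{andL}. The unit of \(\strongadj{\Delta}{\times}\) gives contraction, which is what the assumption rules \irref{assumptionL} and \irref{assumptionR} need. Dually, \(\strongadj{\sqcup}{\Delta}\) gives \irref{orL} via the bijection, after first currying the context with the exponential bijection exactly as in the algebraic proof, and gives \irref{orR} via its unit (the injections). The exponential bijection gives \irref{implyR}. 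For \irref{implyL}, I would use the parameterized counit \(\varepsilon_f\), indexed by the morphism \(f\) from the context to the interpretation of \(\fml\) supplied by the left premise. This is precisely where indexing by morphisms rather than objects is needed, mirroring the side condition of \ref{it:implev}. Since the thin case is exactly \Cref{prop:semiheytingsound}, the general argument can mostly be phrased by replacing "\(\shleq\) holds" with "a morphism exists" and \(\shyol\) with \(\derivedcat\)-morphisms.

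Completeness is inherited. By the preceding theorem, every \semiHeytingalgebra{} is a thin bicartesian closed semicategory whose \yonedacatname is monoidal. In particular, the syntactic algebra \(\allfmlssha\) of \Cref{thm:structureissemi} is such a semicategory, and \(\sasem[\allfmlssha]{\heyint}{\fml}=\fml\) holds under the canonical valuation. If \(\notsubreflprov{\relfmlset}{\fml}\), then \Cref{thm:semiheytingcountermodel} gives a model of \(\relfmlset\) in which there is no arrow \(\saone\to\fml\), so \(\fml\) is not valid.

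The main obstacle is soundness of \irref{implyL} in a non-thin semicategory. There one must produce an actual morphism, not merely show that one exists, by composing \(\derivedcat\)-morphisms (the counit \(\varepsilon_f\), the unit of the product used for contraction) with genuine \(\semicat\)-morphisms. I would check that a \(\derivedcat\)-morphism composed with a \(\semicat\)-morphism yields a \(\semicat\)-morphism via \(\dercatmorphpre\) and \(\dercatmorphpost\). Consistency of the pair ensures this composite is independent of how it is computed, and that is the lemma I would prove first.
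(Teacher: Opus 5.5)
Your proposal is correct and follows the paper's own route. Soundness is the rule-by-rule check of \Cref{prop:semiheytingsound}, with ``a morphism exists'' in place of \(\shleq\) and \(\derivedcat[\semicat]\)-morphisms in place of \(\shyol\). Completeness is inherited by viewing the syntactic Heyting semialgebra \(\allfmlssha\) as a thin monoidal bicartesian closed semicategory and applying \Cref{thm:semiheytingcountermodel}. The paper states both directions only in a sentence; your lemma on composing \(\derivedcat[\semicat]\)-morphisms with \(\semicat\)-morphisms spells out the step the paper leaves implicit in the non-thin soundness case.
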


As in categorical logic, semicategories provide a setting to investigate proofs of \subrefllogic structurally.
Let \(\subrefllogiccat\) be the category whose objects are propositions and whose morphisms \(\fml\to\fmlb\) are derivations witnessing \(\subreflprov{\relfmlset}{\lsequent{\fml}{\fmlb}}\) up to cut-elimination-step equivalence.
This category is a bicartesian closed semicategory with monoidal product.

\section{Related Work}

Robust classical tolerant/strict and strict/tolerant interpretations have been defined semantically \cite{DBLP:journals/jphil/CobrerosERR12} to avoid paradoxes via nonreflexivity \cite{French2016StructuralRA} and nontransitivity \cite{DBLP:journals/jphil/BarrioRT15}.
Unlike these previously considered perspectives on tolerant/strict logics, which have focused on meta-inferences \cite{DBLP:journals/jphil/BarrioPS20}, this work concerns object-level inferences.
Failure of reflexivity has been shown to be dual (in metainferences) to failure of cut in a precise sense \cite{DBLP:journals/jancl/RePST20}.
Previous proof-theoretic work \cite{DBLP:journals/sLogica/NicolaiR23} studied purely classical logical theories of non-reflexive consequence relations \(C(\ulcorner \fml\urcorner,\ulcorner \fmlb\urcorner)\) to describe semantic truth predicates and proves cut-elimination.
The restriction of initial sequents to subsets of identities (uratoms) in \LJ with definitions was shown to admit cut-elimination \cite{Schroeder-Heister2016}.
Unlike these studies, this paper investigates the structural importance of identity in the intuitionistic and classical case and provides algebraic, denotational and categorical semantics that are foundational for all logics without identity.

Categorically, alternative adjunctions between semifunctors on categories have been used for nonextensional \(\lambda\)-calculus \cite{DBLP:journals/tcs/Hayashi85} in categories with identities.
Another \emph{reflexive} generalization of Heyting algebras to semi-Heyting algebras and their corresponding semi-intuitionistic logics have been considered \cite{DBLP:journals/sLogica/Cornejo11}.

\Subrefllogic belongs to the family of substructural logics, which also includes linear logic~\cite{DBLP:journals/tcs/Girard87} as a contraction-free logic and relevance logics~\cite{DBLP:journals/jsyml/Belnap60,DBLP:journals/jphil/Dosen92a}.
Instances of sequent calculi without identity assumptions have been applied in logic programming \cite{DBLP:journals/logcom/Stark91}.
Notions of robustness have been used for real arithmetic~\cite{DBLP:conf/lics/GaoAC12,DBLP:conf/ijcar/AbouElWafaP26}.

\section{Conclusion}\label{sec:conclusion}

\Subrefllogic has been introduced as a conceptually simple generalization of propositional logic, which, despite the lack of the identity principle, has an elegant syntactic and semantic theory.
The paper demonstrates that rather than being foundational and indispensable, \emph{identity is an assumption like any other that may be made, but need not be imposed.}
The subtle syntactic and semantic theory for \subrefllogic was developed as parallel to ordinary logic, showing that by dropping identity nothing more is lost.

From the proof-theoretic side, a general cut-elimination theorem for \subrefllogic was proved.
The investigation of the semantics explains exactly how the usual semantic characterizations of logical primitives subtly rely on the identity principle and shows how this assumption can be removed.

The denotational set-theoretic semantics reveal the close connection between \subrefllogic, partiality, three-valued logics and robust semantics, which interpret implication as robust (tolerant/strict) implication and have important applications \cite{DBLP:conf/ijcar/AbouElWafaP26}.
\Subrefllogic can reason about robust implication, and conversely \subrefllogic is the logic of robust implication.

The categorical analysis justifies the definition of \semiHeyting{} and \semiBooleanalgebra{}s.
From a categorical perspective, it shows that dropping identity makes the categorical story more subtle, mostly because of the failure of the Yoneda lemma.
However, by taking the right categorical perspective, the elegant theory can be recovered.

\textbf{Future Work.}
\Subrefllogic opens a very large area of research.
The extension to first-order logic is of foundational interest, and the systematic study of applications such as typing of partial functions, robust real-arithmetic and computable analysis through a Curry-Howard correspondence in \subrefllogic is promising.
Moreover, it would be of interest to extend the representation theorem to a geometric duality for \semiBooleanalgebra as an extension of the duality between distributive lattices and bitopological spaces \cite{DBLP:journals/mscs/BezhanishviliBGK10}.

\bibliography{subreflexive}

\appendix

\section{Rules for Classical \SubreflLogic}\label{sec:classicalrules}

As mentioned above, \classical \subrefllogic is the extension of intuitionistic \subrefllogic with sequents that admit multiple succedents that are read disjunctively together with the \LK versions of the \irref{orR} and \irref{implyL} rules.
Instead of making this formal, \classical \subrefllogic is conveniently defined as an extension of \subrefllogic by two rules with single succedents and this formulation will be shown to be equivalent to the multiple succedent version.
\begin{center}
    \begin{calculuscollection}{\renewcommand{\linferPremissSeparation}{\hspace{0.3cm}}%
            \hspace{-1.1em}
            \begin{calculus}
                \cinferenceRule[orRclassical|$\lor$R${}_c$]{classical or right proof rule}
                {
                    \linferenceRule[sequent]
                    {\lsequent{\fmlset,\lnot{\fml_{j}}} {\fml_i}}
                    {\lsequent{\fmlset} {\fml_1\lor\fml_2}}\quad
                }{\color{darkishgray}$i\neq j$}
            \end{calculus}}
        \qquad
        \begin{calculus}
            \cinferenceRule[implyLclassical|$\limply$L${}_c$]{classical imply left proof rule}
            {
                \linferenceRule[sequent]
                {\lsequent{\lnot\fmlc,\fml\limply\fmlb,\fmlset} {\fml}
                    &
                    \lsequent{\fmlb,\fmlset} {\fmlc}
                }
                {\lsequent{\fml\limply\fmlb,\fmlset} {\fmlc}}
            }{}
        \end{calculus}
    \end{calculuscollection}
\end{center}

\noindent
The classical intuitionistic subreflexive calculus is obtained as the intuitionistic subreflexive calculus with the addition of rules \irref{orRclassical} and \irref{implyLclassical}.

Note that by the two derivation
\begin{sequentdeduction}[array]
    \linfer[orRminus]
    {
        \linfer[orRclassical]
        {
            \linfer[implyL]
            {
                \lsequent{\fmlset,\fml}{\ffalse}
                !
                \linfer{\lclose}
                {\lsequent{\fmlset,\ffalse}{\fml}}
            }
            {\lsequent{\fmlset,\lnot\fml} {\fml}}
        }
        {\lsequent{\fmlset} {\fml\lor\fml}}
    }
    {\lsequent{\fmlset} {\fml}}
\end{sequentdeduction}
any sequent \(\lsequent{\fmlset}{\fml_1\ldots,\fml_n}\) can be viewed as a single succedent sequent \(\lsequent{\fmlset,\lnot\fml_1,\ldots,\lnot\fml_n}{\ffalse}\) equivalently.
Thus rules \irref{orRclassical} and \irref{implyLclassical} capture \classical \subrefllogic up to notation.
These rules make it easier to work with \classical \subrefllogic as a strict extension of intuitionistic \subrefllogic only by \emph{rules}.

\section{Derived and Inverse Rules}\label{sec:derivedinverse}

In this section, some useful derived and admissible rules are introduced, which are important for completeness and cut-elimination.

\subsection{Structural Rules}

The structural rules \irref{weak} and \irref{contraction} are admissible in intuitionistic and classical \subrefllogic.
\begin{center}
    \begin{calculuscollection}
        \begin{calculus}
            \cinferenceRule[weak|W]{left weakening proof rule}
            {
                \linferenceRule[sequent]
                {
                    \lsequent{\fmlset} {\fmlc}
                }
                {\lsequent{\fml,\fmlset} {\fmlc}}
            }{}
        \end{calculus}
        \qquad
        \begin{calculus}
            \cinferenceRule[contraction|C]{contraction proof rule}
            {
                \linferenceRule[sequent]
                {
                    \lsequent{\fml,\fml,\fmlset} {\fmlc}
                }
                {\lsequent{\fml,\fmlset} {\fmlc}}
            }{}
        \end{calculus}
    \end{calculuscollection}
\end{center}
\begin{proof}
    Admissibility of \irref{weak}, \irref{contraction} is shown by a standard induction over the length of the derivation.
    Note that for admissibility of \irref{weak} it is used that \irref{assumption}, \irref{assumptionL} and \irref{assumptionR} allows choosing subsets and for admissibility of \irref{contraction} it is used that these subsets are repetition free.
\end{proof}

\subsection{Rules for Negation}

It is interesting to consider the derived rules for negation, as they behave differently in \subrefllogic.
The following rules are easily \emph{derivable}:

\begin{calculuscollection}
    \begin{calculus}
        \cinferenceRule[notR|$\lnot$R]{negation introduction}
        {
            \linferenceRule[sequent]
            {
                \lsequent{\fmlset,\fml} {\ffalse}
            }
            {\lsequent{\fmlset} {\lnot\fml}}
        }{}
    \end{calculus}
    \qquad
    \begin{calculus}
        \cinferenceRule[exp|exp]{explosion}
        {
            \linferenceRule[sequent]
            {
                \lsequent{\fmlset,\fml} {\fml}
            }
            {\lsequent{\fmlset,\fml,\lnot\fml} {\fmlc}}
        }{}
    \end{calculus}
    \qquad
    \begin{calculus}
        \cinferenceRule[DNI|DNI]{double negation introduction}
        {
            \linferenceRule[sequent]
            {
                \lsequent{\fmlset,\fml} {\fml}
            }
            {\lsequent{\fmlset,\fml} {\lnot\lnot\fml}}
        }{}
    \end{calculus}
\end{calculuscollection}

\noindent
Note that the principle of explosion and double negation elimination are no longer derivable.
These rules require \(\fml\) to imply itself.
\begin{proof}
    Rule \irref{notR} is an instance of \irref{implyR}.
    The derivations

    \begin{minipage}[t]{\minpwidth}{\renewcommand{\linferPremissSeparation}{\hspace{0.5cm}}
            \begin{sequentdeduction}[array]
                \linfer[implyL]
                {
                    \lsequent{\fmlset,\fml}{\fml}
                    !
                    \linfer[]
                    {\lclose}
                    {\lsequent{\ffalse} {\fmlc}}
                }
                {\lsequent{\fmlset,\fml,\lnot\fml} {\fmlc}}
            \end{sequentdeduction}}
    \end{minipage}
    \hfill
    \begin{minipage}[t]{\minpwidth}
        \begin{sequentdeduction}[array]
            \linfer[implyR]
            {
                \linfer[exp]
                {\lsequent{\fmlset,\fml}{\fml}}
                {\lsequent{\fmlset,\fml,\lnot\fml} {\ffalse}}
            }
            {\lsequent{\fmlset,\fml} {\lnot\lnot\fml}}
        \end{sequentdeduction}
    \end{minipage}
    \qquad\qquad

    \noindent
    show derivability of \irref{exp} and \irref{DNI}.
\end{proof}

\subsection{Inverse Rules}

The \emph{inverse rules} of \subrefllogic for connectives are:

\begin{calculuscollection}
    \begin{calculus}
        \cinferenceRule[andLminus|$\land$L${}^-$]{and left minus proof rule}
        {
            \linferenceRule[sequent]
            {
                \lsequent{\fml_1\land\fml_2,\fmlset} {\fmlc}
            }
            {\lsequent{\fml_i,\fml_1\land\fml_2,\fmlset} {\fmlc}}
        }{}
        \cinferenceRule[andRminus|$\land$R${}^-$]{and right minus proof rule}
        {
            \linferenceRule[sequent]
            {
                \lsequent{\fmlset} {\fml_1\land\fml_2}
            }
            {\lsequent{\fmlset} {\fml_i}}
        }{}
    \end{calculus}
    \qquad
    \begin{calculus}
        \cinferenceRule[orLminus|$\lor$L${}^-$]{or left minus proof rule}
        {
            \linferenceRule[sequent]
            {
                \lsequent{\fml_1\lor\fml_2,\fmlset} {\fmlc}
            }
            {\lsequent{\fml_i,\fmlset} {\fmlc}}
        }{}
        \cinferenceRule[orRminus|$\lor$R${}^-$]{or right minus proof rule}
        {
            \linferenceRule[sequent]
            {
                \lsequent{\fmlset} {\fml\lor\fml}
            }
            {\lsequent{\fmlset} {\fml}}
        }{}
    \end{calculus}
    \qquad
    \begin{calculus}
        \cinferenceRule[implyRminus|$\limply$R${}^-$]{imply right minus a proof rule}
        {
            \linferenceRule[sequent]
            {
                \lsequent{\fmlset} {\fml\limply\fmlb}
            }
            {
                \lsequent{\fmlset,\fml} {\fmlb}
            }
        }{}
        \cinferenceRule[implyLminus|$\limply$L${}^-$]{imply left minus proof rule}
        {
            \linferenceRule[sequent]
            {
                \lsequent{\fml}{\fmld}
                &
                \lsequent{\fml\limply(\fmld\land\fmlb),\fmlset} {\fmlc}
            }
            {\lsequent{\fmlb,\fmlset} {\fmlc}}
        }{}
    \end{calculus}
\end{calculuscollection}

\begin{lemmaE}[][normal]\label{lem:inverseadm}
    The inverse rules \irref{andLminus}, \irref{andRminus}, \irref{orLminus}, \irref{orRminus}, \irref{implyRminus}, \irref{implyLminus} are admissible in proofs \(\subreflprov{\relfmlset}{\fml}\) if \(\relfmlset\) is \saturated.
\end{lemmaE}

\begin{proofE}
    The rule \irref{andLminus} is an instance of the admissible \irref{weak} rule.
    The other rules are proved by a straightforward induction on the length of the derivation and distinguishing based on the last applied rule.
    The fact that \(\relfmlset\) is saturated is used for \irref{assumption}, \irref{assumptionL} and \irref{assumptionR}.
\end{proofE}

Inverse rules for classical \subrefllogic:
\begin{center}
    \begin{calculuscollection}{\renewcommand{\linferPremissSeparation}{\hspace{0.3cm}}%
            \hspace{-1.1em}
            \begin{calculus}
                \cinferenceRule[orRclassicalminus|$\lor$R${}_c^-$]{classical or right inverse proof rule}
                {
                    \linferenceRule[sequent]
                    {\lsequent{\fmlset} {\fml_1\lor\fml_2}}
                    {\lsequent{\fmlset,\lnot{\fml_{j}}} {\fml_i}}
                    \quad
                }{\color{darkishgray}$i\neq j$}
            \end{calculus}}
        \qquad
        \begin{calculus}
            \cinferenceRule[implyLclassicalminusone|$\limply$L${}_c^{-1}$]{classical imply left inverse proof rule}
            {
            \linferenceRule[sequent]
            {\lsequent{\fml\limply\fmlb,\fmlset} {\fmlc}}
            {
                \lsequent{\fmlb,\fmlset} {\fmlc}
            }
            }{}
        \end{calculus}
        \qquad
        \begin{calculus}
            \cinferenceRule[implyLclassicalminustwo|$\limply$L${}_c^{-2}$]{classical imply left inverseproof rule}
            {
            \linferenceRule[sequent]
            {\lsequent{\fml\limply\fmlb,\fmlset} {\fmlc}}
            {\lsequent{\lnot\fmlc,\fml\limply\fmlb,\fmlset} {\fml}
            }
            }{}
        \end{calculus}
    \end{calculuscollection}
\end{center}

A set \(\relfmlset\) is \classicallysaturated if it is \saturated and closed under the classical inverse proof rules \irref{orRclassicalminus}, \irref{implyLclassicalminusone} and \irref{implyLclassicalminustwo}.

\begin{lemmaE}[][normal]\label{lem:classinverseadm}
    The inverse rules \irref{orRclassicalminus}, \irref{implyLclassicalminusone} and \irref{implyLclassicalminustwo} are admissible in proofs \(\classsubreflprov{\relfmlset}{\fml}\) if \(\relfmlset\) is \classicallysaturated.
\end{lemmaE}

\begin{lemmaE}[][normal]
    The contradiction proof rule \irref{raa} is derivable:
    \begin{calculuscollection}{
            \begin{calculus}
                \cinferenceRule[raa|RAA]{reductio ad absurdum rule}
                {
                    \linferenceRule[sequent]
                    {
                        \lsequent{\fmlset,\lnot\fml} {\ffalse}
                    }
                    {\lsequent{\fmlset} {\fml}}
                }{}
            \end{calculus}}
    \end{calculuscollection}
\end{lemmaE}
\begin{proofE}
    \begin{sequentdeduction}[array]
        \linfer[cut]
        {
            \linfer[implyR]{
                \linfer{\lclose}
                {\lsequent{\fmlset,\ffalse}{\ffalse}}
            }
            {\lsequent{\fmlset}{\lnot\ffalse}}
            !
            \linfer[orRminus]{
                \linfer[orRclassical]{
                    \lsequent{\fmlset,\lnot\fml}{\ffalse}
                }
                {\lsequent{\fmlset}{\fml\lor\ffalse}}
            }
            {\lsequent{\fmlset,\lnot\ffalse}{\fml}}
        }
        {\lsequent{\fmlset}{\fml}}
    \end{sequentdeduction}
\end{proofE}

\section{Proofs}\label{sec:proofs}

\printProofs

\clearpage

\end{document}